\newcommand{\OO}{\mathcal{O}}
\newcommand{\tOO}{\widetilde{\OO}}
\newcommand{\maps}[1]{\;\stackrel{#1}{\longmapsto}\;}
\newcommand{\transduce}[1]{\stackrel{#1}{\rightsquigarrow}}
\newcommand{\DownTransduce}{\scalebox{0.7}{\rotatebox[origin=c]{270}{$\rightsquigarrow$}}}
\begin{document}

\title{Space-Efficient Quantum Error Reduction without log Factors}
\author{Aleksandrs Belovs}
\affiliation{Center for Quantum Computing Science, Faculty of Science and Technology, University of Latvia}
\author{Stacey Jeffery}
\affiliation{QuSoft, CWI \& University of Amsterdam, the Netherlands}

\maketitle

\mycutecommand{\q}{{\mathrm{q}}}
\mycutecommand{\w}{{\mathrm{w}}}

\definecolor{applegreen}{rgb}{0.55, 0.71, 0.0}

\def\xicolor{red}
\def\taucolor{applegreen}
\def\witnesscolor{blue}
\def\nonquerycolor{gray}
\def\querycolor{orange}

\begin{abstract}
    Given an algorithm that outputs the correct answer with bounded error, say $1/3$, it is sometimes desirable to reduce this error to some arbitrarily small $\eps$ -- for example, if one wants to call the algorithm many times as a subroutine. The usual method, for both quantum and randomized algorithms, is a procedure called majority voting, which incurs a multiplicative overhead of $\OO(\log\frac{1}{\eps})$ from calling the algorithm this many times. 

    A recent paper~\cite{belovs:taming} 
    introduced a model of quantum computation called \emph{transducers}, and showed how to reduce the ``error'' of a transducer arbitrarily with only constant overhead, using a construction analogous to majority voting called \emph{purification}. 
    Even error-free transducers map to bounded-error quantum algorithms, so this does not let you reduce algorithmic error for free, but it does allow bounded-error quantum algorithms to be composed without incurring log factors.

In this paper, we present a new highly simplified construction of a purifier, that can be understood as a weighted walk on a line similar to a random walk interpretation of majority voting.
Our purifier has much smaller space and time complexity than the previous one.
Indeed, it only uses one additional counter, and only performs two increment and two decrement operations on each iteration.
It also has quadratically better dependence on the soundness-completeness gap of the algorithm being purified.
We prove that our purifier has optimal query complexity, even down to the constant, which matters when one composes quantum algorithms to super-constant depth.

Purifiers can be seen as a way of turning a ``Monte Carlo'' quantum algorithm into a ``Las Vegas'' quantum algorithm -- a process for which there is no classical analogue -- leading to phenomena like the above-mentioned composition of bounded-error quantum algorithms without log factors. Conceptually, our simplified construction sheds light on this strange quantum phenomenon, and, more practically, could have implications for the time and space complexity of composed quantum algorithms.

\end{abstract}

\section{Introduction}

Error reduction
is one of the most fundamental algorithmic primitives in both the randomized and quantum settings.
It transforms a program that evaluates a function $f$ with bounded error, say $1/3$, into a program that evaluates the same function with much smaller error $\eps$.
In both the quantum and randomized cases, the conventional procedure is based on majority voting.

If we want to reduce the error of a complete quantum algorithm that ends with a measurement outputting a classical string, then we can treat it as a randomized process and apply majority voting just as we would in a randomized algorithm: execute the quantum algorithm together with the final measurement a number of times, and output the majority. %

Somewhat more interesting is error reduction of a \emph{coherent} quantum subroutine. 
If the subroutine is executed several times, such as when it is employed as an input oracle for another quantum algorithm, then its error must be driven down in order to keep the total error of the whole algorithm sufficiently small.%
\footnote{If the best upper bound we can prove on an algorithm's error probability is $\dfrac 12$ or larger, the algorithm is widely considered to be useless.}
As quantum subroutines may be called in superposition, the error-reduced subroutine cannot use measurements, but must instead be a coherent subroutine. 

Let us formulate the settings more formally.
We will assume the binary case for most of the paper.
This is without loss of generality, as the general case can be obtained from the binary case using the Bernstein-Vazirani trick as we explain in \rf{sec:non-Boolean}.
We treat the subroutine whose error has to be reduced as an input oracle $O$ to the error reduction procedure.
The most natural choice is to take the following \emph{state-generating oracle}
\begin{equation}
\label{eqn:InputOracle}
O\colon \ket A|0>\ket W|0> \longmapsto \ket A\reg W |\phi> = \sqrt{1-p} \ket A |0> \ket W|\phi_0> + \sqrt{p} \ket A|1>\ket W|\phi_1>,
\end{equation}
where the first register $\reg A$ is the output qubit (answer register), and the second register $\reg W$ contains some internal state of the subroutine (workspace), and both states $\phi_0$ and $\phi_1$ are normalised.
The error reduction procedure $R$, given this input oracle, must perform the transformation
\begin{equation}
\label{eqn:ErrorReduction}
R(O)\colon \ket A|0> \longmapsto
\begin{cases}
\ket A|0>, &\text{if $p\le \tfrac 12 - \delta$;}\\
\ket A|1>, &\text{if $p\ge \tfrac 12 + \delta$;}
\end{cases}
\end{equation}
with sufficiently high accuracy.
Here $2 \delta > 0$ is the gap between the accepting and the rejecting probabilities of the subroutine $O$. 
That is, the error of the subroutine is $1/2-\delta$.
The space complexity of the subroutine is the number of qubits used by $\cA$ and $\cW$. %

In the remaining part of the introduction, we first recall the traditional approach to error reduction, that literally performs the transformation in~\rf{eqn:ErrorReduction}.
Then, we consider approaches that improve on the traditional approach by avoiding the transformation in~\rf{eqn:ErrorReduction}.
They are based on idealised models of computation like span programs, dual adversary, and transducers.

\subsection{Traditional Approach to Error Reduction}

The most well-known approach to quantum error reduction is still majority voting. 

\begin{thm}[Quantum Majority Voting, folklore]
\label{thm:majorityVoting}
Let $\eps, \delta>0$ be real parameters.
Quantum majority voting $\eps$-approximately performs the transformation~\rf{eqn:ErrorReduction} for any input oracle of the form~\rf{eqn:InputOracle}.
It executes the input oracle $O$ and its inverse $O^*$ each $\ell = O\sA[\frac1{\delta^2}\log \frac 1\eps]$ times.
Its space complexity is $\ell s + \OO(\log \ell)$, where $s$ is the space complexity of $O$.
Besides executions of $O$ and $O^*$, it uses $\OO(\ell\log \ell)$ other 1- or 2-qubit operations.
\end{thm}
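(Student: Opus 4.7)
The plan is to run $\ell$ fresh copies of the subroutine in parallel, coherently compute the majority of the $\ell$ answer qubits into the output register, and then uncompute the copies by running $O^*$ in parallel. Concretely, I would (i) apply $O^{\otimes \ell}$ to $\ell$ workspaces initialized to $|0\rangle$, producing the state $|\phi\rangle^{\otimes \ell}$; (ii) use a reversible increment circuit to coherently count the number of $1$s among the answer qubits into an auxiliary $\lceil \log_2(\ell+1)\rceil$-qubit counter; (iii) CNOT the threshold bit of the counter (whether the count exceeds $\ell/2$) into the output register; (iv) uncompute the counter; (v) apply $(O^*)^{\otimes \ell}$ to restore the workspaces.

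For correctness, let $b$ denote the correct answer and $P = \Pr[\mathrm{Maj}(X) = b]$ for $X \sim \mathrm{Binomial}(\ell, p)$. Expanding $|\phi\rangle^{\otimes \ell}$ over answer strings and grouping by whether $\mathrm{Maj}$ returns $b$, one can write the state after step (iv) as $\sqrt{P}\,|\psi_b\rangle|b\rangle + \sqrt{1-P}\,|\psi_{\bar b}\rangle|\bar b\rangle$, with $|\psi_b\rangle, |\psi_{\bar b}\rangle$ orthonormal and satisfying $\sqrt{P}\,|\psi_b\rangle + \sqrt{1-P}\,|\psi_{\bar b}\rangle = |\phi\rangle^{\otimes \ell}$. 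Hoeffding's inequality gives $1-P \le e^{-2\delta^2\ell}$, so picking $\ell = \lceil \log(2/\eps^2)/(2\delta^2)\rceil = \OO\!\big(\tfrac{1}{\delta^2}\log\tfrac{1}{\eps}\big)$ suffices.

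The step that I expect to require the most care is the uncomputation in (v): $(O^*)^{\otimes \ell}$ maps the full superposition back to $|0\rangle^{\otimes \ell s}$, but acting on $|\psi_b\rangle$ and $|\psi_{\bar b}\rangle$ individually it leaves residual components orthogonal to $|0\rangle^{\otimes \ell s}$. Using $\langle \psi_b|\psi_{\bar b}\rangle = 0$, one shows that $(O^*)^{\otimes \ell}|\psi_b\rangle = \sqrt{P}\,|0\rangle^{\otimes \ell s} + \sqrt{1-P}\,|\chi\rangle$ with $|\chi\rangle \perp |0\rangle^{\otimes \ell s}$, and symmetrically for $|\psi_{\bar b}\rangle$; combining these, the overlap of the final state with the target $|0\rangle^{\otimes \ell s}|b\rangle$ comes out to exactly $P$, yielding trace distance $\sqrt{2(1-P)} \le \eps$. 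The space bound $\ell s + \OO(\log \ell)$ and the $\OO(\ell \log \ell)$ non-oracle gate count then follow immediately from the explicit circuit in (i)--(v), where the bulk of the gates comes from the $2\ell$ coherent increments and decrements, each costing $\OO(\log \ell)$.
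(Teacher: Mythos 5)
Your proposal is correct and follows essentially the same route as the paper's proof: run $\ell$ fresh copies of $O$ in parallel, coherently compute the majority into the output register, uncompute, and bound the resulting imprecision by $\sqrt{2(1-P)}$ with $1-P\le e^{-2\delta^2\ell}$ via Hoeffding. The only cosmetic differences are that the paper phrases the error bound as a perturbation/unitarity argument rather than an explicit final-overlap computation (both give the same quantity), and that $\sqrt{2(1-P)}$ is the Euclidean distance between state vectors rather than the trace distance, which is the notion of $\eps$-approximation the paper actually uses.
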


One difference between the classical and quantum case, is that we cannot na\"ively reuse the space used by the subroutine in the quantum case, but must instead make $\ell$ copies of the whole state $\ket{A\reg W}|\phi>$ (in contrast to classical majority voting, which can just keep track of how many 1s have been seen so far).
This gives the $\ell s$ term in the space complexity of \rf{thm:majorityVoting}.
Let us give a short proof in order to demonstrate this.

\mycutecommand{\Rhigh}{\cR_{\text{high}}}

\begin{proof}
Let $r$ be the required output of the algorithm in~\rf{eqn:ErrorReduction}, i.e., $r=0$ if $p\le\frac12-\delta$, and $r=1$ if $p\ge \frac12+\delta$.
We also assume that $\ell$ is a power of 2.
The algorithm works as follows.
The input oracle $O$ is executed $\ell$ times, each time on a new pair of registers $\reg A_i$, $\reg W_i$.
This gives a state of the form
\[
\sum_{b\in \bool^\ell} (1-p)^{\frac{\ell-|b|}2}p^{\frac{|b|}2} \ket A_1 |b_1>\cdots \ket A_\ell |b_\ell> \ket W_1 |\phi_{b_1}> \dots \ket W_\ell |\phi_{b_\ell}>,
\]
where $|b|$ is the Hamming weight of $b$.
Evaluation of the sum of the bits $b_1,\dots,b_\ell$ into a new register $\reg R$ gives the state 
\[
\ket|\psi> = \sum_{b\in \bool^\ell} (1-p)^{\frac{\ell-|b|}2}p^{\frac{|b|}2} \ket A_1 |b_1>\cdots \ket A_\ell |b_\ell> \ket W_1 |\phi_{b_1}> \dots \ket W_\ell |\phi_{b_\ell}> \ketA R||b|>.
\]
Let $\Rhigh$ be the most significant qubit of $\cR$.
We rewrite the state $\ket|\psi>$ as
\[
\ket|\psi> = \ket \Rhigh |r> \ket |\psi_r> + \ket \Rhigh |1-r> \ket|\psi_{1-r}>,
\]
where $\ket|\psi_r>$ and $\ket|\psi_{1-r}>$ are non-normalized states on the remaining registers.
Now we copy the qubit $\Rhigh$ into the output qubit $\reg A$, which gives the state
\begin{equation}
\label{eqn:majorityVotingIntermediateState1}
\ket A|r> \ket \Rhigh |r> \ket |\psi_r> + \ket A|1-r>\ket \Rhigh |1-r> \ket|\psi_{1-r}>,
\end{equation}
and run the whole preceding computation in reverse.

If, instead of~\rf{eqn:majorityVotingIntermediateState1}, we had the state
\begin{equation}
\label{eqn:majorityVotingIntermediateState2}
\ket A|r> \ket \Rhigh |r> \ket |\psi_r> + \ket A|r>\ket \Rhigh |1-r> \ket|\psi_{1-r}>,
\end{equation}
we would get the required state $\ket A|r>$ at the end exactly.
Since the difference between the states in~\rf{eqn:majorityVotingIntermediateState1} and~\rf{eqn:majorityVotingIntermediateState2} is of norm $\sqrt2 \|\psi_{1-r}\|$, the imprecision of the algorithm is also $\sqrt2 \|\psi_{1-r}\|$ by unitarity (see \rf{sec:perturbed}).

Let us estimate $\|\psi_{1-r}\|$.
We consider the case $r=0$ for concreteness, the second case being similar.
The norm squared of $\psi_{1-r}$ equals the probability of $|b|\ge \ell/2$ when $b$ is the sum of $\ell$ i.i.d. Bernoulli random variables with expectation $p$.
By Hoeffding's inequality:
\[
\|\psi_{1-r}\|^2
=
\Pr\skB[|b|\ge \ell/2]
\le
\Pr\skB[|b| - {\bE\skA[|b|]} \ge \ell\delta]
\le
\ee^{-2\ell\delta^2}
\le
\frac{\eps^2}{2}
\]
for some $\ell = \OO\sA[\frac1{\delta^2}\log \frac 1\eps]$.

The registers $\reg A_i$ and $\reg W_i$ use $\ell s$ qubits together, and $\cR$ uses additional $\OO(\log \ell)$.
Time complexity of the algorithm is easy to check.
\end{proof}

\mycutecommand{\Oref}{O_{\mathrm{ref}}}
Another approach uses quantum amplitude estimation.
For this primitive, it is sufficient to use a less powerful \emph{reflecting oracle}:
\begin{equation}
\label{eqn:Oref}
\Oref = O \mathrm{Ref_{\ket|0>\ket|0>}} O^*,
\end{equation}
where $\mathrm{Ref}_{\ket|0>\ket|0>}$ is the reflection about the state $\ket A|0>\ket W|0>$ in $\reg A\otimes \reg W$.
This oracle reflects about the state $\ket{}|\phi>$ in $\reg A\otimes \reg W$.
It is easy to implement $\Oref$ using two oracle calls and $s+\OO(1)$ additional operations.
The latter are usually majorized by the time complexity of implementing $O$, since a subroutine using $s$ qubits will generally have time complexity at least $s$.
On the other hand, implementing $O$ using $\Oref$ is complicated. (See~\cite{aaronson:counting, belovs:counting} for some lower bounds involving the reflecting oracle.)

Even more, it suffices for $\Oref$ to act as $O \mathrm{Ref_{\ket|0>\ket|0>}} O^*$ only on the span of the vectors $\ket A|0> \ket W|\phi_0>$ and $\ket A|1>\ket W|\phi_1>$, and it can be arbitrary on its orthogonal complement.
That is, we may assume that $\Oref$ satisfies the following weaker conditions:
\begin{equation}
\label{eqn:OrefGeneral}
\Oref\colon \begin{cases}
\sqrt{1-p} \ket|0> \ket|\phi_0> + \sqrt{p} \ket|1>\ket|\phi_1>
\mapsto
\sqrt{1-p} \ket|0> \ket|\phi_0> + \sqrt{p} \ket|1>\ket|\phi_1>\\
\sqrt{p} \ket |0>\ket|\phi_0> - \sqrt{1-p} \ket|1>\ket|\phi_1>
\mapsto
-\sqrt{p} \ket |0>\ket|\phi_0> + \sqrt{1-p} \ket|1>\ket|\phi_1>.
\end{cases}
\end{equation}

\begin{thm}[Quantum Amplitude Estimation~\cite{brassard:amplification}]
\label{thm:estimation}
Let $\eps, \delta>0$ be real parameters.
There exists a quantum algorithm that, given a copy of the state $\ket|\phi>$ as in~\rf{eqn:InputOracle} and query access to an oracle $\Oref$ satisfying~\rf{eqn:OrefGeneral}, outputs 0 if $p\le \tfrac12 - \delta$ and outputs 1 if $p\ge \tfrac12 + \delta$ except with error probability $\eps$.
It makes $\ell = \OO(\frac1{\delta\eps})$ (controlled) queries to the oracle $\Oref$,  has space complexity $s + \OO(\log \ell)$, and time complexity $\OO(\ell\log \ell)$.
\end{thm}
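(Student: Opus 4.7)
The plan is to apply the Brassard--H\o yer--Mosca--Tapp amplitude estimation procedure, tailored to the weaker oracle $\Oref$ of~\rf{eqn:OrefGeneral}. The key observation is that $\Oref$, together with a single-qubit sign flip on the answer register, produces a Grover-type rotation whose eigenphase encodes $p$, so that phase estimation of this rotation reads off $p$ directly.

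\textbf{Step 1: the Grover operator.} Let $Z_A$ denote Pauli-$Z$ on the answer qubit, and set $Q := \Oref\, Z_A$. On the two-dimensional subspace $V = \mathrm{span}\{\ket|0>\ket|\phi_0>,\,\ket|1>\ket|\phi_1>\}$, $Z_A$ is the reflection about $\ket|0>\ket|\phi_0>$ and, by~\rf{eqn:OrefGeneral}, $\Oref$ is the reflection about $\ket|\phi>$. The composition of two reflections through lines meeting at angle $\theta$ is a rotation by $2\theta$, where $\sin^2\theta = p$, so $Q|_V$ has eigenvalues $e^{\pm 2i\theta}$. One application of $Q$ costs one $\Oref$-query and $\OO(1)$ extra gates, and the supplied copy of $\ket|\phi>$ already lies in $V$ as a superposition of the two eigenvectors.

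\textbf{Step 2: phase estimation and decoding.} Run Fourier-based phase estimation of $Q$ on input $\ket|\phi>$, using $t = \lceil \log_2 \ell \rceil$ ancilla qubits and $\ell$ controlled $Q$-applications in total. Measure the ancilla to obtain $k \in \{0,\dots,\ell-1\}$, set $\tilde\theta := \pi k/\ell$ folded into $[0,\pi/2]$, and output $1$ if $\sin^2\tilde\theta \ge \tfrac12$ and $0$ otherwise. The system register uses $s$ qubits and the ancilla uses $\OO(\log\ell)$ more; the $\ell$ controlled queries dominate the oracle count, and the inverse QFT adds $\OO(\log^2 \ell)$ gates, giving total time $\OO(\ell \log \ell)$.

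\textbf{Step 3: tail bound and error analysis.} The delicate point, and what keeps the complexity $\OO(1/(\delta\eps))$ rather than $\OO(\tfrac1\delta \log\tfrac1\eps)$, is the precision analysis. A direct computation of the Dirichlet-kernel amplitudes produced by the inverse QFT yields the power-law tail bound $\Pr[|\tilde\theta - \theta| > c/\ell] = \OO(1/c)$. Setting $\ell = \Theta(1/(\delta\eps))$ with $c = \Theta(1/\eps)$ therefore guarantees $|\tilde\theta - \theta| \le \delta/4$ except with probability $\eps$. Since $|d(\sin^2\theta)/d\theta|\le 1$, this transfers to $|\sin^2\tilde\theta - p| \le \delta/4$, which suffices to distinguish $p \le \tfrac12 - \delta$ from $p \ge \tfrac12 + \delta$. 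The main obstacle is using this bare $1/c$ tail directly rather than boosting by a median of repetitions, since the latter would introduce a spurious $\log(1/\eps)$ factor; a minor subtlety is that $Q|_V$ has two eigenphases $\pm 2\theta$, so measurement returns either $k$ or $\ell - k$, uniformly handled by folding into $[0,\pi/2]$.
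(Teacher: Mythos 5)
The paper gives no proof of this theorem; it only cites \cite{brassard:amplification} and remarks that it is ``simple to show'' that the weaker oracle of~\rf{eqn:OrefGeneral} suffices. Your argument is a correct and complete reconstruction of that standard approach: you correctly observe that conditions~\rf{eqn:OrefGeneral} are exactly what is needed for $\Oref$ to act as the reflection about $\ket|\phi>$ on the invariant two-dimensional subspace $V$, so that $Q=\Oref Z_A$ is the usual Grover rotation there, and you correctly identify that the $\OO(1/c)$ power-law tail of Fourier phase estimation (rather than a median-of-repetitions boost) is what yields $\ell=\OO(1/(\delta\eps))$ without a $\log\frac1\eps$ factor.
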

While amplitude estimation is typically stated in terms of calls to the state-generating oracle $O$ from~\rf{eqn:InputOracle}, it actually uses the reflecting oracle $\Oref$ defined as in \rf{eqn:Oref}, and it is simple to show that it is in fact sufficient for $\Oref$ to satisfy \rf{eqn:OrefGeneral}.

Majority voting has better dependence on $\eps$, and amplitude estimation on $\delta$.
It is possible to combine the two procedures to obtain the following result.

\begin{cor}[Quantum Error Reduction]\label{cor:ErrorReduction}
Let $\eps, \delta>0$ be parameters.
There exists a quantum algorithm that $\eps$-approximately performs the transformation~\rf{eqn:ErrorReduction} for any input oracle of the form~\rf{eqn:InputOracle}.
It executes the input oracle $O$ and its inverse $O^*$ both $\OO\sA[\frac1{\delta}\log \frac 1\eps]$ times.
Its space complexity is $\OO\sA[\log\frac 1\eps]\sA[s + \log \frac 1\delta]$, where $s$ is the space complexity of $O$.
Besides executions of $O$ and $O^*$, it uses $\tOO\sA[\frac1{\delta} \log\frac1\eps]$ other 1- or 2-qubit operations.
\end{cor}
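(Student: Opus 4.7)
The plan is to reduce error in two stages: first use amplitude estimation to amplify the gap to a constant, and then apply majority voting on top of the resulting subroutine.

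More concretely, I would first build from $O$ the reflecting oracle $\Oref = O\,\mathrm{Ref}_{\ket|0>\ket|0>} O^*$ at the cost of one call to $O$ and one to $O^*$ plus $\OO(s)$ gates, and I would use one extra call to $O$ to prepare a fresh copy of $\ket|\phi>$. Feeding these into \rf{thm:estimation} with accuracy set to the constant $\eps_0 = 1/6$ (say) yields a coherent subroutine $O'$ that, up to imprecision $\eps_0$, maps $\ket|0>$ to $\ket A|r>\ket {W'}|\phi'_r>$, where $r$ is the correct output bit. Interpreted in the form~\rf{eqn:InputOracle}, $O'$ has gap parameter $\delta'\ge 1/2-\eps_0$, which is a constant, while its query cost is $\OO(1/\delta)$ calls to $O$ and $O^*$, its space complexity is $s' = s + \OO(\log\tfrac1\delta)$, and its time complexity is $\tOO(1/\delta)$.

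Next I would apply \rf{thm:majorityVoting} to $O'$ with accuracy parameter $\eps$. Since $\delta'$ is now a constant, the number of repetitions becomes $\ell = \OO(\log\tfrac1\eps)$, each one invoking $O'$ and its inverse once. Composing the query counts gives a total of $\OO(\ell\cdot \tfrac1\delta) = \OO(\tfrac1\delta\log\tfrac1\eps)$ calls to $O$ and $O^*$, and the extra $1$- and $2$-qubit gates add up to $\tOO(\tfrac1\delta\log\tfrac1\eps)$. The space used by majority voting on $O'$ is $\ell s' + \OO(\log \ell) = \OO(\log\tfrac1\eps)\bigl(s + \log\tfrac1\delta\bigr)$, as claimed.

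The only thing one has to be a little careful about is error composition: the amplitude-estimation stage contributes imprecision $\eps_0$ to the subroutine $O'$, so $O'$ is only an \emph{approximate} state-generating oracle. I would use the perturbation bound mentioned in the proof of \rf{thm:majorityVoting} (referenced via \rf{sec:perturbed}) to argue that running majority voting with the $\eps_0$-approximate $O'$ changes the final state by at most $\OO(\ell\eps_0)$ in norm; choosing $\eps_0$ a sufficiently small constant (or a constant times $1/\ell$, absorbed into the $\OO(\cdot)$) keeps the total error below the target $\eps$. This is the only mildly delicate step; once the two procedures are stacked in this order, the stated bounds follow immediately from \rf{thm:majorityVoting} and \rf{thm:estimation}.
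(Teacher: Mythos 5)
Your construction is the same as the paper's: run amplitude estimation (\rf{thm:estimation}) once with constant accuracy to obtain a subroutine $V$ with constant gap, at a cost of $\OO(1/\delta)$ calls to $O$ and $O^*$ and space $s+\OO(\log\frac1\delta)$, then wrap it in majority voting (\rf{thm:majorityVoting}) with $\ell=\OO(\log\frac1\eps)$ repetitions; the query, space and time bounds compose exactly as you state.

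One remark on your final paragraph: the concern it raises is not real, and the fix you propose does not work as written. A coherent subroutine that outputs the correct bit except with probability $\eps_0$ \emph{is} exactly a state-generating oracle of the form~\rf{eqn:InputOracle} with gap $\delta'=\tfrac12-\eps_0$ --- which is precisely how you already framed $O'$ in your second paragraph --- so no perturbation argument is needed: majority voting is designed to tolerate this constant subroutine error rather than accumulate it linearly, and \rf{thm:majorityVoting} directly delivers final imprecision $\eps$. By contrast, a bound of the form $\OO(\ell\eps_0)$ with $\eps_0$ a fixed constant is $\Theta(\log\frac1\eps)$ and does not stay below $\eps$, while the alternative $\eps_0=\Theta(1/\ell)$ would inflate the amplitude-estimation stage to $\OO(\ell/\delta)$ queries per invocation and ruin the claimed total of $\OO(\frac1\delta\log\frac1\eps)$. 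Deleting that last paragraph leaves a correct proof that matches the paper's.
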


\begin{proof}
We first construct an error reduction algorithm $V$ with constant error $\eps = 1/10$ using \rf{thm:estimation}.
It requires one copy of $\ket|\phi>$ and $\OO(1/\delta)$ executions of $\Oref$, which can be performed using $\OO(1/\delta)$ invocations of $O$ and $O^*$ due to~\rf{eqn:Oref}.
The space complexity of $V$ is $s + \OO\s[\log \frac1{\delta}]$.
Its time complexity is $\tOO\s[\frac1{\delta}]$.

To get the imprecision down to $\eps$, we use $V$ as an oracle in \rf{thm:majorityVoting} with $\delta$ being $\Omega(1)$.
This requires $O\s[\log \frac1\eps]$ executions of $V$ and $V^*$, and its space complexity is $O\s[\log \frac1\eps]$ times the space complexity of $V$ plus negligible $O\s[\log \frac1\eps]$.
Its time complexity is dominated by $O\s[\log \frac1\eps]$ times the time complexity of $V$.
\end{proof}

The number of queries made by this algorithm has optimal dependence on $\eps$ by~\cite{buhrman:boundsForSmallError} as well as on $\delta$ by~\cite{nayak:approximatingMedian}. 
However, this algorithm is not optimal in terms of memory usage.
It is possible to reuse the memory of the subroutine $O$ as shown by Marriott and Watrous in the context of QMA~\cite{marriott:QMA} -- see also~\cite{nagaj:fastQMA} -- but this construction still uses additional qubits.
Alternatively, one can use Quantum Signal Processing (QSP)~\cite{low:quantumSignalProcessing} for similar results~\cite{gilyen:quantumSingularValueTransformation}. 
For instance, we can get the following result, which, to the best of our knowledge, has not been previously stated explicitly.

\begin{thm}
[QSP-based Error Reduction]
\label{thm:ErrorReduction}
Let $\eps, \delta>0$ be real parameters.
There exists a quantum algorithm that, given query access to any oracle $\Oref$ satisfying~\rf{eqn:OrefGeneral}, $\eps$-approximately performs the following transformation
\begin{equation}
\label{eqn:ErrorReductionSign}
\ket |\phi> \longmapsto
\begin{cases}
\ket |\phi>, &\text{if $p\le \tfrac 12 - \delta$;}\\
-\ket |\phi>, &\text{if $p\ge \tfrac 12 + \delta$;}
\end{cases}
\end{equation}
for any normalised $\ket|\phi> \in \spn\sfigA{\ket|0>\ket|\phi_0>, \ket|1>\ket|\phi_1>}$ (including, but not limited to, the specific vector $\ket|\phi>$ from~\rf{eqn:InputOracle}).
The algorithm makes $\ell = \OO\sA[\frac1{\delta}\log\frac1{\eps}]$ queries to the oracle $\Oref$, and uses $O(\ell)$ other 1- or 2-qubit gates.
The algorithm uses no additional qubits compared to $\Oref$.
\end{thm}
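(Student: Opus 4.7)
The plan is to apply quantum signal processing (QSP) inside the two-dimensional invariant subspace $\cH_\phi := \spn\{\ket |0>\ket |\phi_0>,\ \ket |1>\ket |\phi_1>\}$, which by \rf{eqn:OrefGeneral} is preserved both by $\Oref$ and by the Pauli $Z$ acting on the answer register $\reg A$, which we denote $Z_A$. Setting $a := 1-2p$, a direct calculation shows that in the basis $\{\ket |0>\ket |\phi_0>,\ \ket |1>\ket |\phi_1>\}$ of $\cH_\phi$, the operator $\Oref$ restricts to the reflection $aZ + \sqrt{1-a^2}\,X$ while $Z_A$ restricts to the Pauli reflection $Z$. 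Consequently the walk operator $W := \Oref \cdot Z_A$ restricts on $\cH_\phi$ to the rotation $e^{-i\theta Y}$ with $\cos\theta = a$, which is exactly the standard QSP signal unitary on a single qubit.

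Invoking the QSP theorem of \cite{low:quantumSignalProcessing}, for any real polynomial $P$ of degree $\le k$ with $|P(x)| \le 1$ on $[-1,1]$ and of the appropriate parity, there exist phases $\phi_0, \ldots, \phi_k \in \mathbb{R}$ such that
\[
U_\Phi \;:=\; e^{i\phi_0 Z_A}\, W\, e^{i\phi_1 Z_A}\, W \,\cdots\, W \, e^{i\phi_k Z_A}
\]
restricts on $\cH_\phi$ to a $2\times 2$ unitary whose diagonal entries both equal $P(a)$ and whose off-diagonal entries have magnitude $\sqrt{1-|P(a)|^2}$. The critical point is that this construction uses no additional qubits: each $e^{i\phi_j Z_A}$ is a single-qubit rotation on $\reg A$ alone, and each $W$ uses one query to $\Oref$ together with one free application of $Z_A$. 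In effect, the $\reg A$ register itself plays the role of the ``QSP qubit'' that textbook presentations introduce as an ancilla.

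Next, I would take $P$ to be the standard polynomial approximation of the sign function on a gap: by \cite{low:quantumSignalProcessing}, for degree $k = O\bigl(\tfrac{1}{\delta}\log\tfrac{1}{\eps}\bigr)$ there is a real odd polynomial $P$ with $|P(x)| \le 1$ on $[-1,1]$ and $|P(x) - \operatorname{sign}(x)| \le \eps^2/4$ whenever $|x| \ge 2\delta$. On this regime $|P(a)| \ge 1 - \eps^2/4$, so the off-diagonal entries of $U_\Phi|_{\cH_\phi}$ are bounded by $\eps/\sqrt{2}$, and therefore $U_\Phi|_{\cH_\phi} = \operatorname{sign}(a)\,I + E$ with $\|E\| \le \eps$. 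Applied to any $\ket |\phi> \in \cH_\phi$, this implements \rf{eqn:ErrorReductionSign} to within error $\eps$, with the claimed $O(\ell)$ single-qubit gates overhead coming from the $k+1$ processing rotations.

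The main obstacle is justifying the ``no additional qubits'' claim, since textbook QSP/QSVT constructions reserve an ancilla qubit for the projector-controlled phases. Here one has to exploit the very specific structure that one of the two reflections is the Pauli $Z_A$ acting on a physical qubit already present in $\Oref$'s Hilbert space, so the processing rotations $e^{i\phi_j Z_A}$ execute in place on $\reg A$. A secondary technical issue is matching QSP parity conventions so that the two diagonal entries of $U_\Phi|_{\cH_\phi}$ are genuinely both equal to the real $P(a)$ (rather than $P(a)$ and $P^*(a)$); this is standard but depends on which form of QSP one adopts and on choosing $P$ real-valued on the real line.
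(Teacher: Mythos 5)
Your proposal is correct and follows essentially the same route as the paper: restrict to the two-dimensional invariant subspace spanned by $\ket|0>\ket|\phi_0>$ and $\ket|1>\ket|\phi_1>$, observe that $\Oref$ restricts there to the standard QSP signal operator with $x=1-2p$, let the answer register $\cA$ itself carry the interleaved single-qubit processing phases (which is exactly how the paper gets the ``no additional qubits'' claim), and apply a degree-$\OO(\frac1\delta\log\frac1\eps)$ sign-approximating polynomial. The only differences are cosmetic conventions (rotation vs.\ reflection form of the signal operator) and the real-vs-complex polynomial issue you flag at the end, which the paper resolves by taking $R=\Re(P)$ and using unitarity to bound the imaginary part and off-diagonal terms.
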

We prove \rf{thm:ErrorReduction} in \rf{sec:QSP}. The proof is similar to a construction in~\cite[Section 3]{gilyen:quantumSingularValueTransformation}, used there for QMA amplification. See \rf{sec:QSP} for a more detailed comparison.

\subsection{Span Programs, Dual Adversary, Transducers, and Error Reduction}
\label{sec:spanPrograms}
Given the lower bounds from~\cite{buhrman:boundsForSmallError} and~\cite{nayak:approximatingMedian}, the algorithm in \rf{thm:ErrorReduction} seems to be the best possible technique for error reduction.
Also, often $\delta = \Omega(1)$, so the query complexity is only logarithmic, which might seem negligible. 
However, imagine a bounded-error subroutine that executes itself recursively many times.
In this case, error reduction for each level of recursion gives a multiplicative logarithmic factor, resulting in something like $\OO\s[\log \frac1\eps]^d$, where $d$ is the depth of recursion.
If $d$ is super-constant, this might be quite significant.

There exist several approaches to deal with error reduction by avoiding the transformation in~\rf{eqn:ErrorReduction} altogether.
They all use similar ideas, and we first explore the one based on span programs~\cite{reichardt:formulae}, which was directly inspired by quantum algorithms for the iterated NAND tree~\cite{farhi:nandTree, childs:NANDTree, ambainis:formulaeEvaluation}.
The span program model is an idealised computational model (meaning that it does not correspond to any actual computational device) for evaluating Boolean functions, with the following three key properties:

\newcommand{\PointRef}[1]{\hyperref[point#1]{Point~#1}}

\begin{enumerate}
\item
\label{point1}
Any bounded-error quantum query subroutine evaluating a Boolean function can be converted into an exact span program whose complexity (called \emph{witness size}) is the same as the query complexity of the subroutine up to a constant factor.%
\footnote{This constant factor is unavoidable here, and it depends on the permitted error probability of the subroutine.  
We revisit this question in \rf{sec:lower-bound}.
}
\item
\label{point2}
Span programs can be composed.  More precisely, a span program for a function $f$ with witness size $W_f$ can be composed with a span program for a function $g$ with witness size $W_g$ resulting in a span program for the composed function $f\circ g$ with witness size $W_fW_g$.
\item
\label{point3}
Any span program can be converted into a quantum query algorithm with the same (up to a constant factor) complexity, introducing some small error.
\end{enumerate}

Thus, having two bounded-error quantum query algorithms evaluating functions $f$ and $g$ in $Q_f$ and $Q_g$ queries, respectively, we can first lift them to the idealised (error-free) world of span programs (\PointRef 1), combine them there (\PointRef 2), and then bring them back (\PointRef 3) resulting in a bounded-error quantum query algorithm for the function $f\circ g$ with query complexity $\OO(Q_fQ_g)$.
This improves on $\OO(Q_f Q_g \log Q_g)$, which we would get by composing the programs for $f$ and $g$ directly using \rf{thm:majorityVoting} or~\ref{thm:ErrorReduction} to reduce the error of the latter.
Of course, this approach also works in more complicated settings, like the recursive procedure mentioned in the first paragraph of this section.

\PointRef 1 above holds, for one reason, because span programs just do not have a notion of error.%
\footnote{
We are talking about the original version of span programs from~\cite{reichardt:formulae} here.
There exists a version of span programs capable of approximate evaluation of functions~\cite{ito2019approxSpan}, but this is just for convenience. We know by \cite{reichardt:spanPrograms} that there is always an \emph{exact} span program with optimal witness complexity.
}
The transformation from the quantum query algorithm to the span program in \PointRef 1 is due to~\cite{reichardt:spanPrograms} and it is indirect.
It goes by proving that span programs are dual to an important quantum lower bound technique: the adversary bound~\cite{hoyer:advNegative}.
Because of the latter, generalisations of span programs beyond evaluation of Boolean functions are generally called \emph{dual adversary bounds}.
\medskip

Span programs allow composition without log factors, but they only work for Boolean functions.
Further research focused on generalising this to other settings like evaluation of arbitrary functions~\cite{lee:stateConversion}, state generation~\cite{ambainis:symmetryAssisted}, and state conversion~\cite{lee:stateConversion}.
In~\cite{belovs:variations}, a version of the dual adversary for state conversion was constructed that allowed the input oracle to be an arbitrary unitary, not just encoding some input string like in~\cite{lee:stateConversion}.
This made it possible to construct idealised counterparts to the whole spectrum of quantum unitary transformations, and 
a \emph{direct} reduction from the quantum query algorithm to the dual adversary was shown (building on~\cite{lee:stateConversion}).

This generality caused some problems with \PointRef 1 above (with span programs replaced by the dual adversary).
If one takes a subroutine for evaluating a function, and converts it into the dual adversary for state conversion, the resulting dual adversary will have precisely the same action, and, consequently, precisely the same error!
If the algorithm had bounded error, the adversary would have the same error.
The remedy proposed in~\cite{belovs:variations} was a \emph{purifier}.
It is a dual adversary that has input oracle of the form~\rf{eqn:InputOracle}, and which performs the transformation~\rf{eqn:ErrorReduction} exactly in constant complexity (assuming $\delta=\Omega(1)$).
Of course, the exactness only holds in the idealised world.
If one were to directly implement the purifier as a real quantum algorithm (like in \PointRef 3), this would bring the bounded error back.
Composing it with the dual adversary obtained from the bounded-error algorithm above, yields a dual adversary for \emph{exact} evaluation of the function with only $\OO(1)$ multiplicative increase in complexity.
The purifier can be used in other settings as well.
For instance, it can be applied to the input of the algorithm, covering the case when the input oracle is noisy.
\medskip

The span programs and the dual adversaries described above only consider \emph{query} complexity of quantum algorithms.
In~\cite{belovs:taming}, the model of \emph{transducers} was introduced.
Transducers are a relaxation of both the usual model of quantum algorithms and the dual adversary in the form of~\cite{belovs:variations}.
We will describe them in detail in \rf{sec:prelim}.
For now, let us note that a transducer is a unitary $S$ in a larger space $\cH\oplus \cL$ that yields a certain unitary $S\DownTransduce_{\cH}$ on the subspace $\cH$, called the 
\emph{transduction action} of $S$.
For $\xi,\tau\in\cH$, we write $\xi\transduce{S}\tau$ and say that $S$ \emph{transduces} $\xi$ into $\tau$ if $S\DownTransduce_{\cH}\xi = \tau$.
The transduction action of the transducer is its idealised action (in the same sense as above), and it can be approximately implemented by executing $S$ several times in a row (see \rf{thm:transducer-algorithm} below).
More precisely, $\OO(W)$ iterations suffice, where $W$ is the so-called \emph{transduction complexity}.

As an example, if $S$ is the walk operator, representing one step of an electric quantum walk~\cite{belovs:electicityQuantumWalks} on a graph $G$, and $s$ is the initial vertex of the walk, then we have exact transductions $\ket |s>\transduce{S}\ket|s>$ or $\ket|s> \transduce{S} - \ket|s>$, depending on whether the graph contains marked vertices or not.
For details, see \cite[Section~6]{belovs:taming} or \cite{belovs:phaseHelps}.
Here $S$ is the easy-to-implement unitary, $\cH$ is spanned by $\ket |s>$, and $S\DownTransduce_{\cH}$ is the action we are interested in.

Since transducers are unitary transformations themselves, they naturally capture time and space complexity as well, not only query complexity.%
\footnote{Span programs also capture space rather naturally~\cite{jeffery:spanProgramsSpace}, but they only capture time complexity in the sense that it is possible to convert a quantum algorithm into an \emph{approximate} span program, and then back to an algorithm, in a way that more or less preserves time complexity -- there is no clean way to reason about the time complexity within the span program itself~\cite{cornelissen:spanProgramsTime}.}
The \emph{iteration time} complexity $T(S)$ is just the number of gates necessary to implement $S$ (as a usual unitary), and its \emph{space complexity} is the required number of qubits.
The query complexity $L(S)$ is defined directly and it is based on the notion of quantum Las Vegas query complexity from~\cite{belovs:LasVegas}.
We define the latter formally in \rf{sec:prelimAlgorithms}, but for now it suffices to note that it can depend on the input oracle and the initial state, and it is always smaller than the number of queries made by the algorithm.

Again, we can transform any quantum algorithm into a transducer (corresponding to \PointRef 1 above) preserving Las Vegas query complexity, and we can implement the transduction action of any transducer introducing some error (corresponding to \PointRef 3 above).
It is also possible to compose transducers in various ways (corresponding to \PointRef 2 above).
See \rf{tbl:compare} for an informal comparison between complexity measures of a transducer $S$ and an algorithm $A$ performing the same transformation.

\mytable
\label{tbl:compare}
=====
\negmedskip
\[
\begin{tabular}{rcl}
  Transducer $S$   & &  Algorithm $A$\\
  \hline
  $\xi\transduce{S}\tau$ & & $A \ket{}|\xi> \approx \ket{}|\tau>$\\
   $T(S)W(S)$  & $\leftrightarrow$ & time complexity of $A$\\
   $L(S)$  & $\leftrightarrow$ & query complexity of $A$\\
   space complexity of $S$  & $\leftrightarrow$ & space complexity of $A$
\end{tabular}
\]
-----
Transducers have some notions of complexity that correspond to time, query and space complexity, and we can map back and forth between bounded-error quantum algorithms and transducers in a way that preserves these complexities.
More than that: we can compose transducers to get much better complexity than we would obtain by directly composing the algorithms, which makes them a useful model.
=====

Some remarks related to the exactness condition in \PointRef 1 above are in order.
On one hand, as mentioned above, transducers automatically give exact transduction action for any algorithm based on electric quantum walks.
This includes such important algorithms as Grover's search~\cite{grover:search} and Ambainis' algorithm for element distinctness~\cite{ambainis:distinctness}.
On the other hand, given an arbitrary quantum algorithm, what we can build is a transducer whose transduction action is \emph{precisely} equal to the action of this algorithm.
If the algorithm has bounded error, the transducer inherits it.
This is completely analogous to the situation with the dual adversary mentioned above.
Ref.~\cite{belovs:taming} applied a similar solution as for the dual adversary and constructed an explicit implementation of the purifier from~\cite{belovs:variations} as a transducer.

\subsection{Main Results}

In~\cite{belovs:taming}, a purifier was constructed as a transducer with query and transduction complexity $\OO(1/\delta^2)$.
Note that in contrast to majority voting, this is independent of the target error $\eps$, and therefore already allows composition of bounded-error quantum algorithms without log factor overhead. 
However, the iteration time and space complexity of this purifier were comparable to that of majority voting from~\rf{thm:majorityVoting}.

The main contribution of this paper is a better and simpler construction of a purifier.
It turns out that the most convenient way to cast our results is in a form similar to the one used by the QSP-based error reduction from \rf{thm:ErrorReduction}.
The most natural way to state our main result is to use infinite-dimensional space 
(see \rf{thm:mainInfinite} for the formal statement of the next theorem):

\begin{thm}
[Main Result, Infinite-Dimensional Version]
\label{thm:mainIntro}
There exists a transducer (purifier) that uses infinite-dimensional space and that, given query access to any input oracle $\Oref$ satisfying~\rf{eqn:OrefGeneral}, performs the following transduction exactly:
\begin{equation}
\label{eqn:mainIntro}
\ket |\phi> \transduce{} 
\begin{cases}
\ket |\phi>, &\text{if $p < \tfrac 12$;}\\
- \ket |\phi>, &\text{if $p > \tfrac 12$;}
\end{cases}
\end{equation}
for any normalised $\ket|\phi> \in \spn\sfigA{\ket|0>\ket|\phi_0>, \ket|1>\ket|\phi_1>}$ (including the original vector $\ket|\phi>$ of~\rf{eqn:InputOracle}).
The query complexity of the transducer is $1/(2\delta)$ and the transduction complexity is $\OO(1/\delta)$, where $\delta = |\tfrac12 - p|$.
\end{thm}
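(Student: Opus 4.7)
The plan is to build the purifier as a coined quantum walk on the integer line driven by the reflecting oracle $\Oref$. First, I would restrict attention to the two-dimensional subspace $V = \spn\sfigA{\ket|0>\ket|\phi_0>,\; \ket|1>\ket|\phi_1>}$ on which $\Oref$ acts non-trivially; by~\rf{eqn:OrefGeneral}, $\Oref$ restricted to $V$ is the reflection with $+1$-eigenvector $\ket|\phi>$ and $-1$-eigenvector $\ket|\phi^\perp> = \sqrt{p}\,\ket|0>\ket|\phi_0> - \sqrt{1-p}\,\ket|1>\ket|\phi_1>$. Parametrising $\sqrt{1-p}=\cos\alpha$ and $\sqrt{p}=\sin\alpha$ makes $2\delta = |1-2p| = |\cos 2\alpha|$ explicit, and gives $\Oref$ on $V$ the concrete matrix $\bigl(\begin{smallmatrix} 1-2p & 2\sqrt{p(1-p)} \\ 2\sqrt{p(1-p)} & -(1-2p)\end{smallmatrix}\bigr)$ in the $\sfigA{\ket|0>\ket|\phi_0>, \ket|1>\ket|\phi_1>}$ basis.

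Next, I would adjoin a counter register with state space $\ell^2(\mathbb{Z})$, set $\cH = V \otimes \spn\sfigA{\ket|0>}$, and let $\cL$ be its orthogonal complement in $V \otimes \ell^2(\mathbb{Z})$. One iteration $S$ of the purifier would be built from two applications of $\Oref$ interleaved with two controlled-shift operators $T$ and $T^*$, where $T$ increments the counter on the $\ket|1>$ branch and decrements it on the $\ket|0>$ branch --- each such $T$ being the source of one of the ``two increments and two decrements'' advertised in the introduction. A natural candidate is $S = T^*\,\Oref\,T\,\Oref$, which is manifestly unitary since each factor is.

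The heart of the argument is establishing the exact transduction $\ket|\phi>\otimes\ket|0> \transduce{S} \pm\,\ket|\phi>\otimes\ket|0>$, with sign equal to $\mathrm{sign}(\tfrac 12 - p)$. I would Fourier-transform the counter register so as to diagonalise translation, at which point $S$ decomposes as a direct integral $\int^{\oplus} S(k)\,dk$ of $2\times 2$ unitaries on $V$, parametrised by the momentum $k \in [0,2\pi)$. Using the explicit matrix form of $\Oref|_V$, the eigenvectors and eigenvalues of each $S(k)$ are computable in closed form as smooth functions of $k$ away from an exceptional set of measure zero. Pulling the Fourier inversion back to position space should then show that the component of $S\,\ket|\phi>\ket|0>$ in $\cH$ collapses onto $\pm\ket|\phi>\ket|0>$, while the remainder lies entirely in $\cL$ and plays the role of the ``catalyst leak'' characteristic of transducers.

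The main obstacle is obtaining exactness, not merely a good approximation: the $k$-integral must collapse onto the correct global sign without any residual error. This is precisely where the infinite-dimensional counter is indispensable --- on a finite cycle $\mathbb{Z}/N\mathbb{Z}$ the walk would be periodic and the recurrences would spoil the cancellations. The complexity bounds then follow from the drift of the underlying walk: each iteration of $S$ makes two queries, so the claimed query complexity $1/(2\delta)$ matches (up to the Las Vegas amortisation) the expected hitting time of a walk of drift $|1-2p| = 2\delta$, while the $\OO(1/\delta)$ transduction complexity reflects the number of iterations required for the state to escape to infinity along the counter.
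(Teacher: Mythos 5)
There is a genuine gap at the heart of your argument: the step that would establish exactness is not actually carried out, and the way you describe it misreads what a transduction is. You write that ``the component of $S\,\ket|\phi>\ket|0>$ in $\cH$ collapses onto $\pm\ket|\phi>\ket|0>$, while the remainder lies entirely in $\cL$ and plays the role of the catalyst.'' But the transduction action is not the $\cH$-component of $S\xi$; by \rf{thm:transducer} you must exhibit a catalyst $v\in\cL$ that is present \emph{before and after}, i.e.\ $S\colon\xi\oplus v\mapsto\tau\oplus v$. Applying your $S=T^*\,\Oref\,T\,\Oref$ once to $\ket|\phi>\ket|0>$ with no catalyst spreads amplitude over neighbouring counter values and its $\cH$-component is certainly not $\pm\ket|\phi>\ket|0>$. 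The paper's proof works by writing down the catalyst explicitly as a geometric series, $v=\sum_{j\ge1}\gamma^j\ket|j>$ when $p<\tfrac12$ and $v=\sum_{j\ge1}(-\gamma)^{-j}\ket|j>$ when $p>\tfrac12$ with $\gamma=\sqrt{p/(1-p)}$, and verifying by a direct local-reflection computation that $S(\xi\oplus v)=\pm\xi\oplus v$; the sign dichotomy is exactly the dichotomy of which of these two series is square-summable, i.e.\ recurrence versus transience of the walk. Your Fourier/direct-integral analysis diagonalises $S$ but gives no mechanism for producing this catalyst, and the claim that ``pulling the Fourier inversion back to position space should then show'' the collapse is where the proof would actually have to happen.

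A second, related problem is that your walk lives on the full line $\ell^2(\bZ)$ and is translation-invariant (which is precisely what makes the momentum decomposition available), so nothing in the dynamics distinguishes the counter value $0$ that defines the public space. The paper deliberately cuts the line to a ray and controls the oracle calls on the counter being nonzero, creating a boundary reflection at vertex $0$ ($R_2\ket|0>=\ket|0>$); this boundary is what makes the $p<\tfrac12$ and $p>\tfrac12$ cases behave differently relative to the origin. Finally, even granting a correct construction, the theorem requires the transduction for \emph{every} normalised $\ket|\phi>$ in the two-dimensional span, with query complexity exactly $1/(2\delta)$; the paper gets this by running two isomorphic copies of the ray walk in orthogonal invariant subspaces (\rf{lem:one} and \rf{lem:two}) and invoking linearity together with orthogonality of the resulting total query states, and the constant $1/(2\delta)$ comes from summing the geometric series $\sum_j\gamma^{2j}+\sum_{j\ge1}\gamma^{2j}=\frac{2}{1-\gamma^2}-1$. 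Your appeal to ``the expected hitting time of a walk of drift $2\delta$, up to Las Vegas amortisation'' is an analogy, not a computation, and does not yield the tight constant the theorem asserts.
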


Of course, it is not possible to implement this construction directly.
We will explain in a moment how to solve this issue, but for now let us note
that our new purifier is extremely simple: besides executions of the input oracle, it only uses two increment and two decrement operations!
Like the QSP-based error reduction procedure from \rf{thm:ErrorReduction}, it uses the reflecting oracle $\Oref$, and a copy of the state $\ket|\phi>$.
Contrary to \rf{thm:ErrorReduction}, the purifier need not know $\delta$ in advance.
In other words, the same transducer works for all values of $\delta$, and the query complexity is input-dependent.

In \rf{sec:lower-bound}, we prove that this transducer is optimal: every transducer with transduction action~\rf{eqn:mainIntro} must have query complexity at least $1/(2\delta)$ even if $\delta$ is known in advance and there are only two admissible input oracles.
Note that this bound is tight \emph{including} the constant factors! 
The latter are important in this context as they matter for recursions with super-constant depth.
On the other hand, the constant factors in the transduction and time complexities are not so important as they do not multiply during composition.

The following corollary (formal version in \rf{cor:InputOracleErrorReduction}) states the same result in the spirit of \rf{thm:majorityVoting} with the state-generating input oracle $O$ and the initial state $\ket|0>$.

\begin{cor}
\label{cor:IntroAlternativeOracle}
There exists a transducer that uses infinite space and, given query access to any input oracle $O\colon \cA\otimes \cW\to \cA\otimes \cW$ as in~\rf{eqn:InputOracle}, performs the following transduction exactly:
\[
\ket |0> \transduce{}
\begin{cases}
\ket |0>, &\text{if $p < \tfrac 12$;}\\
\ket |1>, &\text{if $p > \tfrac 12$.}
\end{cases}
\]
The query complexity is $1 + 1/(2\delta)$, and the transduction complexity is $\OO(1/\delta)$, where $\delta = \absA|\tfrac12 - p|$.
\end{cor}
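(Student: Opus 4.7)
The plan is to derive \rf{cor:IntroAlternativeOracle} from \rf{thm:mainIntro} by a phase-kickback construction that converts the sign flip produced by the purifier into a bit flip on an output qubit. Given the state-generating oracle $O$ from~\rf{eqn:InputOracle}, set $\Oref = O\,\mathrm{Ref}_{\ket|0>\ket|0>}\,O^*$ as in~\rf{eqn:Oref}. Since $\Oref$ satisfies~\rf{eqn:OrefGeneral}, \rf{thm:mainIntro} supplies a transducer $S$ on $\cA\otimes\cW$ whose exact transduction action is $\ket|\phi>\transduce{S}\mathrm{sgn}(\tfrac12-p)\,\ket|\phi>$, with query complexity $1/(2\delta)$ and transduction complexity $\OO(1/\delta)$.

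The new transducer acts on $\cA\otimes\cA'\otimes\cW$, where $\cA$ is the output qubit and $\cA'\otimes\cW$ is a fresh workspace used by $O$. Starting from $\ket A|0>\ket{A'}|0>\ket W|0>$, it performs: (i) a Hadamard on $\reg A$, giving $\tfrac1{\sqrt 2}(\ket A|0>+\ket A|1>)\ket{A'}|0>\ket W|0>$; (ii) a controlled-$O$ on $\reg{A'}\otimes\reg W$ with control $\reg A=1$, producing $\tfrac1{\sqrt 2}\bigl(\ket A|0>\ket{A'}|0>\ket W|0>+\ket A|1>\ket|\phi>\bigr)$ with $\ket|\phi>$ supported on $\reg{A'}\otimes\reg W$; (iii) the purifier $S$ on $\reg{A'}\otimes\reg W$, controlled by $\reg A=1$, which by~\rf{eqn:mainIntro} applies the phase $\mathrm{sgn}(\tfrac12-p)$ to the $\reg A=1$ branch; (iv) a controlled-$O^*$ on $\reg{A'}\otimes\reg W$, giving $\tfrac1{\sqrt 2}(\ket A|0>\pm\ket A|1>)\ket{A'}|0>\ket W|0>$; and (v) a final Hadamard on $\reg A$, yielding $\ket A|r>\ket{A'}|0>\ket W|0>$ with $r=0$ for $p<\tfrac12$ and $r=1$ for $p>\tfrac12$. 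The ancillas $\reg{A'}$ and $\reg W$ end in $\ket|0>$, so this is the required transduction.

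The main obstacle is the query count, which must come out to exactly $1+1/(2\delta)$ rather than the naive $2+1/\delta$ one obtains by charging each $\Oref$-call of $S$ as two $O$-calls and adding the bracketing queries of steps (ii) and (iv). Accounting for this requires invoking the Las~Vegas nature of query complexity in the transducer framework: since $\Oref=O\,\mathrm{Ref}_{\ket|0>\ket|0>}\,O^*$ is built symmetrically from $O$ and $O^*$ sandwiching a fixed reflection, and the purifier's walk already alternates between $\ket|0>\ket|0>$ and $\ket|\phi>$, the $O$-queries internal to $S$ together with those of steps (ii) and (iv) telescope in the Las~Vegas count to $1+1/(2\delta)$. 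Transduction complexity inherits $\OO(1/\delta)$ from $S$ plus $\OO(1)$ for the Hadamards and controls, and the space complexity adds only the single output qubit.
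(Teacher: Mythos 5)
Your construction is exactly the paper's: simulate $\Oref = O\,\mathrm{Ref}\,O^*$, then sandwich a controlled application of the purifier's transduction action between a Hadamard, a controlled-$O$, a controlled-$O^*$, and a final Hadamard, so that the sign from \rf{thm:mainIntro} becomes a bit via phase kickback. The gap is in the only step that actually needs an argument: the query count. Your stated mechanism -- that the internal $O$-queries of the purifier ``telescope'' with those of steps (ii) and (iv) because $\Oref$ is built symmetrically from $O$ and $O^*$ -- is not what happens, and no such cancellation occurs. The correct mechanism is the quadratic scaling of Las Vegas query complexity with the amplitude of the branch on which a query is made (\rf{eqn:linearityScaling}). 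Concretely: the controlled-$O$ of step (ii) and the controlled-$O^*$ of step (iv) each act only on the branch where the output qubit is $1$, which has squared norm $\tfrac12$, so together they contribute $\tfrac12+\tfrac12=1$, not $2$. The purifier is likewise invoked on that same branch, so its total query state to $\Oref$ has squared norm $\tfrac12\cdot\tfrac1{2\delta}$; since each $\Oref$ costs $2$ calls to $O\oplus O^*$ on a state of that squared norm, the contribution is $2\cdot\tfrac12\cdot\tfrac1{2\delta}=\tfrac1{2\delta}$. Summing gives $1+1/(2\delta)$. This is precisely how the paper computes it, via the composition formula of \rf{thm:transducerComposition} applied to $S_A\circ S'\circ S_{\mathrm{ref}}$, where $S_A$ and $S_{\mathrm{ref}}$ are the canonical transducers obtained from the outer circuit and from the two-query implementation of $\Oref$ by \rf{cor:algorithmToTransducer}.

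A secondary, more formal omission: you describe the outer procedure as a circuit calling the purifier as a subroutine, but the statement asks for a single \emph{transducer}. The paper closes this by converting the outer circuit and the $\Oref$-simulator into canonical transducers (\rf{cor:algorithmToTransducer}) and forming the functional composition $S_A\circ S'\circ S_{\mathrm{ref}}$; without some such step your object is an algorithm with a transducer oracle, not a transducer, and the complexity claims have no formal meaning. Once you replace the telescoping heuristic with the amplitude-scaling computation above and phrase the whole thing as a composed transducer, the proof matches the paper's.
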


The transducer from \rf{thm:mainIntro} is exact, but it uses infinite space.
Luckily, allowing small perturbations, it is not hard to restrict it to finite (and small!) space.
The following result states our main result in the form close to \rf{thm:ErrorReduction}
(see \rf{thm:mainFormal} for the formal statement):

\begin{thm}[Main Result, Informal]
\label{thm:main}
Let $\eps, \delta>0$ be real parameters.
There exists a transducer (purifier) that, given query access to any input oracle $\Oref$ satisfying~\rf{eqn:OrefGeneral}, performs the following transduction
with perturbation $\eps$:
\[
\ket |\phi> \transduce{}
\begin{cases}
\ket |\phi>, &\text{if $p\le \tfrac 12 - \delta$;}\\
-\ket |\phi>, &\text{if $p\ge \tfrac 12 + \delta$;}
\end{cases}
\]
for any normalised $\ket|\phi> \in \spn\sfigA{\ket|0>\ket|\phi_0>, \ket|1>\ket|\phi_1>}$ (including, but not limited to, the specific vector $\ket|\phi>$ from~\rf{eqn:InputOracle}).
The transducer has query complexity at most $1/(2\delta)$ and transduction complexity $\OO(1/\delta)$.
It uses $\OO(\ell)$ additional qubits and its iteration time is $\OO(\ell)$, where $\ell = \log \frac1\delta + \log\log \frac1{\eps}$.
\end{thm}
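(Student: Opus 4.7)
The plan is to derive \rf{thm:main} as a finite-space truncation of the infinite-dimensional purifier from \rf{thm:mainIntro}. Since that purifier acts on an ancillary counter by only two increments and two decrements per iteration, it is natural to replace the unbounded counter by an $\ell$-qubit register storing positions in an interval of length $N = O((1/\delta)\log(1/\eps))$, which gives the claimed $\ell = O(\log(1/\delta) + \log\log(1/\eps))$ ancillary qubits. The $\pm 1$ shifts can be made unitary on the truncated register by wrapping cyclically or by stalling at the boundary via an auxiliary flag qubit, either choice yielding a well-defined transducer of the desired space and iteration-time complexity.

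The heart of the argument is then to show that this truncation changes the transduction action by at most $\eps$. First, I would establish that in the ideal, infinite-dimensional execution the amplitude carried on counter positions of absolute value greater than $N$ has norm at most $\eps$. Since the walk on the line is biased by $\Omega(\delta)$ whenever $|p - \tfrac12| \ge \delta$, its stationary amplitude envelope decays geometrically with ratio $1 - \Omega(\delta)$, so $N = \Theta((1/\delta)\log(1/\eps))$ suffices. Second, I would invoke the transducer perturbation framework of \rf{sec:perturbed} to convert this tail bound into an $\eps$ bound on the norm distance between the transduction actions of the ideal and the truncated transducers, on the whole subspace $\spn\sfigA{\ket|0>\ket|\phi_0>,\ket|1>\ket|\phi_1>}$. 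The query complexity $\le 1/(2\delta)$ and transduction complexity $O(1/\delta)$ are inherited from \rf{thm:mainIntro} (worst case over $p$ with $|p-\tfrac12|\ge \delta$), and an increment or decrement on an $\ell$-qubit counter can be realised by $O(\ell)$ two-qubit gates using a standard ripple-carry construction, giving the stated iteration time.

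The main obstacle is to make the tail estimate rigorous for coherent amplitudes rather than classical random-walk probabilities: one has to analyse the fixed vector of the one-step walk operator restricted to the counter axis and show that its mass decays geometrically along the line, uniformly over all admissible $\Oref$ and all valid $\ket|\phi> \in \spn\sfigA{\ket|0>\ket|\phi_0>,\ket|1>\ket|\phi_1>}$. I expect this to reduce to an explicit diagonalisation of the step operator on its invariant two-dimensional blocks indexed by the counter coordinate, from which the decay rate $1 - \Omega(\delta)$ can be read off; feeding the resulting amplitude tail bound into the perturbation lemma will then close the argument and give the claimed $\eps$ bound.
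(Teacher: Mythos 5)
Your proposal is correct and follows essentially the same route as the paper: truncate the counter to $D=\Theta\sA[\frac1\delta\log\frac1\eps]$ values (the paper wraps the increments modulo $D$), observe that the geometric decay $\gamma^{-j}\le(1-\delta)^{j}$ of the catalyst makes the boundary amplitude at most $2(1-\delta)^{D-1}\le\eps$, and absorb it as a perturbation in the sense of \rf{sec:perturbed}. The ``main obstacle'' you flag is already resolved by the time one reaches this point: the fixed vector of the step operator is exactly the catalyst computed explicitly in \rf{thm:simple} (namely $\sum_j\gamma^j\ket|j>$ or $\sum_j(-\gamma)^{-j}\ket|j>$, extended to the general oracle via the two orthogonal invariant rays of \rf{fig:walkGeneral}), so no further diagonalisation is needed and the truncation argument closes immediately.
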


\mytable
\label{tbl:purifier-comparison}
=====
\negmedskip
    \[
    \!\!\!
    \def\arraystretch{1.25}
    \begin{tabular}{r||c|c|c}
         Complexity & Purifier from~\cite{belovs:variations} & Purifier from \cite{belovs:taming} & Purifier of \rf{thm:main} (new) \\
         \hline
        Space & - &$\OO(s\frac{1}{\delta^2}\log\frac{1}{\eps})$ & $s+\OO(\log\frac{1}{\delta} + \log\log\frac{1}{\eps})$ \\
        Iteration time & - & $\OO(s\frac{1}{\delta^2}\log\frac{1}{\eps})$ & $\OO(\log\frac{1}{\delta} + \log\log\frac{1}{\eps})$ \\
        Query & $\OO(\frac{1}{\delta^2})$ & $\OO(\frac{1}{\delta^2})$ & $\frac{1}{2\delta}$\\
        Transduction & - & $\OO(\frac{1}{\delta^2})$ & $\OO(\frac{1}{\delta})$\\
    \end{tabular}
    \]
-----
A side-by-side comparison of our purifier in \rf{thm:main} with the ones from~\cite{belovs:variations} and~\cite{belovs:taming}. 
The one from~\cite{belovs:variations} is in the form of dual adversary, and only query complexity applies to it.
Ref.~\cite{belovs:variations} does not explicitly evaluate the query complexity, only mentioning that it is $\OO(1)$ for all $\delta = \Omega(1)$, but it is not hard to work out the required estimate.
Iteration time for both transducers in the last two columns is with respect to the circuit model.
=====

Here, the perturbation is a technical term similar to the $\eps$-approximation in \rf{thm:majorityVoting} and~\ref{thm:ErrorReduction}.
Also, iteration time here is with respect to the circuit model, that means that time is defined as the number of 1- and 2-qubit operations.
Indeed, as we mentioned previously, we only have to make two increment and two decrement operations on each iteration.
\rf{tbl:purifier-comparison} provides a side-by-side comparison of the new purifier and the previous ones from~\cite{belovs:variations} and~\cite{belovs:taming}.
The purifiers from~\cite{belovs:variations} and~\cite{belovs:taming} use the state-generating input oracle, but this does not affect the asymptotics in the corresponding columns.
The inverse quadratic dependence on $\delta$ in~\cite{belovs:variations} and~\cite{belovs:taming} was not such a crucial issue, as the purifier could be composed with amplitude estimation like in~\rf{cor:ErrorReduction}, but that would further increase space and iteration time complexity.

There is another subtle difference between the purifiers from~\cite{belovs:taming} and \rf{cor:IntroAlternativeOracle}.
The purifier from~\cite{belovs:taming} only executed the input oracle on the initial state $\ket|0>$.
The purifier from \rf{cor:IntroAlternativeOracle} achieves quadratic improvement by executing the input oracle on initial states different from $\ket|0>$.
This is not unlike the distinction between majority voting of \rf{thm:majorityVoting} and amplitude estimation from \rf{thm:estimation}, the latter also obtaining quadratic improvement by executing the input oracle on non-$\ket|0>$ initial states.
This is not an issue if the input oracle $O$ is implemented by a usual quantum subroutine, whose complexity is the same from every initial state.
On the other hand, if $O$ is a transducer itself, this might cause some complications as it is not sufficient to estimate its complexity only on the initial state $\ket|0>$.

We can also compare the new purifier in \rf{thm:main} with the QSP-based error reduction procedure of \rf{thm:ErrorReduction}.
The main advantage of the purifier is that its query complexity does not depend on $\eps$.
For a fair comparison, it makes sense to assume that the QSP-based error reduction has $\OO\sA[\frac1\delta\log\frac1\eps]$ iterations, each iteration taking $\OO(1)$ time, and the number of iterations of the purifier is given by its transduction complexity (see \rf{thm:transducer-algorithm}).
In this sense, the query complexity and the number of iterations in both approaches has the same inverse linear dependence on $\delta$.
The space and iteration time complexities of the QSP-based error reduction is clearly impossible to beat, but these complexities of the new purifier are so small that they are effectively negligible.

The main drawback of the purifier is that in order to get it working, the program and the subroutine have to be converted into transducers and composed similarly to \PointRef 2 on Page~\pageref{point2}.
The QSP-based algorithm is advantageous if the overhead of this conversion kills the benefits of smaller query complexity.
See also discussion in \rf{sec:discussion}.

Another minor drawback of transducers is that implementation of their transduction action (like in \PointRef 3 on Page~\pageref{point3}, see also \rf{thm:transducer-algorithm}) has bad inversely quadratic dependence on the required precision $\eps$.
Therefore, a natural solution is to implement the transduction action with constant precision, resulting in bounded error, and then either measure the output and use classical majority voting, or, if this is not possible, use the QSP-based error reduction to reduce the error to tolerable value.
The benefit is that this error reduction is needed only once for the whole algorithm, and not for each individual subroutine.

\subsection{Consequences and Open Problems}
\label{sec:discussion}
One set of consequences of our results are similar to the ones obtained in~\cite{belovs:taming}.
We will mention one simple application, \rf{thm:composition} below, which is proven in~\cite{belovs:taming} using the old purifier construction.
The proofs go through in exactly the same way, only the transducer gets replaced, so we do not repeat them here.

\begin{thm}
[{\cite[Theorems 3.15 and 3.16]{belovs:taming}}]
\label{thm:composition}
Let $A$ and $B$ be quantum algorithms in the circuit model that evaluate the
functions $f$ and $g$, respectively, with bounded error.
Let $L$ be the query complexity of $A$, and $T(A)$ and $T(B)$ be the time complexities of $A$ and $B$.
Then, 
\begin{itemize}
\item[(a)]
there exists a quantum algorithm in the circuit model that evaluates the composed function $f\circ g$ with bounded error in time complexity
\begin{equation}
\label{eqn:compositionCircuit}
\OO(L)\sA[T(A) + T(B)].
\end{equation}
\item[(b)]
There exists a quantum algorithm using quantum random access memory that evaluates the composed function $f\circ g$ with bounded error in time complexity
\begin{equation}
\label{eqn:compositionQRAG}
\OO\sA[T(A) + L\cdot T(B)]
\end{equation}
assuming random access and elementary arithmetic operations on memory addresses (word RAM operations) can be done in time $\OO(1)$.
\end{itemize}
\end{thm}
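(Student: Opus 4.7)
The plan is to use the transducer-based composition framework sketched in \rf{sec:spanPrograms}, substituting the new purifier of \rf{thm:main} for the old one from \cite{belovs:taming}. Three ingredients are needed: (i) convert the algorithms $A$ and $B$ into transducers $S_A$ and $S_B$ preserving their query and time complexities; (ii) purify $S_B$ using \rf{cor:IntroAlternativeOracle} so that its transduction action becomes the exact function $g$; and (iii) compose $S_A$ with the purified transducer and convert the composition back into a bounded-error quantum algorithm via the transducer-to-algorithm conversion.

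First I would convert $B$ into a transducer $S_B$ whose transduction action on the initial state coincides with the state-generating oracle of \rf{eqn:InputOracle} for $g$, with gap $\delta = \Omega(1)$ inherited from the bounded error of $B$. Applying the purifier $P$ of \rf{cor:IntroAlternativeOracle} with $S_B$ as its input oracle gives, for every input $x$, an exact transduction $\ket|0> \transduce{} \ket|g(x)>$; because $\delta = \Omega(1)$, $P$ has query complexity $O(1)$ and transduction complexity $O(1)$, so no logarithmic factor is incurred at this step. This is precisely the reason for using a transducer-based purifier rather than directly composing with \rf{cor:ErrorReduction}.

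Next, I would convert $A$ into a transducer $S_A$ with query complexity $O(L)$ and iteration time $O(T(A))$, and feed it the purified $S_B$ as its input oracle. By the standard composition rules for transducers (analogous to \PointRef 2 for span programs), the composed object is a transducer for $f\circ g$ whose Las Vegas query complexity is $O(L)$ and whose iteration time in the circuit model is $O(T(A) + T(B))$, since each iteration runs one step of $S_A$ together with an $O(1)$ number of steps of $S_B$ invoked through the purifier. Implementing this composed transducer as a bounded-error algorithm via the transducer-to-algorithm conversion then yields total time $O(L)\sA[T(A) + T(B)]$, proving part (a).

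For part (b), the idea is to use QRAM to avoid replaying the full circuit of $A$ between consecutive invocations of $B$: intermediate states of $S_A$ are written to and retrieved from QRAM at the appropriate moments, so across all iterations $A$ contributes $T(A)$ rather than $L\cdot T(A)$, while each of the $O(L)$ invocations of $B$ still costs $O(T(B))$, yielding the bound $O\sA[T(A) + L\cdot T(B)]$. The main obstacle here will be the QRAM bookkeeping and verifying that the number of calls to $S_B$ implied by the transducer-to-algorithm conversion is controlled by the Las Vegas query complexity $L$ of $S_A$ rather than by its worst-case query count; this is exactly the analysis carried out in \cite[Theorems~3.15 and 3.16]{belovs:taming}, which transfers verbatim once the new purifier is substituted for the old one.
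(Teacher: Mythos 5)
Your proposal matches the paper's approach: the paper gives no self-contained proof of \rf{thm:composition}, stating only that the proofs of Theorems 3.15 and 3.16 of \cite{belovs:taming} go through verbatim once the old purifier is replaced by the new one, and your outline (convert $A$ and $B$ to transducers, purify $S_B$ with constant overhead since $\delta=\Omega(1)$, compose, and convert back via the transducer-to-algorithm theorem) is precisely that argument. The only point worth keeping in mind is that for an actual circuit-model algorithm you need the finite-dimensional truncated purifier of \rf{thm:main} rather than the infinite-dimensional one of \rf{cor:IntroAlternativeOracle}, but its iteration time is dominated by the other terms, so this does not affect the stated bounds.
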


Not only does our new purifier improve the space complexity of the composed quantum algorithm referred to in \rf{thm:composition}, our improved dependence on $\delta$ gives better constant-factor overhead, which could matter a lot in algorithms that are obtained from super-constant-depth composition (using extensions of \rf{thm:composition} for larger recursion depth).
Also, iteration time complexity of our new purifier is so small that it is dominated by other terms in~\rf{eqn:compositionCircuit} and~\rf{eqn:compositionQRAG}.
\medskip

While the expression in~\rf{eqn:compositionQRAG} seems very natural, the one in~\rf{eqn:compositionCircuit} is not so, which is a consequence of the way iteration time behaves under composition of transducers.
\eqrf{eqn:compositionCircuit} is optimal if $T(A) \le T(B)$, though.

\begin{open}
Is it possible to improve~\rf{eqn:compositionCircuit} to $\OO\sA[T(A) + L\cdot T(B)]$ in the circuit model?
\end{open}

Another application of our main result, \rf{thm:mainIntro}, is a proof that $\mathsf{QMA} = \mathsf{QMA}_1$ assuming that the one-sided-error variant of $\mathsf{QMA}$ is allowed to use infinite space~\cite{jeffery:QMA1}.
More generally, it seems quite useful to have an effectively exact representation of bounded-error quantum algorithms, and we suspect this has implications beyond composition. 

Let us mention another open problem, of a more technical nature.
We demonstrated that the query complexity of the transducer with the reflecting oracle in \rf{thm:mainIntro} is the best possible, including the constant factor.
For the state-generating variant \rf{cor:IntroAlternativeOracle}, the picture is not so clear.
Since reflection can be implemented using state generation, the former is more powerful than the latter.
On the other hand, the latter is more relevant from the practical point of view.
It is clear that $\Omega(1/\delta)$ query complexity is necessary, otherwise it would break the lower bound in~\cite{nayak:approximatingMedian} mentioned above, but the precise value is not clear.

\begin{open}
Find out the exact value of the query complexity required to solve the problem of \rf{cor:IntroAlternativeOracle}.
The same question can be also asked if we require the purifier to query the state-generating input oracle only on the initial state $\ket|0>$.
\end{open}

For a high-level didactic overview of some of the concepts of this paper, as well as previous related work, we refer the reader to~\cite{jeffery2024SIGACT}.

\subsection{Intuition}
\label{sec:intuition}
In this section, we give an informal overview of the construction behind \rf{thm:mainIntro} and its connection to the usual classical majority voting (see also \cite{jeffery2024SIGACT}).
We construct our purifier as an electric quantum walk on a line.
Indeed, as mentioned in \rf{sec:spanPrograms}, every electric quantum walk gives rise to a transducer whose transduction action is exact, which we take advantage of here.

Before describing our quantum walk, let us revisit classical majority voting.
Assume we are given a biased coin that outputs $-1$ with probability $1-p$, and $+1$ with probability $p$.
We want to decide whether $p\le \tfrac12 -\delta$ (negative case), or $p\ge \tfrac12 +\delta$ (positive case).
We toss the coin a total of $K$ times.
If the sum of all the outcomes is positive, we accept; otherwise, we reject.
By Hoeffding's inequality, the error probability is exponentially small in $K$ (similar to \rf{thm:majorityVoting}).

\myfigure
\label{fig:line2directions}
=====
\[
\def\nodegap{1.35}
\begin{tikzpicture}
\foreach \x in {-4,...,4}
    \draw (\x*\nodegap, 0) node[draw,shape=circle,inner sep=7] (\x) {} node {$\x$};
\foreach \x [evaluate={\y=int(\x+1)}] in {-4,...,3}
    \draw (\x) -- (\y);
\draw (-4) -- +(-1,0) +(-\nodegap,0) node{$\cdots$};
\draw (4) -- +(1,0) +(\nodegap,0) node{$\cdots$};
\end{tikzpicture}
\]
-----
Majority voting can be seen as a random walk on the infinite line.
=====

This can be seen as a random walk on the infinite line (see \rf{fig:line2directions}), where we start in the vertex $0$, and, on each step, move to the left with probability $1-p$ and to the right with probability $p$.
The current vertex stores the total sum of all the outcomes.
At the end, we accept if we are to the right of the vertex 0.
Using the standard convention that the probability of following an edge is proportional to its weight, this random walk is captured by assigning the weight 
\begin{equation}
\label{eqn:randomWalkWeights}
w_j = \s[\frac{p}{1-p}]^j
\end{equation}
to the edge between the vertices $j$ and $j+1$.
Indeed, the probability of going right with this assignment of weights is precisely $p/(1-p)$ times the probability of going left.

Of course, this random walk still has error, albeit exponentially small.
Informally, we get exactness by running this walk for \emph{infinitely long time}.
Then, if $p<1/2$, we will be at $-\infty$; and if $p>1/2$, we will be at $+\infty$; both with probability 1.

For simplicity of implementation, we cut the line in half, only considering the vertices with non-negative labels.
They form an infinite ray (see \rf{fig:line}).
In this graph, the informal exactness statement, made above for the infinite line, can be made precise using \emph{transience} in the sense of P{\'o}lya~\cite{polya:transience} (see also~\cite[Chapter 2]{doyle:walksElectric}).
If $p<1/2$, which is equivalent to $p/(1-p)<1$, this walk is \emph{recurrent}, meaning that it returns to the vertex 0 infinitely many times with probability 1.
If $p>1/2$, which corresponds to $p/(1-p)>1$, this walk is \emph{transient}, meaning that it will escape to infinity with probability 1.
Our quantum walk can be seen as deciding which is the case.

\myfigure
\label{fig:line}
=====
\[
\def\nodegap{1.7}
\begin{tikzpicture}
\foreach \x in {0,...,5}
    \draw (\x*\nodegap, 0) node[draw,shape=circle,inner sep=7] (\x) {} node {$\x$};
\foreach \x [evaluate={\y=int(\x+1)}, evaluate={\z=int(2*\x)}] in {1,2,3,4}
    \draw (\x) --
        node[below,blue] {$\ket|\x>$}
        node[above] {$\gamma^{\z}$}
            (\y);
\draw (0) --node[below,blue] {$\ket|0>$} node[above] {1} (1);
\draw (5) -- +(1,0) +(\nodegap,0) node{$\cdots$};
\end{tikzpicture}
\]
-----
The quantum walk on the infinite ray used by our purifier.
The number above the edge specifies its weight as in~\rf{eqn:randomWalkWeights}, where we use the same parameter $\gamma = \sqrt{p/(1-p)}$ we will be using in \rf{sec:infiniteDimensional} and beyond.
Below each edge, the corresponding state of the quantum walk is written.
=====

Let us briefly describe the construction of the quantum walk, more detail being postponed till \rf{sec:SimpleInfinite}.
The quantum walk happens on edges.
The state $\ket |j>$ corresponds to the edge between the vertices $j$ and $j+1$.
The quantum walk is composed of \emph{local reflections}.
Each vertex $j>0$ corresponds to the local reflection about the state
\[
\sqrt{w_{j-1}} \ket|j-1> + \sqrt{w_j} \ket|j>,
\]
which is proportional to
\begin{equation}
\label{eqn:walk-states}
\sqrt{1-p}\ket |j-1> + \sqrt{p} \ket|j>.
\end{equation}
This state is similar to the state $\phi$ in~\rf{eqn:InputOracle}, except it does not have the ``garbage'' states $\phi_0$ and $\phi_1$.
For simplicity, we will assume that we can perform a reflection about the state~\rf{eqn:walk-states}.
This is not just to gain intuition: we will make the same assumption in \rf{sec:SimpleInfinite}, where we analyse the same quantum walk with the same simplified oracle formally.
The general case of \rf{thm:mainIntro} turns out to follow quite easily from this special case, see \rf{sec:generalInfinite}.

Returning to quantum walk, one iteration thereof is given by $R_2R_1$, where $R_1$ is the composition of all local reflections about the odd vertices, and $R_2$ about the even vertices.
To upper bound the number of times the algorithm should call $R_2R_1$ -- that is, the \emph{hitting time} -- while ensuring that the algorithm distinguishes between the transient (positive) and recurrent (negative) cases, we need to upper bound the corresponding complexity for each case (for details, see~\cite{belovs:electicityQuantumWalks}).
We will assume here for simplicity that $\delta = \Omega(1)$.
The complexity of the quantum walk in the negative case $p\le \tfrac12 -\delta$ is given by the total weight $W$ of the graph.
It equals
\[
W = \sum_{j=0}^\infty \gamma^{2j} = \OO(1)
\]
since $\gamma = 1 - \Omega(1)$ in this case.

In the positive case $p\ge \tfrac12 +\delta$, the complexity is given by the effective resistance $R$ between the initial and the marked vertices, where resistance of each edge is given by the inverse of its weight.
The marked vertex in this case is infinity, hence, the total resistance is the sum of the resistances of all edges:
\[
R = \sum_e\frac{1}{w_e}=\sum_{j=0}^\infty \frac{1}{\gamma^{2j}} = \OO(1)
\]
since $\gamma = 1 + \Omega(1)$ in this case.

Quite remarkably, the number of iterations of the quantum walk is finite despite the graph being infinite and the corresponding random walk requiring infinitely many steps to reach $+\infty$.
Thus, this gives an instance of a 1-vs-infinity separation between quantum and classical hitting times.
This seems to contradict the popular belief that electric quantum walks cannot give better than quadratic speed-up compared to random walks.
This belief is rooted in the observation that the classical hitting time is bounded by $\OO(WR)$.
However, there is a difference between classical and quantum $W$s and $R$s.
In the classical case, the product $WR$ is computed for each graph, and then maximised over all graphs.
In the quantum case, $W$ is maximised over all negative cases, and $R$ over all positive cases, and only then they are multiplied.
This might cause significant problems in the design of quantum walks~\cite{belovs:learningKdistPrior}.
But in this paper, this distinction works to our advantage.

\subsection{Organisation of the Paper}

Let us explain the organisation of the remainder of this paper.
In \rf{sec:prelim}, we formally define and give the necessary background on transducers. 
In \rf{sec:infiniteDimensional}, we analyse the quantum walk on the infinite ray from \rf{fig:line}.
At first, in \rf{sec:SimpleInfinite}, we assume we have access to an input oracle implementing a reflection about the states in~\rf{eqn:walk-states}.
In \rf{sec:generalInfinite}, we assume the input oracle $\Oref$ from~\rf{eqn:OrefGeneral} with the ``garbage'' states $\phi_0$ and $\phi_1$.
It turns out that the garbage states can be taken care of by performing two quantum walks isomorphic to the simple one from \rf{sec:SimpleInfinite}.

In \rf{sec:finiteDimensions}, we describe how the above purifier can be adopted to work in the finite-dimensional settings.
First, in~\rf{sec:time-efficient}, we consider a time-efficient implementation that introduces some small error.
Next, in \rf{sec:query-efficient}, we describe a variant of the purifier that remains exact, but only works in the settings of the state conversion problem as defined in~\cite{belovs:variations}.
In \rf{sec:lower-bound}, we show that the query complexity of the purifier from \rf{thm:mainIntro} cannot be improved.
In \rf{sec:QSP}, we describe error reduction procedure based on quantum signal processing, proving \rf{thm:ErrorReduction}.
Finally, in \rf{sec:non-Boolean}, we describe how to extend the results to non-Boolean functions using the Bernstein-Vazirani algorithm~\cite{bernstein:quantumComplexity}.

\mycutecommand\Ic{I^\circ}
\mycutecommand\Ib{I^\bullet}
\mycutecommand\Iw{I^\circ}
\mycutecommand\cHw{\cH^\circ}
\mycutecommand\cHq{\cH^\bullet}
\mycutecommand\cHt{\cH^\uparrow}

\section{Preliminaries}
\label{sec:prelim}

In this section, we describe the key notions on quantum Las Vegas query complexity from~\cite{belovs:LasVegas} and  transducers from~\cite{belovs:taming}.

\subsection{Quantum Query Algorithms and Las Vegas Query Complexity}
\label{sec:prelimAlgorithms}

We define quantum query algorithms following the standard form of alternating queries and unitaries~\cite{buhrman:querySurvey}, but we allow the input oracle to be an arbitrary unitary~\cite{belovs:variations, belovs:LasVegas}.

\begin{defn}
\label{defn:queryAlgorithm}
A \emph{quantum query algorithm} is a unitary transformation $A = A(O)$ in some space $\cH$ that has the form
\begin{equation}
\label{eqn:prelimAlgorithm}
A(O) = U_Q\, \tO\, U_{Q-1}\, \tO\, U_{Q-2}\, \tO\, \cdots\, U_2\, \tO\, U_1\, \tO\, U_0.
\end{equation}
Here, $O\colon \cM\to\cM$ is the \emph{input oracle}, which can be an arbitrary unitary; $U_t$ are arbitrary unitaries in $\cH$, and $\tO$ is the \emph{query operator} defined as follows.

The space of the algorithm is decomposed as $\cH = \cHw \oplus\cHq$, where the input oracle acts only on $\cHq$.
For that, we assume $\cHq = \cH^\uparrow\otimes \cM$ for some space $\cH^\uparrow$.
Then, the query operator is
\begin{equation}
\label{eqn:queryOperator}
\tO = \Iw \oplus (I\otimes O),
\end{equation}
where $I^\circ$ and $I$ are the identities on $\cH^\circ$ and $\cH^\uparrow$, respectively.
\end{defn}

We call the resulting unitary $A(O)$ in~\rf{eqn:prelimAlgorithm} the \emph{action} of the algorithm (for the particular input oracle), while we usually reserve the term \emph{algorithm} to the specific product on the right-hand side of the same equation.

Not all query algorithms follow this strict form.
For instance, it is not always the case that all queries are equal to $\tO$.
Sometimes, the controls are different, or the input oracle is applied to different registers.
However, all such algorithms can be converted into the form~\rf{eqn:prelimAlgorithm} using standard techniques.

Note that in \rf{defn:queryAlgorithm} we only consider queries to $O$, but not to its inverse $O^*$, i.e., the algorithm has \emph{unidirectional} access to the input oracle.
This does not matter for oracles that are self-inverse, like the reflecting oracle $\Oref$ from~\rf{eqn:Oref}.
However, for other oracles, like the state-generating oracle $O$ from~\rf{eqn:InputOracle}, we would like to allow \emph{bidirectional} access.
We achieve this by replacing the input oracle $O$ with $O\oplus O^*$.
This allows the algorithm to apply either $O$ or $O^*$ based on the value of some control register.
In this paper, we allow bidirectional access to the state-generating input oracle.

\medskip

\myfigure
\label{fig:queryAlgorithm}
=====
\newcommand{\OneIteration}[1]{
    \edef\indxx{#1}
    \begin{scope}[shift={(4*\indxx-4,0)}]
        \draw (0.9,0.35) rectangle (2.3, 1.65) node[pos=0.5] {\Large $I\otimes O$};
        \draw (3.1,0.15) rectangle (4, 3.35) node[pos=0.5] {\Large $U_{\indxx}$};
        \draw[\witnesscolor,->] (0, 2.65) to node[above]{$\psi_{\indxx}^\circ$} (3.1,2.65);
        \draw[\nonquerycolor,->] (0, 1) to node[above]{$\psi_{\indxx}^\bullet$} (0.9,1);
        \draw[\querycolor,->] (2.3, 1) to (3.1,1);
    \end{scope}
}
\[
\begin{tikzpicture}[every node/.style={font=\scriptsize}, every path/.append style={thick,->},scale=0.95]
\draw[\xicolor] (-1.9,1.9) to node[above]{$\xi$} (-0.9,1.9);
\draw (-0.9,0.15) rectangle (0, 3.35) node[pos=0.5] {\Large $U_{0}$};
\OneIteration{1}
\OneIteration{2}
\OneIteration{3}
\draw[\taucolor] (12,1.9) to node[above]{$\tau$} (13,1.9);
\end{tikzpicture}
\]
-----
A graphical illustration of a quantum query algorithm from~\rf{eqn:prelimAlgorithm} with $Q=3$ queries.
The algorithm interleaves input-independent unitaries $U_t$ with queries $\tO = \Iw \oplus (I\otimes O)$.
The intermediate state $\psi_t$ after $U_{t-1}$ and before the $t$-th query is decomposed as $\psi_t^\circ\oplus \psi_t^\bullet$, where only the second half is processed by the input oracle.
\\
(Everywhere in this paper, arrowed wires in parallel represent systems combined via direct sum, rather than tensor product.)
=====

Now we define several complexity measures related to quantum query algorithms.
The usual definition of the query complexity of the algorithm $A$ is the number of queries it makes, i.e., $Q$ in~\rf{eqn:prelimAlgorithm}.
In particular, it does not depend on the input oracle $O$, nor the initial state $\xi\in\cH$ of the algorithm.

We now define \emph{Las Vegas query complexity}~\cite{belovs:LasVegas}, which introduces dependence on both $O$ and $\xi$.
Assume $A(O)$ is executed on some \emph{initial state} $\xi\in\cH$. 
Let 
\[
\psi_t = U_{t-1}\, \tO\, U_{t-2}\, \cdots\, U_1\, \tO\, U_0\,\xi \in \cH
\]
be the state of the algorithm right before the $t$-th query.
We decompose $\psi_t = \psi_t^\circ \oplus \psi_t^\bullet$ with $\psi_t^\circ\in\cH^\circ$ and $\psi_t^\bullet\in \cH^\bullet$.
Then, the \emph{Las Vegas query complexity} of the algorithm $A$ on the input oracle $O$ and the initial state $\xi$ is defined as 
\begin{equation}
\label{eqn:LasVegas}
L(A,O,\xi) = \sum_{t=1}^Q\normA |\psi_t^\bullet|^2.
\end{equation}
In other words, the Las Vegas query complexity is the sum, over all queries, of the squared norms of the parts of the state to which the query is \emph{actually} applied.
In particular, $L(A,O,\xi)\le Q$ for all normalized $\xi$.

For intuition, note that the Las Vegas (i.e. expected) query complexity of a randomized classical algorithm is the sum over all steps, where a query might be made, of the probability of making a query at that step. Moreover, this notion of quantum query complexity is meaningful in the sense that we can convert $A$ to a (bounded-error) quantum algorithm that makes $\OO(L)$ queries, where $L$ is the maximum of $L(A,O,\xi)$ over all admissible pairs of $O$ and $\xi$.
(This follows from \rf{cor:algorithmToTransducer} and \rf{thm:transducer-algorithm-query} below.)

A more precise description of the work performed by the input oracle is given by the \emph{total query state} defined as
\begin{equation}
\label{eqn:queryState}
q(A, O, \xi) 
= \sum_{t=1}^Q \ket|t>\ket|\psi^\bullet_t> 
= \bigoplus_{t=1}^Q \psi^\bullet_t
\;\in\; \bC^Q\otimes \cH^\bullet
= \bC^Q\otimes \cH^\uparrow \otimes \cM.
\end{equation}
In particular, $L(A,O,\xi) = \normA|q(A,O,\xi)|^2$.
For fixed $A$ and $O$, the mapping $\xi \mapsto q(A,O,\xi)$ is linear.

The \emph{time complexity} of the algorithm $A$ is defined as the total number of elementary operations (gates) required to implement all the unitaries $U_0,U_1,\dots,U_Q$ used by the algorithm.
This complexity depends on the particular model of quantum computation.
In this paper we consider the \emph{circuit model}, where elementary operations are all 1- and 2-qubit operations.
\medskip

\subsection{Perturbations}
\label{sec:perturbed}

We say that an algorithm performs the required transformation \emph{$\eps$-approximately} or it has \emph{imprecision} $\eps$ if the distance between the actual final state of the algorithm and the desired final state is at most $\eps$.
The following folklore lemma is useful in estimating imprecision~\cite{belovs:taming}.

\begin{lem}
\label{lem:surgery}
Assume we have a collection of unitaries $U_1,\dots, U_m$ all acting in the same vector space $\cH$.
Let $\psi_0',\dots,\psi_m'$ be a collection of vectors in $\cH$ such that
\[
\psi_t' = U_t \psi_{t-1}'
\]
for all $t$.
Let $\psi_0,\dots, \psi_m$ be another collection of vectors in $\cH$ such that $\psi_0 = \psi'_0$ and
\[
\normA | \psi_t - U_t \psi_{t-1} | \le \eps_t
\]
for all $t$.
Then,
\[
\normA | \psi_m - \psi'_m | \le \sum_{t=1}^m \eps_t.
\]
\end{lem}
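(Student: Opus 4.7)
The plan is a straightforward induction on $m$, using the triangle inequality together with the fact that unitaries preserve norms. The key observation is that the hypothesis $\|\psi_t - U_t\psi_{t-1}\| \le \eps_t$ controls how much error is introduced \emph{at each step}, while unitarity ensures that errors already accumulated in $\psi_{t-1}$ are not amplified when $U_t$ is applied.

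For the base case $m=0$, the assumption $\psi_0 = \psi_0'$ yields $\|\psi_0 - \psi_0'\| = 0$, matching the empty sum on the right. For the inductive step, assume $\|\psi_{m-1} - \psi_{m-1}'\| \le \sum_{t=1}^{m-1}\eps_t$. Then, using $\psi_m' = U_m \psi_{m-1}'$, I would write
\[
\|\psi_m - \psi_m'\| = \|\psi_m - U_m\psi_{m-1}'\| \le \|\psi_m - U_m\psi_{m-1}\| + \|U_m\psi_{m-1} - U_m\psi_{m-1}'\|
\]
by the triangle inequality. The first summand is at most $\eps_m$ by the hypothesis of the lemma, and the second summand equals $\|\psi_{m-1} - \psi_{m-1}'\|$ because $U_m$ is unitary, which is at most $\sum_{t=1}^{m-1}\eps_t$ by the inductive hypothesis. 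Adding the two contributions gives exactly $\sum_{t=1}^m \eps_t$, completing the induction.

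There is essentially no obstacle here: the only ingredient besides the triangle inequality is the norm-preserving property of the $U_t$'s, and the statement has been set up so that the ``true trajectory'' $\psi_t'$ is generated exactly by the unitaries while the ``perturbed trajectory'' $\psi_t$ is allowed to deviate by at most $\eps_t$ per step. The induction shows that deviations simply add, with no multiplicative blow-up, which is the crucial feature that makes \rf{lem:surgery} a convenient tool for bounding imprecision throughout the paper.
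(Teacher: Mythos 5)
Your proof is correct and is the standard argument: induction on $m$, with the triangle inequality handling the per-step error and unitarity of $U_m$ preventing amplification of the accumulated error. The paper itself omits the proof, citing the lemma as folklore, and your argument is exactly the intended one.
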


In applications of this lemma, $U_t$ stand for sequential sections of a quantum algorithm.
The vectors $\psi'_t$ form the sequence of states the algorithm goes through during its execution.
The vectors $\psi_t$ form an idealised sequence, which we use instead of $\psi_t'$ in the analysis.
More precisely, the analysis looks like
\begin{equation}
\label{eqn:perturbedAlgorithm}
\psi_0 
\maps{U_1} U_1\psi_1 \stackrel{\eps_1}\approx \psi_2
\maps{U_2} U_2\psi_2 \stackrel{\eps_2}\approx  \psi_3
\maps{U_3} \quad\cdots\quad \stackrel{\eps_{m-1}}\approx  \psi_{m-1}
\maps{U_m} U_m\psi_{m-1} \stackrel{\eps_m}\approx  \psi_m.
\end{equation}
Here $\stackrel{\eps}\approx$ stands for a (conceptual) \emph{perturbation} of size $\eps$, where we are allowed to arbitrarily move the state of the algorithm by a distance at most $\eps$.

\rf{lem:surgery} states that $m$ small perturbations in~\rf{eqn:perturbedAlgorithm} result in one big perturbation
\[
\psi_m \stackrel{\eps}\approx U_m\cdots U_1U_0 \psi_0
\]
of size $\eps = \sum_{t=1}^m \eps_t$.
If $\psi_m$ is the desired final state, the algorithm has imprecision $\eps$.

We can define a \emph{perturbed algorithm} $A = A(O)$ as an algorithm of the form~\rf{eqn:prelimAlgorithm} where in addition to $U_t$ it is allowed to use perturbations.
The total perturbation of the algorithm $A$ is
\begin{equation}
\label{eqn:perturbationOfAlgorithm}
\eta(A, O, \xi) = A(O)\xi - \bar A(O)\xi,
\end{equation}
where $\bar A$ is the same algorithm as $A$ but with all the perturbations removed.

The analysis like the one in~\rf{eqn:perturbedAlgorithm} is ubiquitous in quantum algorithms, where, for example, perturbations are used after executions of bounded-error subroutines (with errors reduced using \rf{thm:majorityVoting}) to replace the actual final state of the subroutine with the final state of the ideal subroutine.
Introducing perturbations inside the algorithm is helpful as it allows better control on the total query state and the Las Vegas query complexity.
In particular, we can avoid subroutines being called on some gibberish initial states whose Las Vegas complexity is hard to estimate.
However, we will not pursue this angle in this paper.

\subsection{Transducers}
\label{sec:transducer}
The transducer is a new paradigm for quantum algorithms where a single unitary --- often simple to implement in some desired sense --- gives rise to some target unitary, which we are interested in.
A transducer is very easy to define.

\begin{defn}
\label{defn:transducer}
A \emph{transducer} is a unitary $S$ acting on a direct sum ${\cal H}\oplus {\cal L}$ of vector spaces, called its \emph{public} and its \emph{private} space, respectively.
\end{defn}

We will usually just write ``a transducer $S$ in $\cH\oplus \cL$'' understanding that the first one is its public, and the second one is its private space.
As one can see from the naming, we have a case of encapsulation here: an outside algorithm only has access to the public space.
The private space is \emph{only} modified by executions of $S$.

\rf{defn:transducer} is somewhat empty as it only specifies the easy-to-implement unitary $S$.
The following theorem yields the corresponding target unitary in $\cH$.

\begin{thm}[{\cite[Theorem~5.1]{belovs:taming}}]
\label{thm:transducer}
    Let $S$ be a transducer in ${\cal H}\oplus {\cal L}$. 
    For any $\xi\in{\cal H}$, there exist unique $\tau = \tau(S,\xi)\in{\cal H}$ and $v=v(S,\xi)\in{\cal L}$ such that
\begin{equation}
\label{eqn:transDefinition}
    S\colon\xi\oplus v\mapsto \tau\oplus v.
\end{equation}
    Moreover, for a fixed $S$, the transformation $\xi\mapsto \tau(S,\xi)$ is unitary.
\end{thm}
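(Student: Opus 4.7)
The plan is to write $S$ in block form with respect to the orthogonal decomposition $\cH \oplus \cL$ and reduce the condition in \rf{eqn:transDefinition} to a linear equation on $\cL$. Labelling the blocks of $S$ as $A\colon\cH\to\cH$, $B\colon\cL\to\cH$, $C\colon\cH\to\cL$, $D\colon\cL\to\cL$, the requirement $S(\xi \oplus v) = \tau \oplus v$ unfolds into the two conditions $\tau = A\xi + Bv$ and the fixed-point equation $(I - D)v = C\xi$. So $\tau$ is determined by $v$, and everything reduces to solving this fixed-point equation in $\cL$.

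For existence and uniqueness of $v$, I would exploit the unitarity of $S$ through the standard block identities $B^*B + D^*D = I_{\cL}$ (coming from $S^*S = I$) and $CC^* + DD^* = I_{\cL}$ (coming from $SS^* = I$). Given any $v_0 \in \ker(I - D)$, the first identity yields $\|Bv_0\|^2 = \|v_0\|^2 - \|Dv_0\|^2 = 0$, so $Bv_0 = 0$; a parallel short argument using the second, together with $D^*v_0 = v_0$ (which follows once we know $Bv_0 = 0$), gives $C^*v_0 = 0$. The relation $C^*v_0 = 0$ says that $C\xi$ is orthogonal to $\ker(I-D)$ for every $\xi \in \cH$, and hence lies in the range of $I-D$, producing a solution $v$. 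The relation $Bv_0 = 0$ on $\ker(I-D)$ in turn guarantees that the resulting $\tau = A\xi + Bv$ does not depend on which particular solution was chosen, so $\tau$ is unambiguous; selecting $v \in \ker(I-D)^\perp$ pins it down uniquely as well.

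The unitarity of the induced map $\xi \mapsto \tau(S, \xi)$ then follows essentially for free. Linearity is immediate from the linearity of the fixed-point equation and the canonical choice $v \in \ker(I-D)^\perp$; alternatively, since $\tau$ is uniquely determined by $\xi$, superposing two valid triples $(\xi_i, v_i, \tau_i)$ via $S$ gives linearity directly. Norm preservation is a one-line calculation: since $S$ is an isometry, $\|\xi\|^2 + \|v\|^2 = \|S(\xi \oplus v)\|^2 = \|\tau\|^2 + \|v\|^2$, so $\|\xi\| = \|\tau\|$. Surjectivity is obtained by running the same construction for the transducer $S^{-1} = S^*$, whose transduction action inverts the original.

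The main obstacle I expect is the infinite-dimensional case: $I - D$ need not have closed range, so passing from ``$C\xi \perp \ker(I-D)$'' to ``$C\xi \in \operatorname{Range}(I-D)$'' is not automatic. In finite dimensions this step is immediate, but since the paper makes essential use of infinite-dimensional private spaces (as in \rf{thm:mainIntro}), one has to either approximate $v$ by a convergent sequence and absorb the limit into $\cL$, or arrange the setup so that $I - D$ is bounded below on $\ker(I-D)^\perp$. This is the only genuinely delicate point in the argument; the remainder is a formal manipulation of the unitarity constraints on $S$.
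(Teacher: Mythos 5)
Your argument is essentially correct, but be aware that this paper does not actually prove the statement: it is imported wholesale from \cite[Theorem~5.1]{belovs:taming}, and the only piece re-derived here is linearity, via \rf{clm:linearity} (which gets it by superposing coupled triples under $S$ --- exactly your ``alternatively'' remark). So the comparison is really with the cited source. On the merits, your block decomposition is a clean, self-contained route: the unitarity identities $B^*B+D^*D=I_{\cL}$ and $CC^*+DD^*=I_{\cL}$ do give $Bv_0=0$ and then $D^*v_0=v_0$, $C^*v_0=0$ for every $v_0\in\ker(I-D)$, which simultaneously yields existence of a catalyst (in finite dimensions, since $C\xi\perp\ker(I-D)=\ker(I-D^*)$ puts $C\xi$ in the range of $I-D$), well-definedness of $\tau$ independently of which solution $v$ is chosen, and uniqueness of the minimal-norm catalyst $v\in\ker(I-D)^\perp$. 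The last point is the right reading of the uniqueness claim: the paper itself notes just after the definition of transduction that the catalyst is in general not unique, so ``unique $v$'' must refer to this canonical choice. The unitarity of $\xi\mapsto\tau$ via isometry plus surjectivity from $S^*$ is also fine, since $S(\xi\oplus v)=\tau\oplus v$ immediately gives $S^*(\tau\oplus v)=\xi\oplus v$.

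Your infinite-dimensional caveat is not just a technicality --- the statement is literally false for general infinite-dimensional $\cL$. Take $\cH\oplus\cL=\ell^2(\mathbb{Z})$ with $\cH=\spn\{e_0\}$ and $S$ the bilateral shift $e_n\mapsto e_{n+1}$: matching coefficients in $S(\xi e_0+v)=\tau e_0+v$ forces $v_n=\xi$ for all $n\ge 1$, which is square-summable only if $\xi=0$. So the theorem should be read in finite dimensions (as in the cited source), and the infinite-dimensional purifiers appearing later (e.g.\ \rf{thm:simple} and \rf{thm:mainInfinite}) do not invoke this existence theorem but instead exhibit explicit convergent catalysts such as $\sum_{j\ge 1}\gamma^j\ket|j>$; the truncation arguments of \rf{sec:finiteDimensions} then bring everything back to finite dimensions. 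With that scope understood, your proof is complete.
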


\begin{defn}
[Transduction]
For a transducer $S$ in $\cH\oplus \cL$, its \emph{transduction action} is the unitary $S\DownTransduce_{\cH}\colon \cH\to\cH$ defined by the mapping $\xi \mapsto \tau(S,\xi)$ from \rf{thm:transducer}.
We say that $S$ \emph{transduces} $\xi$ into $\tau$ and write $\xi\transduce{S}\tau$ or $S\colon \xi\transduce{}\tau$.
The vector $v(S,\xi)$ is called the \emph{catalyst} of this transduction.
Its norm squared $W(S,\xi)=\norm |{v(S,\xi)}|^2$ is called \emph{transduction complexity}.
As in~\rf{sec:prelimAlgorithms}, we call $\xi$ the \emph{initial state}, and we use the name \emph{initial coupling} for $\xi\oplus v$.
\end{defn}

In general, the catalyst $v$ is not uniquely defined.
This is not an issue since we prove that $\xi\transduce{S}\tau$ using specific $v$ that satisfies~\rf{eqn:transDefinition}, and we call this specific $v$ the catalyst and use it to define transduction complexity (and later query complexity in \rf{defn:transComplexities}).

The following result states that we can approximately implement the transduction action $S\DownTransduce_{\cH}$ using black-box access to $S$.

\begin{thm}[Implementation of Transduction Action {\cite[Theorem~5.5]{belovs:taming}}]
\label{thm:transducer-algorithm}
    Fix a positive integer $K$, and spaces ${\cal H}$ and ${\cal L}$. There is a quantum algorithm that, for any transducer $S$ in ${\cal H}\oplus {\cal L}$
    and initial state $\xi\in {\cal H}$, produces a state $\tau'$ such that
\begin{equation}
\label{eqn:transducer-algorithm}
\normA |{\tau'-\tau(S,\xi)}|\leq 2\sqrt{\frac{W(S,\xi)}{K}}
\end{equation}
    using $K$ controlled calls to $S$, and $\OO(K)$ other elementary operations.
    The algorithm uses $\OO(\log K)$ additional qubits (besides $\cH$ and $\cL$).
\end{thm}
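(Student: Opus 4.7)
\medskip
\noindent\textbf{Proof proposal.}
The plan is to treat $S$ as the ``walk step'' of a quantum walk on a clock register of size $K$, so that $K$ (controlled) iterations of $S$, together with a cheap cyclic shift on the clock, coherently build up an approximate copy of the catalyst $v$ even though $v$ is unknown to us. Concretely, I would work in the enlarged space $\bC^K\otimes(\cH\oplus\cL)$ with $\bC^K$ spanned by $\ket 0,\ldots,\ket{K-1}$, and implement a single unitary $U$ that applies a controlled $S$ (using exactly one call per invocation) followed by the cyclic shift on $\bC^K$.

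The first key step is to identify an exact invariant of $U$. The defining property $S(\xi\oplus v)=\tau\oplus v$ from \rf{thm:transducer} shows that any state of the form $\ket{+_K}\otimes(\xi\oplus v)$, with $\ket{+_K}=\tfrac1{\sqrt K}\sum_k\ket k$, is mapped by one step of $U$ to $\ket{+_K}\otimes(\tau\oplus v)$ — the catalyst $v$ is preserved verbatim and the clock-superposition is preserved by the shift. By composing with an appropriate bookkeeping on the clock one can close this into an orbit of length $K$ whose $\cH$-component cycles through $\xi,\tau,\ldots$ and returns. The essential point is that along this orbit the $\cL$-component always equals the fixed catalyst $v$.

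The second and critical step is to compare the actual starting state $\ket 0\otimes(\xi\oplus 0)$, which we \emph{can} prepare, with the invariant state of $U$. Directly, $\|{\ket 0\otimes(\xi\oplus 0)-\ket 0\otimes(\xi\oplus v)}\|=\|v\|=\sqrt{W(S,\xi)}$, which is too large. The remedy — and the combinatorial heart of the proof — is to dilute the catalyst uniformly across the $K$ clock branches: the invariant state $\Phi^*$ is arranged so that each clock branch carries only an amplitude $1/\sqrt K$ of $v$, whence
\begin{equation*}
\bigl\|\,\ket{+_K}\otimes(\xi\oplus 0)\;-\;\Phi^*\,\bigr\|\;\le\;\frac{\|v\|}{\sqrt K}\;=\;\sqrt{W(S,\xi)/K}.
\end{equation*}

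Given this setup, the conclusion is routine. Since $U^K$ is unitary and $U^K\Phi^*=\Phi^*$, \rf{lem:surgery} (or a direct triangle inequality) yields
\begin{equation*}
\bigl\|\,U^K\bigl(\ket{+_K}\otimes(\xi\oplus 0)\bigr)\;-\;\ket{+_K}\otimes(\tau\oplus 0)\,\bigr\|\;\le\;2\sqrt{W(S,\xi)/K},
\end{equation*}
which becomes the claimed bound \rf{eqn:transducer-algorithm} after reading off the $\cH$-component. Resource counting is immediate: $K$ controlled calls to $S$, $\OO(K)$ additional 1- and 2-qubit gates implementing the clock shift and its control, and $\OO(\log K)$ ancilla qubits for the register $\bC^K$. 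The main obstacle is the second step — producing a \emph{genuine} fixed point of $U$ in which the catalyst has been spread to amplitude $1/\sqrt K$ per branch; all the subtlety is in closing the orbit consistently on both the $\cH$- and $\cL$-sides so that no telescoping term contributes more than $\|v\|/\sqrt K$.
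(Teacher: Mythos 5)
Your skeleton (a clock register of size $K$, controlled calls to $S$, one perturbation at the start and one at the end, each of size $\sqrt{W(S,\xi)/K}$, then \rf{lem:surgery}) matches the paper's proof, and your resource accounting is correct. But the step you yourself flag as ``the main obstacle'' --- arranging a state on which the walk acts exactly while the catalyst only costs $\|v\|/\sqrt K$ --- is precisely the content of the proof, and your chosen arena $\bC^K\otimes(\cH\oplus\cL)$ makes it unachievable. In that space you face a dichotomy: if each clock branch carries $\frac{1}{\sqrt K}(\xi\oplus v)$, the total $\cL$-mass is $\|v\|$, so the initial perturbation is $\sqrt{W(S,\xi)}$, not $\sqrt{W(S,\xi)/K}$; if instead each branch carries $\frac{1}{\sqrt K}\xi\oplus\frac{1}{K}v$, the two components are scaled by different factors, and $S$ is no longer guaranteed to map this to $\frac{1}{\sqrt K}\tau\oplus\frac{1}{K}v$ --- the relation~\rf{eqn:transDefinition} only transfers to couplings $\alpha\xi\oplus\alpha v$ with a \emph{common} scalar, by \rf{clm:linearity}. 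Either way, a telescoping term of order $\|v\|$ or an uncontrolled action of $S$ appears, and your claimed estimate for $\bigl\|\,\Phi^*-\text{(prepared state)}\bigr\|$ is not justified.

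The paper's resolution is to change the space to the direct sum $(\bC^K\otimes\cH)\oplus\cL$: there is a \emph{single} shared catalyst register $\cL$, not one per clock branch. One perturbs the prepared state $\bigoplus_{i=1}^K\frac{\xi}{\sqrt K}$ to $\bigl(\bigoplus_{i=1}^K\frac{\xi}{\sqrt K}\bigr)\oplus\frac{v}{\sqrt K}$ at cost $\|v\|/\sqrt K$. Each of the $K$ controlled calls to $S$ pairs a fresh branch $\frac{\xi}{\sqrt K}$ with the same shared $\frac{v}{\sqrt K}$; since both carry the common factor $\frac{1}{\sqrt K}$, linearity gives exactly $\frac{\xi}{\sqrt K}\oplus\frac{v}{\sqrt K}\mapsto\frac{\tau}{\sqrt K}\oplus\frac{v}{\sqrt K}$, and the catalyst is returned intact for the next branch. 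After $K$ steps one reaches $\bigl(\bigoplus_{i=1}^K\frac{\tau}{\sqrt K}\bigr)\oplus\frac{v}{\sqrt K}$ exactly (note this is an exact trajectory, not a fixed point of the $K$-fold iterate, contrary to your ``$U^K\Phi^*=\Phi^*$'' framing), and a second perturbation of size $\|v\|/\sqrt K$ removes the catalyst. Until you replace the tensor product $\bC^K\otimes\cL$ by a single shared $\cL$ and invoke linearity of the transduction, the bound $2\sqrt{W(S,\xi)/K}$ does not follow.
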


\begin{proof}[Proof sketch]
The space of the algorithm is $(\bC^K\otimes \cH)\oplus \cL$.
The algorithm starts in the initial state $\ket|\xi>$.
As the first step, we attach the uniform superposition over $K$ elements, resulting in the state
\begin{equation}
\label{eqn:transalgorithm1}
\sC[\frac1{\sqrt{K}} \sum_{i=1}^K \ket |i>]\otimes \ket|\xi> = \bigoplus_{i=1}^K \frac{\xi}{\sqrt{K}} \in \bC^K\otimes \cH.
\end{equation}
We perform a conceptual perturbation in the sense of \rf{sec:perturbed} and assume that we are in the state
\begin{equation}
\label{eqn:transalgorithm2}
\sC[\bigoplus_{i=1}^K \frac{\xi}{\sqrt{K}}] \oplus \frac{v}{\sqrt{K}}
\end{equation}
instead.
Now we execute $S$ in total $K$ times, each time to map $\frac{\xi}{\sqrt{K}} \oplus \frac{v}{\sqrt{K}}$ into $\frac{\tau}{\sqrt{K}} \oplus \frac{v}{\sqrt{K}}$.
On each of these executions, we take a fresh copy of $\frac{\xi}{\sqrt{K}}$ from~\rf{eqn:transalgorithm2} and the same $\frac{v}{\sqrt{K}}$ from the register $\cL$ that does not change in the entire process.
At the end, we get the state
\begin{equation}
\label{eqn:transalgorithm3}
\sC[\bigoplus_{i=1}^K \frac{\tau}{\sqrt{K}}] \oplus \frac{v}{\sqrt{K}}.
\end{equation}
We perform another conceptual perturbation and assume that we are in the state
\begin{equation}
\label{eqn:transalgorithm4}
\bigoplus_{i=1}^K \frac{\tau}{\sqrt{K}} = \sC[\frac1{\sqrt{K}} \sum_{i=1}^K \ket |i>]\otimes \ket|\tau> 
\end{equation}
instead.
We end the algorithm by detaching the uniform superposition in the first register.

By \rf{lem:surgery}, the imprecision of the algorithm is bounded by the sum of two perturbations --- between~\rf{eqn:transalgorithm1} and~\rf{eqn:transalgorithm2}, and between~\rf{eqn:transalgorithm3} and~\rf{eqn:transalgorithm4}.
Each of the perturbations is of size $\norm|\frac v{\sqrt K}| = \sqrt{\frac{W(S,\xi)}{K}}$.
\end{proof}

\subsection{Complexities of Transducers}
\label{sec:transducerComplexities}

A transducer $S$ yields a unitary $S\DownTransduce_{\cH}$ on the public space $\cH$ via \SuperRef Theorems\ref{thm:transducer} and~\ref{thm:transducer-algorithm}. 
However, $S$ is also a unitary in $\cH\oplus\cL$ itself, which can be implemented by a quantum algorithm.
We will assume we have some fixed implementation of $S$ by an algorithm, which we will denote by $A_S$ in this section to distinguish from the unitary $S$ per se.
(In other parts of the paper, we will usually identify $S$ with its implementation like we do it for $A$ in \rf{sec:prelimAlgorithms}.)

The transducer $S=S(O)$ can also use an input oracle $O$.
In this case, we may assume that its implementation $A_S=A_S(O)$ is in the form of~\rf{sec:prelimAlgorithms}.
To save brackets, we will write $v(S,O,\xi)$ and $\tau(S,O,\xi)$ instead of $v\sA[S(O),\xi]$ and $\tau\sA[S(O),\xi]$, respectively.

Now we can easily define various complexities of a transducer.

\begin{defn}
[Complexities of a transducer]
\label{defn:transComplexities}
Let $S$ be a transducer.
Its \emph{iteration time}\footnote{
    In~\cite{belovs:taming}, this was called time complexity of $S$, which we now find misleading, as it might suggest that such is the total complexity of the algorithm in~\rf{thm:transducer-algorithm}.
}, $T(S)$, is the time complexity (number of elementary operations) used by its implementation $A_S$.
Its \emph{space complexity} is the space complexity of $A_S$.

For fixed input oracle $O$ and initial state $\xi$, we already defined \emph{transduction complexity} by $W(S,O,\xi) = \|v\|^2$, where $v = v(S,O,\xi)$ is the corresponding catalyst as in~\rf{eqn:transDefinition}.
We also define the \emph{total query state} and the \emph{query complexity} by
\begin{equation}
\label{eqn:transQueryComplexity}
q(S,O,\xi) = q(A_S, O, \xi\oplus v)
\qqand
L(S,O,\xi) = L(A_S, O, \xi\oplus v).
\end{equation}
Note that the state $\xi \oplus v$ above is not normalised, but the definitions in~\rf{eqn:LasVegas} and~\rf{eqn:queryState} still make sense.
\end{defn}

Iteration time is precisely where we marry the model-dependent notion of ``time complexity'' with the nice clean mathematical model of transducers. 
The notion of time complexity we use for $A_S$ might be with respect to a particular gate set; it might include random access gates (or not); it might be the depth. 
The time complexity of the algorithm we get in \rf{thm:transducer-algorithm} will inherit this model. 
Note that neither the iteration time nor the space complexity depend on the initial state or the input oracle.

We have the following simple principle that proves the linearity part of \rf{thm:transducer}:

\begin{clm}[Linearity]
\label{clm:linearity}
If $S$ transduces $\xi_1$ into $\tau_1$ with catalyst $v_1$ and $\xi_2$ into $\tau_2$ with catalyst $v_2$, then $S$ transduces $\alpha \xi_1 + \beta \xi_2$ into $\alpha \tau_1 + \beta \tau_2$ with catalyst $\alpha v_1 + \beta v_2$ for all $\alpha,\beta\in \bC$.
In particular,
\begin{equation}
\label{eqn:linearityQuery}
q\sA[S,O,\alpha \xi_1 + \beta \xi_2] = \alpha q(S,O,\xi_1) + \beta q(S,O,\xi_2).
\end{equation}
Also, for every $\xi\in\cH$ and $\alpha\in \bC$:
\begin{equation}
\label{eqn:linearityScaling}
W(S, O, \alpha \xi) = |\alpha|^2 W(S,O,\xi)
\qqand
L(S, O, \alpha \xi) = |\alpha|^2 L(S,O,\xi).
\end{equation}
\end{clm}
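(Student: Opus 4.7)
The plan is to derive everything directly from the fact that $S$ is a linear (unitary) operator on $\cH\oplus\cL$, together with the already-stated linearity of the total query state $q$ in the initial state of an algorithm (the remark after~\rf{eqn:queryState}).

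First I would handle the transduction statement. By hypothesis we have
\[
S\bigl(\xi_1\oplus v_1\bigr) = \tau_1\oplus v_1
\qqand
S\bigl(\xi_2\oplus v_2\bigr) = \tau_2\oplus v_2.
\]
Using linearity of direct sum in both summands, $\alpha(\xi_1\oplus v_1)+\beta(\xi_2\oplus v_2) = (\alpha\xi_1+\beta\xi_2)\oplus(\alpha v_1+\beta v_2)$. Applying $S$ to this and using linearity of $S$ gives $(\alpha\tau_1+\beta\tau_2)\oplus(\alpha v_1+\beta v_2)$, which is exactly~\rf{eqn:transDefinition} for the initial state $\alpha\xi_1+\beta\xi_2$ and catalyst $\alpha v_1+\beta v_2$. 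This proves the first sentence, and also yields the ``moreover'' part of \rf{thm:transducer} (unitarity of $\xi\mapsto\tau(S,\xi)$) once combined with the uniqueness of $\tau$.

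Next I would prove~\rf{eqn:linearityQuery}. By \rf{defn:transComplexities}, $q(S,O,\xi)=q(A_S,O,\xi\oplus v(S,O,\xi))$, where $A_S$ is the fixed implementation of $S$. The remark following~\rf{eqn:queryState} tells us that for fixed $A_S$ and $O$ the map (initial state)$\mapsto q(A_S,O,\cdot)$ is linear. Substituting the initial coupling $(\alpha\xi_1+\beta\xi_2)\oplus(\alpha v_1+\beta v_2)=\alpha(\xi_1\oplus v_1)+\beta(\xi_2\oplus v_2)$, which is valid by the transduction part just established, and distributing over the linear map $q(A_S,O,\cdot)$, yields
\[
q(S,O,\alpha\xi_1+\beta\xi_2)=\alpha\,q(A_S,O,\xi_1\oplus v_1)+\beta\,q(A_S,O,\xi_2\oplus v_2)=\alpha\,q(S,O,\xi_1)+\beta\,q(S,O,\xi_2).
\]

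Finally, for~\rf{eqn:linearityScaling}, specialise the previous two paragraphs to $\xi_2=0$ and $\beta=0$: with catalyst $\alpha v(S,O,\xi)$ the transducer sends $\alpha\xi$ to $\alpha\tau(S,O,\xi)$, so $W(S,O,\alpha\xi)=\|\alpha v\|^2=|\alpha|^2 W(S,O,\xi)$. Likewise $L(S,O,\alpha\xi)=\|q(S,O,\alpha\xi)\|^2=\|\alpha q(S,O,\xi)\|^2=|\alpha|^2 L(S,O,\xi)$. The only small subtlety is that catalysts are not a priori unique, but the claim is a statement about \emph{some} choice of catalyst witnessing each transduction, and we have exhibited explicit witnessing catalysts throughout; no obstacle arises.
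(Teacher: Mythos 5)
Your proof is correct and is exactly the argument the paper intends: the paper states this claim without proof as a ``simple principle,'' and the only content is linearity of the unitary $S$ on $\cH\oplus\cL$ applied to the defining relation~\rf{eqn:transDefinition}, plus the already-noted linearity of $\xi\mapsto q(A,O,\xi)$, with the catalyst non-uniqueness handled just as you do. No gaps; your parenthetical that this also gives the unitarity in \rf{thm:transducer} is a harmless aside (one additionally needs norm preservation, which follows from unitarity of $S$ and cancelling $\|v\|^2$).
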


\mycutecommand{\vw}{v^\circ} 
\mycutecommand{\vq}{v^\bullet} 
\mycutecommand{\Sw}{S^\circ} 

\mycutecommand{\cLw}{\cL^\circ}
\mycutecommand{\cLq}{\cL^\bullet}
\mycutecommand\cLt{\cL^\uparrow}

\subsection{Canonical Transducers}
\label{sec:canonical}

We mentioned in the previous section, that we may assume that a transducer $S=S(O)$, as an algorithm in $\cH\oplus \cL$, is in the standard form of~\rf{eqn:prelimAlgorithm}.
However, if we na\"ively applied the algorithm from \rf{thm:transducer-algorithm} to this transducer, this would result in the number of queries $\OO(W\cdot Q)$, where $W$ is the transduction complexity and $Q$ is the number of queries made by the transducer.

This can be improved.
First, it turns out that transducers admit a much simpler canonical form: it is just one(!) execution of the input oracle on a part of the private space, followed by a unitary independent of the input oracle.
Second, there exists a more refined implementation of the transduction action that brings the total number of queries down to the query complexity of the transducer as defined in~\rf{eqn:transQueryComplexity}.
The canonical form of transducers is also important in their composition, which we consider in \rf{sec:composition}.

This section is organised as follows.
First, we define the canonical form.
Then, we state a refined variant of \rf{thm:transducer-algorithm} for canonical transducers.
Finally, we describe how any transducer can be put into canonical form.

\begin{defn}
[Canonical transducer]
\label{defn:canonical}
A transducer $S = S(O)$ with an input oracle $O\colon\cM\to\cM$ is in \emph{canonical form} if it satisfies the following conditions.
First, its private space is decomposed into a direct sum $\cL = \cLw\oplus \cLq$ with $\cLq = \cL^\uparrow \otimes \cM$.
Second, the transducer itself is in the form
\[
    S(O) = S^{\circ}\widetilde{O},
\]
where $\tO = I_{\cH} \oplus \Iw \oplus (I\otimes O)$ is the query operator and $\Sw$ is an arbitrary input-independent \emph{work unitary}.
Here $\Iw$ and $I$ are identities on $\cLw$ and $\cL^\uparrow$, respectively.
\end{defn}

We call $\cLw$ and $\cLq$ the \emph{work} and the \emph{query part} of the private space $\cL$, respectively.
Decomposing the catalyst $v$ appropriately: $v=\vw\oplus\vq$, the action of $S$ looks like (see also \rf{fig:canonical})
\begin{equation}\label{eqn:canonical}
    S(O):\xi\oplus v^{\circ}\oplus v^{\bullet} \maps{\widetilde{O}}
    \xi\oplus v^{\circ}\oplus (I\otimes O) v^{\bullet}
    \maps{S^\circ} \tau\oplus v^{\circ}\oplus v^{\bullet}.
\end{equation}
All the vectors (except $\xi$) in the above equation depend on $S$, $O$, and $\xi$; i.e, we have $\tau = \tau(S,O,\xi)$, and similarly for $v$, $\vw$, and $\vq$.

\myfigure
\label{fig:canonical}
=====
\negbigskip
\[
\begin{tikzpicture}[every path/.append style={thick,->}]
    \draw (2,0) rectangle (3, 3) node[pos=0.5] {$\Sw$};
    \draw (-0.2,0) rectangle (1.2, 1) node[pos=0.5] {$I\otimes O$};
    \draw[\nonquerycolor] (-1,0.5) node[above]{$\vq$} to (-0.2,0.5);
    \draw[\witnesscolor] (-1,1.5) node[above]{$\vw$} to (2,1.5);
    \draw[\xicolor] (-1,2.5) node[above]{$\xi$} to (2,2.5);
    \draw[\nonquerycolor] (3,0.5)  to (4,0.5) node[above]{$\vq$};
    \draw[\witnesscolor] (3,1.5) to (4,1.5) node[above]{$\vw$} ;
    \draw[\taucolor] (3,2.5) to (4,2.5) node[above]{$\tau$} ;
    \draw[\querycolor] (1.2,0.5) to (2,0.5);
\end{tikzpicture}
\]
\negmedskip
-----
A schematic depiction of a transducer in canonical form.
It consists of one application of the input oracle $O$ and an input-independent work unitary $\Sw$.
The catalyst $v\in\cL$ is separated into two parts $v=\vw\oplus \vq$ with $\vw\in\cLw$ and $\vq\in\cLq$. 
The first one is not processed by the oracle, and the second one is.
Note that the input oracle is not applied to the public space.
=====

The total query state and the query complexity take particularly simple form:
\[
q(S, O, \xi) = \vq(S,O,\xi)
\qqand
    L(S,O,\xi) = \normA |v^\bullet(S,O,\xi)|^2.
\]
The transduction complexity is as before
\[
    W(S,O,\xi) = \normA |v(S,O,\xi)|^2 = \normA |\vw(S,O,\xi)|^2 + \normA |\vq(S,O,\xi)|^2.
\]

We have the following refinement of \rf{thm:transducer-algorithm}, which justifies \rf{tbl:purifier-comparison}.
\begin{thm}
[Refined Implementation of Transduction Action {\cite[Theorem 3.3]{belovs:taming}}]
\label{thm:transducer-algorithm-query}
    Fix spaces ${\cal H}$ and ${\cal L}={\cal L}^{\circ}\oplus{\cal L}^\bullet$ with $\cLq = \cL^\uparrow\otimes \cM$, and positive real parameters $W$, $L$, and $\eps$.
 Then there exists a quantum algorithm with query access $O\colon \cM\to\cM$ that, for any canonical transducer $S = S(O)$ in ${\cal H}\oplus {\cal L}$ and any initial state $\xi\in\cH$, transforms the state $\xi$ into $\tau'$ such that
    $$\normA |{\tau'-\tau(S,O,\xi)}|\leq \eps$$
assuming that $W(S,O,\xi)\le W$ and $L(S,O,\xi)\le L$.
The algorithm uses $\OO\sA[L(S,O,\xi)/\eps^2]$ queries to $O$, 
$K=\OO\sA[1+W(S, O,\xi)/\eps^2]$ controlled calls to $S^\circ$, $\OO(K)$ other elementary operations, and $\OO(\log K)$ additional qubits.
\end{thm}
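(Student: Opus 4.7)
The plan is to follow the template of \rf{thm:transducer-algorithm} and to exploit the canonical form $S=S^{\circ}\tO$ in order to decouple the calls to $S^{\circ}$ from the actual queries to $O$.

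Setting $K=\OO(1+W/\eps^2)$, I would first reproduce the construction of the basic implementation verbatim on the space $(\bC^K\otimes\cH)\oplus\cL$: attach a uniform superposition to lift $\xi$ to $\frac{1}{\sqrt K}\sum_{i=1}^K\ket|i>\otimes\xi\in\bC^K\otimes\cH$; perform a conceptual perturbation introducing the catalyst $v/\sqrt K$ in $\cL$, which has size $\sqrt{W/K}$; apply $K$ iterations of $S$, each controlled on $\ket|i>$ and acting on the $i$-th copy of $\xi/\sqrt K$ together with the shared catalyst, producing $\frac{1}{\sqrt K}\sum_i\ket|i>\otimes\tau\oplus v/\sqrt K$; finally perturb away the catalyst (another $\sqrt{W/K}$) and unprepare the superposition. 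By \rf{lem:surgery} the imprecision contributed by these two perturbations is $2\sqrt{W/K}=\OO(\eps)$.

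The new ingredient is how the queries inside the $K$ iterations are counted. Because $S$ is canonical, each iteration is exactly one input-independent $S^{\circ}$ preceded by one $\tO$ that acts nontrivially only on the query part of the catalyst, equal to $v^{\bullet}/\sqrt K$ and of norm $\sqrt{L/K}$. Summing $L/K$ over the $K$ iterations, as in~\rf{eqn:LasVegas}, shows that the resulting quantum query algorithm has Las Vegas query complexity exactly $L$, even though it uses $K$ ordinary queries. Appealing to the Las Vegas to ordinary conversion of~\cite{belovs:LasVegas}, I would then replace those $K$ ordinary queries by $\OO(L/\eps^2)$ ordinary queries to $O$ at the cost of an additional imprecision of $\OO(\eps)$, while keeping the $K$ controlled calls to $S^{\circ}$ and the $\OO(K)$ other elementary operations intact. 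Combined with the $\OO(\log K)$ qubits needed for the $\bC^K$ register, this yields the claimed resource counts.

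The main obstacle is controlling the aggregate error of the final conversion. Each per-iteration query has amplitude $\sqrt{L/K}$, which is much smaller than $\eps$, so a naive \rf{lem:surgery}-style sum of $K$ per-iteration perturbations would give $\sqrt{LK}\gg\eps$. The correct accounting relies on the fact that only $\OO(L/\eps^2)$ queries are actually executed, so the aggregate perturbation is bounded by a Cauchy--Schwarz expression $\sqrt{(\text{real queries})\cdot(\text{single perturbation})^2}=\OO(\sqrt{(L/\eps^2)\cdot (L/K)})$, and plugging in $K=\OO(W/\eps^2)\geq \OO(L/\eps^2)$ collapses this to $\OO(\eps)$. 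Showing that the threshold selecting which iterations receive a real query can indeed be chosen so that both the aggregate perturbation and the query count stay within budget is the technical heart of the argument.
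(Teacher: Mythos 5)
First, note that the paper does not actually prove this theorem: it is imported verbatim from \cite[Theorem 3.3]{belovs:taming}, so there is no in-paper proof to match your argument against. Judged on its own merits, your proposal gets the skeleton right (the $K$-iteration implementation of \rf{thm:transducer-algorithm}, the observation that canonicity makes each iteration query only $v^\bullet/\sqrt{K}$, and the resulting Las Vegas count $K\cdot L/K = L$), but the decisive step --- turning $K$ physical queries of amplitude $\sqrt{L/K}$ into $\OO(L/\eps^2)$ physical queries at cost $\OO(\eps)$ --- is not established, and the two ways you try to get it both fail. Appealing to the Las-Vegas-to-worst-case conversion of~\cite{belovs:LasVegas} is circular in this paper's logical structure (the paper explicitly derives that conversion \emph{from} \rf{cor:algorithmToTransducer} and \rf{thm:transducer-algorithm-query}); and even taken as an external black box, that conversion produces a different algorithm via the adversary bound and gives you no right to ``keep the $K$ controlled calls to $S^\circ$, the $\OO(K)$ other operations, and the $\OO(\log K)$ qubits intact'' --- yet the theorem claims all four resource bounds simultaneously.

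The error accounting in your last paragraph is also wrong on its own terms. The perturbation comes from the iterations where you \emph{skip} the query (the ideal evolution applies $O$ to a state of norm $\sqrt{L/K}$ and yours does not), not from the iterations where you execute one, so bounding the aggregate error by (number of executed queries) times (per-query perturbation) is measuring the wrong set; summing over the $K-\OO(L/\eps^2)$ skipped iterations returns you to the naive $\approx\sqrt{LK}$ bound you set out to avoid. Moreover, the expression you write does not collapse to $\OO(\eps)$: $\sqrt{(L/\eps^2)\cdot(L/K)} = L/(\eps\sqrt{K})$, which for $K=\Theta(W/\eps^2)$ equals $L/\sqrt{W}\le\sqrt{L}$, not $\OO(\eps)$. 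The cited proof avoids per-iteration query perturbations altogether: using the fact that $S^\circ$ restores the query register to $v^\bullet/\sqrt{K}$ after every iteration, one restructures the query catalyst into $\OO(L/\eps^2)$ blocks so that a single physical query, applied to a buffered direct sum of norm squared $L/d\le\eps^2$, serves all $K/d$ iterations of a block exactly; the only perturbations are then the $\OO(\sqrt{W/K})$ ones already present in \rf{thm:transducer-algorithm} plus $\OO(\sqrt{L/d})$ for the buffer, each $\OO(\eps)$. Supplying that block construction is precisely the missing ``technical heart'' you acknowledge, and without it the proof is not complete.
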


Thus, in particular, the time complexity of the algorithm described in \rf{thm:transducer-algorithm-query} (including executions of $\Sw$, but excluding executions of $O$) is $\OO\sA[1+W(S, O,\xi)/\eps^2]\cdot T(S)$, where $T(S)$, is the iteration time of $S$, which, in agreement to \rf{defn:transComplexities}, is the time complexity of the algorithm implementing $\Sw$.

\medskip

When designing a transducer, treating it as a usual quantum query algorithm from \rf{sec:prelimAlgorithms} is very convenient, and we will continue doing so in this paper.
It is not hard to transform any such transducer into the canonical form with a slight increase in complexity.

\begin{thm}
[Conversion to Canonical Form {\cite[Proposition~10.4]{belovs:taming}}]
\label{thm:canonisation}
Assume $S = S(O)$ is a non-canonical transducer with public space $\cH$ making $Q$ queries to the input oracle $O$.
We also assume that $S$, as a query algorithm, is strictly in the form of~\rf{eqn:prelimAlgorithm}, in particular, all its queries to the input oracle are identical.
Then, there exists a canonical transducer $S' = S'(O)$ with the same public space $\cH$ and the same input oracle that has exactly the same transduction action as $S$.
Moreover, 
\[
q(S',O,\xi) = q(S, O, \xi),\qquad
L(S',O,\xi) = L(S, O, \xi),
\]
and
\[
W(S', O, \xi) = W(S, O, \xi) + L(S, O, \xi).
\]
The transducer $S'$ uses $\OO(\log Q)$ additional qubits and its iteration time is at most $\OO(1)$ times the iteration time of $S$.
\end{thm}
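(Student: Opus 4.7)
The plan is to parallelize the $Q$ oracle queries of $S$ into a single query of $S'$ by introducing a clock register of $\OO(\log Q)$ qubits. After the one application of $\tO$ in $S'$, all $Q$ queries of $S$ are effectively performed in superposition over the clock, and the remaining input-independent work unitary $\Sw$ then simulates the rest of $S$'s computation without any further oracle calls, using the clocked register as an oracle substitute.

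For the setup, let $\cH_A=\cH\oplus\cL$ denote the space of the algorithm $A_S$ with decomposition $\cH_A=\cH_A^\circ\oplus\cH_A^\bullet$ into non-query and query parts, let $v\in\cL$ be the catalyst of $S$ on $\xi$, and let $\psi_t=\psi_t^\circ\oplus\psi_t^\bullet$ be the state of $A_S$ right before its $t$-th query when run on $\xi\oplus v$. I take the canonical private space of $S'$ to be $\cL'=\cL\oplus(\bC^Q\otimes\cH_A^\bullet)$ with $\cLw'=\cL$ and $\cLq'=\bC^Q\otimes\cH_A^\bullet=(\bC^Q\otimes\cH_A^\uparrow)\otimes\cM$, and for initial state $\xi$ I define the catalyst by $\vw'=v$ and $\vq'=\sum_t\ket|t>\otimes\psi_t^\bullet=q(S,O,\xi)$. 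Then $\|\vw'\|^2=\|v\|^2=W(S,O,\xi)$ and $\|\vq'\|^2=\sum_t\|\psi_t^\bullet\|^2=L(S,O,\xi)$, so the identities $q(S',O,\xi)=q(S,O,\xi)$, $L(S',O,\xi)=L(S,O,\xi)$ and $W(S',O,\xi)=W(S,O,\xi)+L(S,O,\xi)$ will follow immediately once the construction works. Writing $S'=\Sw\,\tO$, the query $\tO$ sends $\vq'$ to $\sum_t\ket|t>\otimes\eta_t^\bullet$ with $\eta_t=\tO\psi_t$, so $\Sw$ must transform the state $\xi\oplus v\oplus\sum_t\ket|t>\otimes\eta_t^\bullet$ into $\tau\oplus v\oplus\sum_t\ket|t>\otimes\psi_t^\bullet$.

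I build $\Sw$ by treating $\cH_A$ as a ``main'' register and $\cLq'$ as a clocked scratchpad, executing $U_0,U_1,\ldots,U_Q$ in sequence and interleaving them with $Q$ clock-controlled swaps between $\cH_A^\bullet$ and the data register of $\cLq'$: apply $U_0$ to bring $\cH_A$ to $\psi_1$, swap $\cH_A^\bullet$ with the clock-$1$ position to land $\eta_1=\psi_1^\circ\oplus\eta_1^\bullet$ in $\cH_A$ and $\psi_1^\bullet$ at clock position $1$, apply $U_1$ to get $\psi_2$, swap at clock position $2$, and so on until the final $U_Q$ puts $U_Q\eta_Q=\tau\oplus v$ into $\cH_A$ with every $\psi_t^\bullet$ deposited at clock position $t$. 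The main obstacle is that all clock positions of $\cLq'$ live in a single $\cL^\uparrow\otimes\cM$ register shared across the clock superposition, so each ``swap with clock position $t$'' has to be implemented as a clock-controlled swap, and one must verify that these controlled swaps together with the $U_t$'s compose into a well-defined unitary producing the desired action on every clock branch simultaneously while keeping the iteration time at $\OO(T(S))$ and the additional qubit count at $\OO(\log Q)$ for the clock. Once $\Sw$ is established, the transduction equation $S'(\xi\oplus\vw'\oplus\vq')=\tau\oplus\vw'\oplus\vq'$ follows by the trace above (using that $v$ is preserved because $A_S(\xi\oplus v)=\tau\oplus v$), and the stated complexity identities reduce to the norm computations for $\vw'$ and $\vq'$ together with $\Sw$ using each $U_t$ of $A_S$ exactly once plus $Q$ small clock-controlled swaps.
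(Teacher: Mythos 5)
The paper does not reprove this statement (it is imported from \cite{belovs:taming}, Proposition~10.4), but your construction --- storing the total query state $\bigoplus_t \psi_t^\bullet$ as the query part of the new catalyst and letting the work unitary replay $U_0,\dots,U_Q$ interleaved with clock-controlled swaps between $\cH_A^\bullet$ and the clock-$t$ slot --- is exactly the construction used there, and your trace of the action together with the norm computations for $q$, $L$, and $W$ is correct. The one point you flag but do not verify, that the $Q$ clock-controlled swaps fit within $\OO(1)\cdot T(S)$ iteration time and $\OO(\log Q)$ extra qubits, is precisely the bookkeeping the cited proposition carries out, so your proposal is essentially the same proof.
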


Note that any algorithm $A = A(O)$ as in~\rf{eqn:prelimAlgorithm} can be considered as a transducer $S_A = S_A(O)$ with the same transduction action if we just let the private space $\cL$ to be empty and assume $v=0$ in \rf{eqn:transDefinition}.
This gives the following corollary of the above theorem:

\begin{cor}
\label{cor:algorithmToTransducer}
Assume $A$ is a quantum query algorithm as in~\rf{eqn:prelimAlgorithm}, working in space $\cH$ and making $Q$ queries.
Then, there exists a canonical transducer $S_A$ with the public space $\cH$, whose transduction action is identical to the action of $A$.
Moreover, for any $\xi\in\cH$ and input oracle $O$:
\[
q(S_A,O,\xi) = q(A, O, \xi)
\qqand
W(S_A, O, \xi) = L(S_A,O,\xi) = L(A, O, \xi).
\]
Iteration time complexity of $S_A$ is equal to the time complexity of $A$ up to a constant factor, and $S_A$ uses $\OO(\log Q)$ additional qubits.
\end{cor}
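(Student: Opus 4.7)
The plan is to recognize the algorithm $A$ itself as a (non-canonical) transducer with trivial private space, and then feed it into \rf{thm:canonisation} to put it into canonical form. This reduces the entire corollary to a single application of an already-proven theorem.

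First I would take $\cL = \{0\}$ and define the transducer $S = S(O)$ to act on $\cH \oplus \{0\} \cong \cH$ as the unitary $A(O)$. This satisfies \rf{defn:transducer} trivially since $A(O)$ is unitary. The catalyst is necessarily $v = 0$, so \rf{eqn:transDefinition} reads $\xi \oplus 0 \transduce{S} A(O)\xi \oplus 0$, identifying the transduction action of $S$ with the action of $A$. Unpacking \rf{defn:transComplexities} with $v = 0$ and $\xi \oplus v = \xi$, I would read off $q(S, O, \xi) = q(A, O, \xi)$, $L(S, O, \xi) = L(A, O, \xi)$, and $W(S, O, \xi) = \|v\|^2 = 0$. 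The iteration time of $S$ is exactly the time complexity of $A$.

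Next, I would apply \rf{thm:canonisation} to $S$. By hypothesis $A$ is strictly in the form \rf{eqn:prelimAlgorithm}, so the theorem's precondition is met, and it produces a canonical transducer $S_A$ with the same public space $\cH$, the same transduction action, $\OO(\log Q)$ extra qubits, and iteration time within a constant factor of that of $S$, hence of $A$. The complexity-preservation statements in \rf{thm:canonisation} then give
\[
q(S_A, O, \xi) = q(S, O, \xi) = q(A, O, \xi), \qquad L(S_A, O, \xi) = L(S, O, \xi) = L(A, O, \xi),
\]
and
\[
W(S_A, O, \xi) = W(S, O, \xi) + L(S, O, \xi) = 0 + L(A, O, \xi) = L(A, O, \xi),
\]
which is exactly the claim.

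There is no real obstacle here; the work has been packaged inside \rf{thm:canonisation}. The one conceptual point I would emphasize is that starting from $A$ as a transducer with an empty private space (and hence zero initial catalyst) is precisely what causes the canonical form's transduction complexity $W$ to collapse onto $L(A, O, \xi)$, matching the query-complexity-like role that $W$ plays for canonical transducers via \rf{thm:transducer-algorithm-query}.
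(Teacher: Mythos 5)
Your proposal is correct and matches the paper's own argument exactly: the paper likewise observes that $A$ can be viewed as a transducer with empty private space and zero catalyst, and then derives the corollary as an immediate application of \rf{thm:canonisation}, with $W(S_A,O,\xi) = 0 + L(A,O,\xi)$ falling out of the same bookkeeping you describe. Nothing is missing.
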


\subsection{Relation to the Adversary Bound}
\label{sec:adversary}
In this section, we describe the relation between transducers and the dual adversary bound for state conversion~\cite{lee:stateConversion, belovs:variations}.
Before we describe the connection, we have to define the latter two objects.

\begin{defn}
\label{defn:stateConversion}
A \emph{state conversion problem} is specified by two Hilbert spaces $\cM$ and $\cH$ together with a collection of unitaries $O_x$ in $\cM$ and pairs of states $\xi_x\mapsto\tau_x$ in $\cH$, where $x$ ranges over some set $X$.
We say that an algorithm $A$ (respectively, transducer $S$) \emph{solves} the problem (exactly) if $A(O_x) \xi_x = \tau_x$ (respectively, $\xi_x\transduce{S(O_x)}\tau_x$) for all $x \in X$.
\end{defn}

\begin{defn}
\label{defn:dualAdversary}
The \emph{dual adversary bound} corresponding to the state conversion problem of \rf{defn:stateConversion} is given by the optimisation problem to minimise 
$
\max_{x\in X} \|v_x\|^2,
$
where $v_x\in \cH^\uparrow\otimes \cM$, as $x$ ranges over $X$, are some vectors satisfying the constraint
\begin{equation}
\label{eqn:adversaryOriginal}
\ip<\xi_x, \xi_y> - \ip<\tau_x, \tau_y> = \ipA<v_x, I\otimes \sA[I_\cM - O_x^*O_y]v_y>
\qquad \text{for all $x,y\in X$.}
\end{equation}
Here $\cH^\uparrow$ can be an arbitrary space, and $I$ is the identity on that space.
\end{defn}

Note that~\rf{eqn:adversaryOriginal} can be rewritten in a possibly friendlier equivalent form
\begin{equation}
\label{eqn:adversary}
\ip<\xi_x, \xi_y> - \ip<\tau_x, \tau_y> = \ip<v_x, v_y> - \ipA<(I \otimes O_x)v_x, (I \otimes O_y) v_y>
\qquad \text{for all $x,y\in X$.}
\end{equation}

The optimal value of the dual adversary problem lower bounds the worst-case query complexity of any algorithm solving the corresponding state conversion problem.
This is a corollary of the following result connecting dual adversary to transducers.

\begin{thm}
\label{thm:adversary}
Let $\xi_x$, $\tau_x$, $O_x$ for $x\in X$ define a state conversion problem as in \rf{defn:stateConversion}, and let $v_x$ for $x\in X$ be vectors in $\cH^\uparrow\otimes \cM$.
Consider the following statements:
\begin{itemize}
\item[(a)] There exists a quantum query algorithm $A = A(O)$ such that $A(O_x)\xi_x = \tau_x$ and $v_x = q(A, O_x, \xi_x)$ for all $x\in X$.
\item[(b)] There exists a transducer $S = S(O)$ such that $\xi_x\transduce{S(O_x)}\tau_x$ and $v_x = q(S, O_x, \xi_x)$ for all $x\in X$.
\item[(c)] There exists a canonical transducer satisfying~(b).
\item[(d)] The vectors $v_x$ satisfy the constraints~\rf{eqn:adversary}.
\end{itemize}
Then, the statements~(b), (c), and~(d) are equivalent.
The statement~(a) implies all of them.
\end{thm}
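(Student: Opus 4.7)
The plan is to prove the cycle (a)$\Rightarrow$(c)$\Rightarrow$(b)$\Rightarrow$(c)$\Rightarrow$(d)$\Rightarrow$(c), which yields all the required implications and equivalences. Three of the arrows are essentially immediate from results already in the paper: (a)$\Rightarrow$(c) is a direct application of \rf{cor:algorithmToTransducer}, which converts a query algorithm into a canonical transducer while preserving the total query state; (c)$\Rightarrow$(b) is trivial since a canonical transducer is in particular a transducer; and (b)$\Rightarrow$(c) is \rf{thm:canonisation}, which also preserves $q$. Hence the actual content of the theorem is the equivalence between (c) and (d), and these two directions are essentially inverses of each other, linked by exactly the identity displayed in \rf{eqn:adversary}.

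For (c)$\Rightarrow$(d), I would write the canonical transducer as $S(O)=S^\circ\tilde O$ and, for each $x\in X$, unroll the action of $S(O_x)$ on the initial coupling $\xi_x\oplus v_x^\circ\oplus v_x^\bullet$ as in \rf{eqn:canonical}. Here $v_x^\bullet=q(S,O_x,\xi_x)=v_x$ by the definition of the total query state for canonical transducers. The query operator $\tilde O$ sends this state to $\xi_x\oplus v_x^\circ\oplus (I\otimes O_x)v_x$, which the input-independent unitary $S^\circ$ then maps to $\tau_x\oplus v_x^\circ\oplus v_x$. Because $S^\circ$ is the same unitary for every index, taking inner products between the pre-$S^\circ$ and post-$S^\circ$ states for two indices $x,y$ yields
\[
\langle\xi_x,\xi_y\rangle+\langle v_x^\circ,v_y^\circ\rangle+\langle (I\otimes O_x)v_x,(I\otimes O_y)v_y\rangle=\langle\tau_x,\tau_y\rangle+\langle v_x^\circ,v_y^\circ\rangle+\langle v_x,v_y\rangle.
\]
The $v^\circ$ terms cancel, leaving precisely the adversary identity~\rf{eqn:adversary}.

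For (d)$\Rightarrow$(c), I would run this argument in reverse. Take $\cL^\bullet=\cH^\uparrow\otimes\cM$ (the space housing the $v_x$), set $v_x^\circ=0$, and define $w_x=\xi_x\oplus 0\oplus (I\otimes O_x)v_x$ and $w_x'=\tau_x\oplus 0\oplus v_x$ in $\cH\oplus\cL^\circ\oplus\cL^\bullet$. Expanding inner products shows that the adversary constraint \rf{eqn:adversary} is exactly the assertion $\langle w_x,w_y\rangle=\langle w_x',w_y'\rangle$ for all $x,y$, so the assignment $w_x\mapsto w_x'$ defines a partial isometry on $\operatorname{span}\{w_x\}$; enlarging $\cL^\circ$ if necessary, this extends to a unitary $S^\circ$ on the full space. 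Setting $S(O):=S^\circ\tilde O$ gives a canonical transducer, and by construction the initial coupling $\xi_x\oplus 0\oplus v_x$ realises the transduction $\xi_x\transduce{S(O_x)}\tau_x$ with $q(S,O_x,\xi_x)=v_x$. The only real obstacle in the entire proof is this Gram-matrix matching step, but the adversary equation has been designed precisely to make it automatic; the freedom to choose $\cL^\circ$ arbitrarily large takes care of any dimension issue when extending the partial isometry to a unitary.
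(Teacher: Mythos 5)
Your proposal is correct and follows essentially the same route as the paper: the easy arrows are delegated to \rf{cor:algorithmToTransducer} and \rf{thm:canonisation}, and the equivalence of (c) and (d) is reduced in both directions to the Gram-matrix condition for extending the map $\xi_x\oplus v_x^\circ\oplus (I\otimes O_x)v_x^\bullet\mapsto\tau_x\oplus v_x^\circ\oplus v_x^\bullet$ to a unitary $S^\circ$, with $\cL^\circ$ empty and $v_x^\circ=0$ in the (d)$\Rightarrow$(c) direction. Your write-up merely makes explicit the partial-isometry extension step that the paper leaves implicit.
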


\begin{proof}
The implication (a)$\Longrightarrow$(c) is the content of \rf{cor:algorithmToTransducer}, while the implication (b)$\Longrightarrow$(c) is the content of \rf{thm:canonisation}.
Implication (c)$\Longrightarrow$(b) is trivial as any canonical transducer is a transducer.
It remains to establish equivalence of~(c) and~(d).

Assume there exists a canonical transducer as in \rf{defn:canonical} performing the required transduction.
Then, by~\rf{eqn:canonical}, there exists a unitary $S^\circ$ such that
\[
S^\circ \colon \xi_x \oplus v_x^\circ \oplus (I\otimes O_x) v_x^\bullet
\maps{}
\tau_x \oplus v_x^\circ \oplus v_x^\bullet
\]
for all $x\in X$, where $v_x^\bullet$ are the corresponding total query states.
Existence of such unitary is equivalent to the condition
\[
\ip<\xi_x, \xi_y> + \ip<\vw_x, \vw_y> + \ipA<(I \otimes O_x)\vq_x, (I \otimes O_y) \vq_y>
=
\ip<\tau_x, \tau_y> + \ip<\vw_x, \vw_y> + \ip<\vq_x, \vq_y>
\quad\text{for all $x,y\in X$},
\]
which implies~\rf{eqn:adversary} with $v_x$ replaced by $v^\bullet_x$.

On the other hand, if~\rf{eqn:adversary} is satisfied, then there exists a unitary $\Sw$ such that
\begin{equation}
\label{eqn:query2}
S^\circ \colon \xi_x \oplus (I\otimes O_x) v_x
\maps{}
\tau_x \oplus v_x
\end{equation}
for all $x\in X$.
This is an instance of a canonical transducer with $\cLw$ being empty, $v_x^\circ$ equal to zero, and where the total query states are given by $v_x$.
\end{proof}

This result immediately implies the following corollary.
\begin{cor}
\label{cor:adversary}
The optimal value of the dual adversary problem lower bounds all of the following:
\begin{itemize}
\item[(a)] Worst-case query complexity $\max_{x\in X} L(S,O_x,\xi_x)$ of any transducer $S$ such that $\xi_x\transduce{S(O_x)} \tau_x$ for all $x$.
\item[(b)] Worst-case Las Vegas query complexity $\max_{x\in X} L(A,O_x,\xi_x)$ of any algorithm $A$ such that $A(O_x) \xi_x = \tau_x$ for all $x\in X$.
\item[(c)] The number of queries made by any algorithm from~(b), assuming $\|\xi_x\|=1$ for all $x$.
\end{itemize}
\end{cor}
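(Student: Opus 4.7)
The plan is to read this off Theorem~\ref{thm:adversary} and Corollary~\ref{cor:algorithmToTransducer} almost mechanically; the key insight is that those two results already package the hard work (the equivalence of canonical transducers with the dual adversary feasibility constraint), so the corollary is essentially a bookkeeping exercise matching the norm appearing in the dual adversary objective to the relevant notion of query complexity.

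For part~(a), I would start from any transducer $S$ satisfying $\xi_x \transduce{S(O_x)} \tau_x$ for all $x \in X$, and let $v_x := q(S, O_x, \xi_x)$. By the implication (b)$\Longrightarrow$(d) of Theorem~\ref{thm:adversary}, these vectors satisfy the dual adversary constraint~\rf{eqn:adversary}, so $\{v_x\}_{x\in X}$ is a feasible solution of the dual adversary optimisation problem. Now it only remains to observe that, by the definition of query complexity for a (canonical or non-canonical) transducer in~\rf{eqn:transQueryComplexity}, we have $\|v_x\|^2 = \|q(S,O_x,\xi_x)\|^2 = L(S, O_x, \xi_x)$. Thus the objective value of this feasible solution is exactly $\max_{x \in X} L(S, O_x, \xi_x)$, which therefore upper bounds the optimum of the dual adversary program, giving~(a).

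For part~(b), given an algorithm $A$ with $A(O_x)\xi_x = \tau_x$, I would invoke Corollary~\ref{cor:algorithmToTransducer} to produce a canonical transducer $S_A$ with the same transduction action, and with $q(S_A, O_x, \xi_x) = q(A, O_x, \xi_x)$ and $L(S_A, O_x, \xi_x) = L(A, O_x, \xi_x)$ for every $x$. Applying part~(a) to $S_A$ yields the claim. For part~(c), I would simply combine~(b) with the general inequality $L(A, O, \xi) \le Q$ that holds for any normalised $\xi$ (noted right after~\rf{eqn:LasVegas} in~\rf{sec:prelimAlgorithms}), where $Q$ is the number of queries made by $A$; taking the worst case over $x \in X$ gives $\max_x L(A, O_x, \xi_x) \le Q$, and~(b) finishes the job.

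There is no genuine obstacle here: the only thing to be careful about is matching conventions, specifically that the $v_x$ constructed from $q(S,O_x,\xi_x)$ in the canonical case naturally live in $\cH^\uparrow \otimes \cM$ (cf.~\rf{eqn:queryState}), as required for a dual adversary solution, and that the equivalence (b)$\Longleftrightarrow$(d) in Theorem~\ref{thm:adversary} is stated with exactly this identification in mind. Once that is noted, the three items are immediate consequences in the order (a)$\Rightarrow$(b)$\Rightarrow$(c), so the proof will be only a few lines.
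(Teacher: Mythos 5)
Your proposal is correct and matches the paper's intent exactly: the paper offers no separate proof, stating only that \rf{cor:adversary} "immediately" follows from \rf{thm:adversary}, and the steps you supply -- feasibility of $v_x = q(S,O_x,\xi_x)$ via the implication (b)$\Longrightarrow$(d), the identity $L(S,O_x,\xi_x)=\|q(S,O_x,\xi_x)\|^2$, reduction of algorithms to canonical transducers via \rf{cor:algorithmToTransducer}, and the bound $L(A,O,\xi)\le Q$ for normalised $\xi$ -- are precisely the bookkeeping the authors leave implicit. Your remark about identifying $\bC^Q\otimes\cH^\uparrow$ with the arbitrary space $\cH^\uparrow$ allowed in \rf{defn:dualAdversary} is the right convention check and is indeed how the theorem's statement is to be read.
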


Lower bounding the optimal value of the dual adversary is the content of the usual (primal) adversary bound.
Note though that \rf{thm:adversary} gives much tighter connection between the dual adversary and the transducers solving the corresponding state conversion problem than just worst-case Las Vegas complexity.

\rf{cor:adversary} only considers algorithms and transducers solving the state conversion problem \emph{exactly}.
In \rf{sec:query-efficient}, we briefly explain how to use purifiers in order to lower bound complexity of \emph{bounded-error} algorithms in the case of function evaluation.

\subsection{Composition of Transducers}
\label{sec:composition}

We can compose transducers similar to how we can compose quantum algorithms, and the complexity of these compositions follows some natural rules. 
In this section, we cover parallel and functional composition. 
Ref.~\cite{belovs:taming} also considers sequential composition.

\paragraph{Algorithms}
To get some intuition, we start with composition of query algorithms as in \rf{sec:prelimAlgorithms}.
We first consider parallel composition, which corresponds to performing algorithms in superposition.
Let $A_i = A_i(O_i)$ with $O_i\colon\cM_i\to\cM_i$ be quantum algorithms for $i=1,\dots,m$.
Their \emph{parallel composition} $\bigoplus_i A_i$ is a quantum algorithm with input oracle $O$ acting in $\bigoplus_i \cM_i$ whose action satisfies
\begin{equation}
\label{eqn:parallel}
(\bigoplus_i A_i) \sB[\bigoplus_i O_i] = \bigoplus_i A_i(O_i)
\end{equation}
for all unitaries $O_i\colon \cM_i\to\cM_i$.
Its total query state is
\begin{equation}
\label{eqn:parallelQueryState}
q\sB[\bigoplus_i A_i, \bigoplus_i O_i, \bigoplus_i \xi_i ]
=
\bigoplus_i q(A_i, O_i, \xi_i).
\end{equation}
If all $A_i$ are equal to some $A$, we can replace $\bigoplus_i A_i$ with $I_m\otimes A$, however, we often drop $I_m$ in this context.
If all $O_i$ are equal to some $O$, we can replace $\bigoplus_i O_i$ with just $O$.
(It is easy to see from~\rf{eqn:queryOperator} that $O$ and $I_m\otimes O$ are equally powerful.)
Finally, the most important case is when all $A_i$ are equal, and all $O_i$ are equal.
In this case, we just write
\begin{equation}
\label{eqn:xiLargerSpace}
\text{$q(A, O, \xi)$ instead of $q(I_\cE\otimes A, O, \xi)$}
\end{equation}
for every $\xi\in\cE\otimes \cH$ in arbitrary $\cE$, since the space $\cE$ can be deduced from $\xi$.
We use the same convention for $L(A,O,\xi)$.
\medskip

Now let us move to functional composition, where one algorithm implements the input oracle of another algorithm.
Assume $A$ is a quantum algorithm as in~\rf{eqn:prelimAlgorithm} with the input oracle $O\colon \cM\to\cM$, 
and let $B$ be an algorithm that acts in $\cM$.
Their \emph{functional composition $A\circ B$} is the algorithm $A$, where each execution of $O$ is replaced by execution of $B$.
The action of $A\circ B$ is $A(B)$, and it does not have the input oracle.

The above description is simple, but it omits a large number of important details.
First, we usually have some global input oracle, which is unrelated to the subroutine $B$.
From now on, we will denote it by $O\colon \cM\to\cM$.
Thus, the algorithm $A$ has access to two input oracles: the global input oracle $O$, and 
$O'\colon \cM'\to \cM'$ that gets replaced by $B$.
The subroutine $B$ now acts in $\cM'$, and it also has access to $O$, as does the functional composition $A\circ B$.

Modelling $A$'s oracle access to two input oracles $O$ and $O'$ in the framework of \rf{sec:prelimAlgorithms} is easy.
We assume it has access to one combined input oracle $O\oplus O'$ acting in $\cM\oplus \cM'$.
We can also measure the query complexity of each of the oracles separately by decomposing the total query state from~\rf{eqn:queryState} as
\[
q(A, O\oplus O', \xi) = q^{(0)} (A, O\oplus O', \xi) \;\oplus\; q^{(1)} (A, O\oplus O', \xi),
\]
where 
$q^{(0)} (A, O\oplus O', \xi) \in \bC^Q\otimes \cH^\uparrow \otimes \cM$
and
$q^{(1)} (A, O\oplus O', \xi) \in \bC^Q\otimes \cH^\uparrow \otimes \cM'$
denote the parts processed by the oracles $O$ and $O'$, respectively.

With these conventions, we can write the action of $A\circ B$ as $A\sA[O\oplus B(O)]$.
It is also not hard to describe the total query state of $A\circ B$ by the following expression~\cite{belovs:LasVegas}:
\begin{equation}
\label{eqn:compositionAlgorithmComplexity}
q\sA[A\circ B, O, \xi] = q^{(0)}\sA[ A, O\oplus B(O), \xi ] \; \oplus \; 
q\sB[B, O, q^{(1)} {\sA[ A, O\oplus B(O), \xi ]}  ].
\end{equation}
Although this expression is lengthy, it is quite natural.
It says that the queries made by $A\circ B$ fall into the queries made by $A$ to $O$ directly  (the first term), and the queries made by the subroutine $B$ (the second term).
For the latter, we use the convention~\rf{eqn:xiLargerSpace}, and we assume that $B$ is executed on the entire total query state in one go, instead of dividing it into $Q$ chunks as $A$ does.
Finally, $B(O)$ replaces $O'$ since $B$ implements this input oracle.

\paragraph{Transducers}
Both parallel and functional composition of transducers mimics the case of algorithms very closely.
Parallel composition $\bigoplus_i S_i$ of transducers $S_i = S_i(O_i)$ can be implemented as parallel composition of their implementations $\bigoplus A_{S_i}$, where we used notation of \rf{sec:transducerComplexities}.
For canonical transducers, it boils down to implementing $\bigoplus \Sw_i$, where $\Sw_i$ are the corresponding work unitaries.
Similarly to~\rf{eqn:parallelQueryState}, we have
\begin{equation}
\label{eqn:parallelTransducerComplexityA}
q\sB[\bigoplus_i S_i, \bigoplus_i O_i, \bigoplus_i \xi_i ]
=
\bigoplus_i q(S_i, O_i, \xi_i)
\end{equation}
and
\begin{equation}
\label{eqn:parallelTransducerComplexityB}
W\sB[\bigoplus_i S_i, \bigoplus_i O_i, \bigoplus_i \xi_i ]
=
\sum_i W(S_i, O_i, \xi_i).
\end{equation}

For every transducer $S = S(O)$ and every space $\cE$, we can define transducer $I_{\cE}\otimes S$, which corresponds to the special case of parallel composition when all $S_i$ are equal to $S$, and all $O_i$ are equal to $O$.
As in~\rf{eqn:xiLargerSpace}, we will write, e.g., $q(S,O,\xi)$ instead of $q(I_{\cE}\otimes A, O,\xi)$ for every $\xi\in\cE\otimes \cH$.
\medskip

Functional composition is slightly more subtle, but similar.
Let $S_A$ and $S_B$ be canonical transducers with public spaces $\cH$, and $\cM'$, respectively.
We assume $S_A$ has input oracle $O\oplus O'$ acting in $\cM\oplus \cM'$, and $S_B$ has input oracle $O$ acting in $\cM$.
Let $U_A(O\oplus O')$ and $U_B(O)$ denote the transduction actions of $S_A$ and $S_B$ on $\cH$ and $\cM'$, respectively.
As for algorithms, we can decompose the total query state of $A$ as
\[
q(A, O\oplus O', \xi) = q^{(0)}(A, O\oplus O', \xi) \oplus q^{(1)}(A, O\oplus O', \xi)
\]
into the parts processed by $O$ and $O'$, respectively.

\begin{thm}
\label{thm:transducerComposition}
In the above assumptions, there exists a canonical transducer $S_A\circ S_B$ on the public space $\cH$ and with the input oracle $O\colon\cM\to\cM$.
Transduction action of $S_A\circ S_B$ equals $U_A\sA[O\oplus U_B(O)]$.
The total query state of $S_A\circ S_B$ satisfies
\begin{equation}
\label{eqn:compositionTransducerComplexity1}
q(S_A\circ S_B, O, \xi) = q^{(0)}\sA[ S_A, O\oplus U_B(O), \xi ] \; \oplus \; 
q\sB[S_B, O, q^{(1)} {\sA[ S_A, O\oplus U_B(O), \xi ]}  ],
\end{equation}
and the transduction complexity is
\begin{equation}
\label{eqn:compositionTransducerComplexity2}
W(S_A\circ S_B, O,\xi) = W\sA[S_A, O\oplus U_B(O), \xi] + W\sB[{S_B, O, q^{(1)}\sA[S_A, O\oplus U_B(O), \xi]}].
\end{equation}
The iteration time of $S_A\circ S_B$ is $T(S_A)+T(S_B)$, and its space complexity is the sum of space complexities of $S_A$ and $S_B$.
\end{thm}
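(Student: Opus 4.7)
The plan is to exploit the canonical form of both $S_A$ and $S_B$: since each makes exactly one oracle call, the composition only needs to stitch the two work unitaries together and delegate the $O'$-branch of $S_A$'s query to the transduction action of $S_B$. Concretely, I would construct $S_A\circ S_B$ as a canonical transducer that performs a single parallel query to $O$ on the $O$-query register of $S_A$ together with the query register of $S_B$, followed by $I\otimes S_B^\circ$ and then $S_A^\circ$ as the combined work unitary. This means no scheduling of interleaved queries is required, and the construction is essentially a direct sum of known pieces.

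Setting up the spaces: split $\cL_A^\bullet=\cL_A^\uparrow\otimes(\cM\oplus\cM')$ into $\cL_{A,\cM}^\bullet=\cL_A^\uparrow\otimes\cM$ and $\cL_{A,\cM'}^\bullet=\cL_A^\uparrow\otimes\cM'$, according to which side of the combined oracle reads each piece. Define $S_A\circ S_B$ to have public space $\cH$, work part of private space $\cL_A^\circ\oplus\cL_{A,\cM'}^\bullet\oplus(\cL_A^\uparrow\otimes\cL_B^\circ)$, query part $\cL_{A,\cM}^\bullet\oplus(\cL_A^\uparrow\otimes\cL_B^\bullet)$ (both processed by $O$), and work unitary $S_A^\circ\cdot\bigl(I_{\cL_A^\uparrow}\otimes S_B^\circ\bigr)$, each factor extended trivially on the remaining registers. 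For a given $\xi\in\cH$, I propose the catalyst $v_A\oplus w_B$, where $v_A=v(S_A,O\oplus U_B(O),\xi)=v_A^\circ\oplus v_{A,\cM}^\bullet\oplus v_{A,\cM'}^\bullet$ is $S_A$'s catalyst for the ``idealised'' oracle $O\oplus U_B(O)$, and $w_B=v(I_{\cL_A^\uparrow}\otimes S_B,O,v_{A,\cM'}^\bullet)$ is the catalyst of $S_B$ applied in parallel to $v_{A,\cM'}^\bullet$. The latter is well-defined and depends linearly on $v_{A,\cM'}^\bullet$ by \rf{clm:linearity} together with the parallel-composition conventions~\rf{eqn:xiLargerSpace} and~\rf{eqn:parallelTransducerComplexity}.

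The verification is a direct trace through the three steps. Applying $\widetilde O$ queries $O$ on $v_{A,\cM}^\bullet$ and on the query part of $w_B$; applying $I\otimes S_B^\circ$ completes the parallel $S_B$-step, so that $\cL_{A,\cM'}^\bullet$ now holds $(I\otimes U_B(O))v_{A,\cM'}^\bullet$ while $w_B$ is returned to itself; applying $S_A^\circ$ completes $S_A$ on the effective oracle $O\oplus U_B(O)$, mapping $\xi$ to $U_A[O\oplus U_B(O)]\xi$ and restoring $v_A$. This establishes the transduction action. Equation~\rf{eqn:compositionTransducerComplexity1} is then read off as the query part of the catalyst: $v_{A,\cM}^\bullet=q^{(0)}(S_A,O\oplus U_B(O),\xi)$ direct-summed with the query part of $w_B$, which by~\rf{eqn:parallelTransducerComplexity} equals $q(S_B,O,q^{(1)}(S_A,O\oplus U_B(O),\xi))$. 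Equation~\rf{eqn:compositionTransducerComplexity2} is then $\|v_A\|^2+\|w_B\|^2$, and the iteration-time bound $T(S_A)+T(S_B)$ together with the additive space bound are immediate from the structure of the work unitary.

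The only real obstacle is the passage from $S_B$'s nominal action on its public space $\cM'$ to its action on the auxiliary-tensored vector $v_{A,\cM'}^\bullet\in\cL_A^\uparrow\otimes\cM'$, and the verification that this parallel transducer still has the expected additive behaviour of catalyst norm and total query state. Once the conventions from \rf{sec:composition} are invoked, the argument reduces to direct-sum bookkeeping with no cross-terms between the $v_A$- and $w_B$-parts of the catalyst, and the proof is complete.
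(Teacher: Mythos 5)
Your construction is correct and is essentially the intended one: the paper states this theorem without proof (deferring to the logic behind \rf{eqn:compositionAlgorithmComplexity} and to the composition machinery of the cited prior work), and the standard argument is exactly what you give --- a single combined query to $O$ on the merged query register, followed by $I\otimes S_B^\circ$ and then $S_A^\circ$, with catalyst $v_A\oplus w_B$. Your trace through the three steps, including the implicit observation that the two query branches can be handled in one query operator because $S_B^\circ$ does not act on $S_A$'s $\cM$-query register, correctly verifies the transduction action, both complexity formulas, and the additive time and space bounds.
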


The logic behind formulae~\rf{eqn:compositionTransducerComplexity1} and~\rf{eqn:compositionTransducerComplexity2} is the same as behind~\rf{eqn:compositionAlgorithmComplexity}.

\subsection{Perturbed Transducers}
\label{sec:transPerturbed}

In \rf{sec:transducerComplexities}, we assumed that the algorithm $A_S$ implementing $S$ does its job exactly.
Most of the results in this section hold if we allow small perturbations in the implementation of $S$.

A \emph{perturbed transducer} $S$ is a transducer whose implementation $A_S$ is a perturbed algorithm in the sense of \rf{sec:perturbed}.
The requirement for $S$ to transduce $\xi$ into $\tau$ is still that $S$ maps $\xi\oplus v$ into $\tau\oplus v$ (with the help of the perturbations).
We define the perturbation of the transducer as
\[
\eta(S, O, \xi) = \eta(A_S, O, \xi\oplus v),
\]
where $v$ is the corresponding catalyst, and the right-hand side is as in~\rf{eqn:perturbationOfAlgorithm}.

All the definitions made for usual transducers can be adopted for perturbed transducers, and the implementations in \rf{thm:transducer-algorithm} and~\ref{thm:transducer-algorithm-query} hold for them as well.
For example, the right-hand side of~\rf{eqn:transducer-algorithm} in \rf{thm:transducer-algorithm} estimates the perturbations used in implementation of the transduction action, and \eqrf{eqn:transducer-algorithm} still holds (assuming each execution of $S$ in the theorem is still an execution of the perturbed version of $S$).
The perturbations in $S$ amount to $K$ total perturbations, each equal to $\eta(S,O,\xi)/\sqrt{K}$, which allows to estimate the total perturbation of the resulting algorithm using \rf{lem:surgery}.
It is also possible to estimate the perturbation of the composed transducer like in \rf{sec:composition}, but will not go into detail as it is outside the scope of this paper.
For more detail, see~\cite{belovs:taming}.

\section{Infinite-Dimensional Construction}
\label{sec:infiniteDimensional}

In this section, we describe a version of the purifier in infinite-dimensional space.
In  \rf{sec:SimpleInfinite}, we start with a formal analysis of the quantum walk on the infinite ray from \rf{sec:intuition}, which uses a simplified input oracle.
In \rf{sec:generalInfinite}, we generalize it to the real input oracle $\Oref$ from~\rf{eqn:OrefGeneral}, thus proving \rf{thm:mainIntro}.
In \rf{sec:alternativeInputOracle}, we prove \rf{cor:IntroAlternativeOracle}.

Although the purifiers constructed in this section cannot be implemented directly due to their infinite space requirements, they can still be analysed and they demonstrate all the main ideas in the clearest way.
In \rf{sec:finiteDimensions}, we will modify the purifiers so that they use finite space.


\subsection{Special Case}
\label{sec:SimpleInfinite}
In this section, we perform a formal analysis of the quantum walk on the infinite ray described in \rf{sec:intuition}.
As there, we assume a simplified input oracle $O_p\colon \bC^2\to\bC^2$ reflecting about the state 
\begin{equation}
\label{eqn:varphi}
\varphi_p = \sqrt{1-p} \ket|0> + \sqrt{p}\ket|1>,
\end{equation}
which is similar to the state in~\rf{eqn:InputOracle}, but without the ``garbage'' parts $\phi_0$ and $\phi_1$.
In this section, we use parameter
\begin{equation}
\label{eqn:gamma}
\gamma = \sqrt{\frac{p}{1-p}}.
\end{equation}
The two extreme cases $p=0$ and $p=1$, corresponding to $\gamma = 0$ and $\gamma=\infty$, respectively, are also covered by the analysis, where we use the usual convention that $0^{-1} = \infty$ and $\infty^{-1}=0$.

The purifier transducer $S = S(O_p)$ acts in one register $\reg N$ which stores a non-negative integer, usually denoted by $j$ in this section.
However, to make the increment and the decrement operations in $\reg N$ unitary, we have to assume that $\reg N$ can contain an arbitrary integer, both positive or negative.
We use $\reg A$ to denote the least significant qubit of $\reg N$, and $\reg K$ for the remaining qubits of $\reg N$.
Thus, $\ket K|i> \ket A |b> = \ket N |2i+b>$ for an integer $i$ and $b\in\{0,1\}$.
The public space of $S$ is spanned by $\ket N |0>$, and the private space by $\ket N |j>$ with $j>0$.
As mentioned above, the states $\ket N|j>$ with $j<0$ are not used.
The transducer
\begin{equation}
\label{eqn:S}
S = R_2R_1,
\end{equation}
is a product of two reflections as defined in \rf{fig:simple}.

\myfigure
\label{fig:simple}
=====
\[
R_1\colon \quad
\begin{quantikz}
\lstick{$\cK$}& \qw & \qw\\
\lstick{$\cA$}& \gate{O_p} &\qw
\end{quantikz}
\qquad
R_2\colon \quad
\begin{quantikz}
\lstick{$\cK$}& \gate[2]{+1} &  \ctrl{1} & \gate[2]{-1} &\qw\\
\lstick{$\cA$}& & \gate{O_p} & & \qw
\end{quantikz}
\]
-----
Two reflections comprising the transducer $S = R_2R_1$.
Both reflections act in the register $\cN = \cA\otimes \cK$, where $\cA$ stores the least significant qubit of $\cN$.
The $+1$ and the $-1$ operators increment and decrement the register $\cN$, respectively.
The control for $O_p$ in $R_2$ is on $\cK$ being non-zero.
=====

As we will shortly see in \rf{clm:simpleReflections}, $R_1$ and $R_2$, restricted to the span of $\ket N |j>$ for $j\geq 0$, are the local reflections around the odd and even vertices in \rf{fig:line}.
This includes the local reflection corresponding to the vertex 0, which reflects about the state $\ket|0>$.

The following vectors will be important in the analysis, where $j$ is a positive integer:
\begin{equation}
\label{eqn:simpleStates}
\begin{split}
u_j &= \ket |j-1> + \gamma \ket|j>\\
&\;\propto \sqrt{1-p}\ket |j-1>+\sqrt{p}\ket |j>
\end{split}
\qqand
\begin{split}
u_j^\perp &= \ket |j-1> - \gamma^{-1} \ket|j>\\
&\;\propto \sqrt{p}\ket |j-1> -\sqrt{1-p}\ket |j>.
\end{split}
\end{equation}
In the extreme case of $\gamma = \infty$, we define $u_j = \ket|j>$ and $u_j^\perp = \ket|j-1>$, which still satisfy the proportionality claims in~\rf{eqn:simpleStates}.
Similarly, if $\gamma=0$, we let $u_j^\perp = - \ket |j>$, which also satisfies the proportionality claim.

\begin{clm}
\label{clm:simpleReflections}
For all odd $j$, we have $R_1 u_j = u_j$ and $R_1 u_j^\perp = -u_j^\perp$.
For all even $j>0$, we have $R_2 u_j = u_j$ and $R_2 u_j^\perp = -u_j^\perp$.
Also, $R_2\ket|0> = \ket |0>$.
\end{clm}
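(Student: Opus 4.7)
The plan is to unfold the definitions of $R_1$ and $R_2$ from \rf{fig:simple} and reduce every case to a one-qubit reflection about $\varphi_p$ (acting on the $\reg A$ register), which is the defining action of $O_p$. The key observation is that the vectors in~\rf{eqn:simpleStates} factor nicely in the $\reg K \otimes \reg A$ decomposition once we respect the parity of $j$.

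First I would analyse $R_1$, which is simply $O_p$ acting on $\reg A$. For odd $j = 2i+1$, the states $\ket N|j-1>$ and $\ket N|j>$ share the same value $i$ in $\reg K$ and differ only in $\reg A$ as $\ket A|0>$ and $\ket A|1>$. Thus $u_j = \ket K|i> \otimes(\ket A|0> + \gamma \ket A|1>)$, and one checks from~\rf{eqn:varphi} and~\rf{eqn:gamma} that $\ket|0> + \gamma\ket|1> \propto \varphi_p$, while $\ket|0> - \gamma^{-1}\ket|1>$ is orthogonal to $\varphi_p$. Hence $R_1$ fixes $u_j$ and negates $u_j^\perp$. (The limits $p = 0, 1$ need a brief separate mention, using the conventions stated after~\rf{eqn:gamma}.)

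Next I would treat $R_2$ for even $j \ge 2$ by tracking the state through the circuit ``$+1$, then $O_p$ controlled on $\reg K\ne 0$, then $-1$''. For even $j$, the increment sends $u_j = \ket N|j-1> + \gamma\ket N|j>$ to $\ket N|j> + \gamma\ket N|j+1>$; both of these basis vectors share $\reg K$ value $j/2 \ge 1$, so the control fires. As above, the $\reg A$-part equals $\ket|0> + \gamma\ket|1>$, which is fixed by $O_p$, and decrementing returns $u_j$. The argument for $u_j^\perp$ is identical except the $\reg A$-part is now orthogonal to $\varphi_p$, producing an overall minus sign that survives the decrement.

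Finally, for $R_2 \ket N|0>$: incrementing gives $\ket N|1> = \ket K|0>\ket A|1>$, at which point the control is \emph{not} triggered (since $\reg K = 0$), so $O_p$ is skipped and decrementing returns $\ket N|0>$. I do not anticipate any real obstacle; the only subtlety worth flagging carefully is the role of the ``$\reg K$ non-zero'' control, since it is precisely what truncates the walk at vertex $0$ and is responsible for the asymmetry between $R_2\ket|0>$ and the action of $R_2$ on higher vectors.
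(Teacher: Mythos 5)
Your proposal is correct and follows essentially the same route as the paper's proof: factor the states through the $\reg K\otimes\reg A$ decomposition according to the parity of $j$, reduce each case to the one-qubit reflection of $O_p$ about $\varphi_p$ (noting $\ket|0>+\gamma\ket|1>\propto\varphi_p$ and $\ket|0>-\gamma^{-1}\ket|1>\perp\varphi_p$), and observe that the increment/decrement in $R_2$ merely shifts the pairing while the $\reg K\neq 0$ control handles the vertex-$0$ boundary. Your treatment of $R_2$ is in fact slightly more explicit than the paper's, which dispatches it with ``proven similarly.''
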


\begin{proof}
By construction, for every non-negative integer $i$, $R_1$ acts as identity on $\ket K|i> \otimes \ket A|\varphi_p> = \sqrt{1-p} \ket|2i> + \sqrt{p}\ket|2i+1>$, and flips the phase of (multiplies by $-1$) its orthogonal complement in the span of $\ket|2i>$ and $\ket |2i+1>$.  The state $u_{2i+1}$ is proportional to the former, and $u_{2i+1}^\perp$ to the latter.
The case of $R_2$ is proven similarly.
\end{proof}

We can now prove that $S=R_2R_1$ is an \emph{exact} transducer with the desired transduction action, and query complexity precisely $1/(2\delta) = |1-2p|^{-1}$.
\begin{thm}
\label{thm:simple}
In the above notation, $S = R_2R_1$ performs the following transduction:
\begin{equation}
\label{eqn:simple}
S(O_p) \colon \ket |0> \transduce{} 
\begin{cases}
\ket |0>, &\text{if $p < \tfrac 12$;}\\
- \ket |0>, &\text{if $p > \tfrac 12$.}
\end{cases}
\end{equation}
The query complexity of the transducer is $L(S, O_p, \ket |0>)=1/(2\delta)$, 
and the transduction complexity is $W(S, O_p,\ket|0>) = \OO(1/\delta)$, where $\delta = |\tfrac12 - p|$.
\end{thm}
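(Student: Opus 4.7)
The plan is to exhibit, for each sign of $p-\tfrac12$, an explicit catalyst $v$ in the private space $\spn\{\ket|j>:j\geq 1\}$, and then verify the transduction~\rf{eqn:transDefinition} together with the claimed values of $W$ and $L$ by direct computation in the $u_j,u_j^\perp$ basis of \rf{clm:simpleReflections}. The main conceptual obstacle is the asymmetry between the two sides of $p=\tfrac12$: on the recurrent side ($p<\tfrac12$) the natural $+1$ eigenvector of $S$ coming from the classical random walk picture already lies in $\ell^2$ and supplies $v$ directly, whereas on the transient side ($p>\tfrac12$) that same formal eigenvector escapes $\ell^2$, so one has to guess an ``alternating geometric'' catalyst and settle for the weaker identity $S\tilde\psi=\tilde\psi-2\ket|0>$ in place of an eigenvalue equation.

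For $p<\tfrac12$ (so $\gamma<1$), I take $v=\sum_{j\geq 1}\gamma^j\ket|j>$ and $\psi=\ket|0>+v=\sum_{j\geq 0}\gamma^j\ket|j>$, which is in $\ell^2$. Grouping the summands of $\psi$ into the $R_1$-pairs $(\ket|2i>,\ket|2i+1>)$ gives $\psi=\sum_{i\geq 0}\gamma^{2i}u_{2i+1}$, so by \rf{clm:simpleReflections}, $R_1\psi=\psi$. Regrouping $\psi$ as $\ket|0>$ together with the $R_2$-pairs $(\ket|2i-1>,\ket|2i>)$ for $i\geq 1$ gives $\psi=\ket|0>+\sum_{i\geq 1}\gamma^{2i-1}u_{2i}$, so $R_2\psi=\psi$ as well, whence $S\psi=\psi$. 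This is the required transduction $\ket|0>\transduce{S}\ket|0>$ with catalyst $v$; summing the geometric series yields $W=\|v\|^2=\gamma^2/(1-\gamma^2)=p/(2\delta)$.

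For $p>\tfrac12$ (so $\gamma>1$), I take $v=\sum_{j\geq 1}(-\gamma^{-1})^j\ket|j>$ and $\tilde\psi=\ket|0>+v=\sum_{j\geq 0}(-\gamma^{-1})^j\ket|j>$, which is again in $\ell^2$. Decomposing in the $R_1$-pairs gives $\tilde\psi=\sum_{i\geq 0}\gamma^{-2i}u_{2i+1}^\perp$, so $R_1\tilde\psi=-\tilde\psi$. Decomposing in the $R_2$-pairs gives $\tilde\psi=\ket|0>-\sum_{i\geq 1}\gamma^{-(2i-1)}u_{2i}^\perp$, whence $R_2\tilde\psi=\ket|0>+\sum_{i\geq 1}\gamma^{-(2i-1)}u_{2i}^\perp=2\ket|0>-\tilde\psi$. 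Combining, $S\tilde\psi=R_2(-\tilde\psi)=\tilde\psi-2\ket|0>$, i.e., $S(\ket|0>\oplus v)=(-\ket|0>)\oplus v$, which is $\ket|0>\transduce{S}-\ket|0>$ with catalyst $v$. A geometric series gives $W=\|v\|^2=1/(\gamma^2-1)=(1-p)/(2\delta)$. In either regime $W\leq 1/(2\delta)=\OO(1/\delta)$.

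Finally, for the Las Vegas query complexity, treat $S=R_2R_1$ as the algorithm $U_2\,\tO_2\,U_1\,\tO_1\,U_0$ with $U_0=I$, $U_1=+1_{\cN}$, $U_2=-1_{\cN}$; here $\tO_1=I_{\cK}\otimes O_p$ has trivial identity part (every state is queried), while $\tO_2$, being controlled on $\cK\neq 0$, has identity part $\spn\{\ket|0>,\ket|1>\}$. The state just before the first query equals $\psi$ (respectively $\tilde\psi$), contributing $\|\psi\|^2=1/(1-\gamma^2)=(1-p)/(2\delta)$ (resp.\ $\|\tilde\psi\|^2=1/(1-\gamma^{-2})=p/(2\delta)$). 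Since $R_1$ acts as $\pm I$ on these vectors, the state just before the second query is a signed copy of the shift $(+1_{\cN})\psi=\sum_{j\geq 1}\gamma^{j-1}\ket|j>$ (resp.\ the analogous shift of $\tilde\psi$); its projection onto the ``query part'' $\{\ket|j>:j\geq 2\}$ is a geometric tail with squared norm $\gamma^2/(1-\gamma^2)=p/(2\delta)$ (resp.\ $\gamma^{-2}/(1-\gamma^{-2})=(1-p)/(2\delta)$). Summing, $L=p/(2\delta)+(1-p)/(2\delta)=1/(2\delta)$ in both cases, as claimed.
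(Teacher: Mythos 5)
Your proposal is correct and follows essentially the same route as the paper's proof: the same catalysts $\sum_{j\ge1}\gamma^j\ket|j>$ and $\sum_{j\ge1}(-\gamma)^{-j}\ket|j>$, the same regrouping of the initial coupling into the $u_j$ (resp.\ $u_j^\perp$) pairs fixed (resp.\ reflected) by $R_1$ and $R_2$, and the same geometric-series evaluation of the query contributions $\|\psi\|^2$ and the $j\ge 1$ tail, summing to $1/(2\delta)$. Your phrasing of the transient case via $R_2\tilde\psi=2\ket|0>-\tilde\psi$, and your explicit per-query split $(1-p)/(2\delta)+p/(2\delta)$, are only cosmetic rearrangements of the computation in the paper.
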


\begin{proof}
If $p < \tfrac12$, then $\gamma < 1$, and we define the catalyst $\ket |v> = \sum_{j=1}^\infty\gamma^j\ket |j>$.
The action of the transducer on the initial coupling is given by
\begin{equation}
\label{eqn:SimpleSequence1}
\begin{aligned}
\ket|0>+\ket|v>=&\sum_{j=0}^\infty \gamma^j \ket |j> = \sum_{i=0}^\infty \gamma^{2i} \sA[\ket |2i> + \gamma \ket |2i+1>]\\
\maps{R_1} & 
\sum_{j=0}^\infty \gamma^j \ket |j>
= \ket |0> + \sum_{i=0}^\infty \gamma^{2i+1} \sA[\ket |2i+1> + \gamma \ket |2i+2>]\\
\maps{R_2}&
\sum_{j=0}^\infty \gamma^j \ket |j> = \ket|0>+\ket|v>,
\end{aligned}
\end{equation}
where we used \rf{clm:simpleReflections}.
Therefore, $\ket |0> \transduce{S(O_p)} \ket |0>$.
The input oracle is applied to the states in the brackets above, therefore, by~\rf{eqn:LasVegas}, the Las Vegas query complexity is given by
\begin{multline*}
\sum_{i=0}^\infty \normB|{\gamma^{2i} \sA[\ket |2i> + \gamma \ket |2i+1>] }|^2 + 
\sum_{i=0}^\infty \normB|{\gamma^{2i+1} \sA[\ket |2i+1> + \gamma \ket |2i+2>] }|^2 
\\=
\sum_{j=0}^\infty \gamma^{2j} +  \sum_{j=1}^\infty \gamma^{2j} = \frac{2}{1-\gamma^2} - 1 = \frac1{2\delta}.
\end{multline*}

Now, if $p > \frac 12$, then $\gamma > 1$, and we define $\ket|v>=\sum_{j=1}^\infty (-\gamma)^{-j}\ket|j>$.
The action of the transducer on the initial coupling is given by
\begin{equation}
\label{eqn:SimpleSequence2}
\begin{aligned}
\ket|0>+\ket|v>=&\sum_{j=0}^\infty (-\gamma)^{-j} \ket |j> 
= \sum_{i=0}^\infty \gamma^{-2i} \sA[\ket |2i> - \gamma^{-1} \ket |2i+1>]\\
\maps{R_1} & 
-\sum_{j=0}^\infty (-\gamma)^{-j} \ket |j>
= -\ket |0> + \sum_{i=0}^\infty \gamma^{-2i-1} \sA[\ket |2i+1> - \gamma^{-1} \ket |2i+2>]\\
\maps{R_2}&
-\ket |0> - \sum_{i=0}^\infty \gamma^{-2i-1} \sA[\ket |2i+1> - \gamma^{-1} \ket |2i+2>] \\
&\quad= - \ket|0> + \sum_{j=1}^\infty (-\gamma)^{-j} \ket |j>
=-\ket|0>+\ket|v>.
\end{aligned}
\end{equation}
Therefore, $\ket |0> \transduce{S(O_p)} -\ket |0>$.
The Las Vegas query complexity is
\[
\sum_{j=0}^\infty \gamma^{-2j} +  \sum_{j=1}^\infty \gamma^{-2j} = \frac{2}{1-\gamma^{-2}} - 1 = \frac1{2\delta}.
\]
By a similar computation, we can show that $\normA|v|^2\leq \frac{1}{2\delta}$ in both cases, showing that the transduction complexity is at most $\frac{1}{2\delta}$.
\end{proof}

\begin{rem}
\label{rem:SimpleOrder}
Note that \rf{thm:simple} holds for $S = R_1R_2$ as well.
The sequence~\rf{eqn:SimpleSequence1} is similar with the same catalyst; in the sequence~\rf{eqn:SimpleSequence2}, we should replace the catalyst with $- \sum_{j=1}^\infty (-\gamma)^{-j} \ket|j>$.
\end{rem}

\subsection{General Case}
\label{sec:generalInfinite}

In this section, we prove \rf{thm:mainIntro}.
For that, we define the purifier transducer $S' = S'(\Oref)$ for the general case when the reflecting oracle $\Oref$ in $\cA\otimes \cW$ satisfies~\rf{eqn:OrefGeneral}.
We first describe the transducer, and then explain its relation to the purifier from~\rf{sec:SimpleInfinite}.

The transducer uses the register $\reg J$ for storing an index, which can be an arbitrary non-negative integer.
Again, in order to make the increment and decrement unitary, we assume that $\reg J$ can store an arbitrary integer, but we will not use the negative values.
The transducer also uses the registers $\reg A$ and $\reg W$ acted on by the input oracle.
Note that this time, contrary to \rf{sec:SimpleInfinite}, $\reg A$ is \emph{not} a part of the register $\reg J$, but a separate register.
When describing the states, the registers are listed in this order.
The public space of the transducer is $\ket J|0>\otimes \cA\otimes \cW$, which we identify with $\cA\otimes \cW$.
The purifier transducer is given by 
\begin{equation}
\label{eqn:S'}
S' = R_2'R_1'
\end{equation}
for reflections $R_1'$ and $R_2'$ defined as in \rf{fig:general}. 
Each of $R_1'$ and $R_2'$ makes precisely one query to $\Oref$, thus, $S'$ makes two queries.

The implementation~\rf{eqn:S'} of $S'$ is in the form of \rf{eqn:prelimAlgorithm}, which can be achieved by breaking the controlled-$(-\Oref)$ operation in $R_2'$ into the controlled phase and the controlled-$\Oref$ operations.

\myfigure
\label{fig:general}
=====
\[
R_1'\colon \quad
\begin{quantikz}
\lstick{$\cJ$}& \gate{+1}  & \ctrl{1} &  \gate{-1} & \qw\\
\lstick{$\cA$}& \ctrl[open]{-1} & \gate[2]{\Oref} & \ctrl[open]{-1} & \qw\\
\lstick{$\cW$}& \qw &  & \qw & \qw
\end{quantikz}
\qquad\quad
R_2'\colon \quad
\begin{quantikz}
\lstick{$\cJ$}& \gate{+1}  & \ctrl{1} &  \gate{-1} & \qw\\
\lstick{$\cA$}& \ctrl{-1} & \gate[2]{-\Oref} & \ctrl{-1} & \qw\\
\lstick{$\cW$}& \qw &  & \qw & \qw
\end{quantikz}
\]
-----
Two reflections that give the transducer $S' = R_2'R_1'$.
Closed controls ($\bullet$), including the one from the register $\reg J$, are controlled on the wire being non-zero, whereas open controls ($\circ$) are controlled on the wire being 0. The gates $+1$ and $-1$ denote increment and decrement in $\reg J$, respectively.
=====

The following claim is similar to \rf{clm:simpleReflections}.

\begin{clm}
The operator $R_1'$ acts as identity on the states
\begin{equation}
\label{eqn:States1 NonReflected}
\ket |0>\ket|1>\ket|\phi_1>
\qqand 
\ket |j-1>\ket |0> \ket |\phi_0> + \gamma \ket |j>\ket |1> \ket |\phi_1>
\end{equation}
for all positive integers $j$.
The following states get reflected for all positive integers $j$:
\begin{equation}
\label{eqn:States1 Reflected}
\ket |j-1>\ket |0> \ket |\phi_0> - \gamma^{-1} \ket |j>\ket |1> \ket |\phi_1>.
\end{equation}

The operator $R_2'$ acts as identity on the states
\begin{equation}
\label{eqn:States2 NonReflected}
\ket |0>\ket|0>\ket|\phi_0>
\qqand
\ket |j-1>\ket |1> \ket |\phi_1> - \gamma \ket |j>\ket |0> \ket |\phi_0>
\end{equation}
for all positive integers $j$. 
It reflects the following states for all positive integers $j$:
\begin{equation}
\label{eqn:States2 Reflected}
\ket |j-1>\ket |1> \ket |\phi_1> + \gamma^{-1} \ket |j>\ket |0> \ket |\phi_0>.
\end{equation}
\end{clm}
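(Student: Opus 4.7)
The plan is to verify each of the six identities by directly walking through the three-stage circuit of $R_1'$ or $R_2'$. The key observation that does most of the work is the following reformulation of~\rf{eqn:OrefGeneral} in terms of $\gamma=\sqrt{p/(1-p)}$: both $\ket |0>\ket|\phi_0>+\gamma\ket |1>\ket|\phi_1>$ and $\gamma^{-1}\ket |0>\ket|\phi_0>+\ket |1>\ket|\phi_1>$ are positive scalar multiples of the $+1$ eigenvector of $\Oref$, while $\ket |0>\ket|\phi_0>-\gamma^{-1}\ket |1>\ket|\phi_1>$ and $\gamma\ket |0>\ket|\phi_0>-\ket |1>\ket|\phi_1>$ are positive scalar multiples of the $-1$ eigenvector. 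These are exactly the four $\cA\otimes\cW$ factors that appear in the two-term states of the claim, once a common $\cJ$-label is factored out.

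For $R_1'$, consider the input $\ket |j-1>\ket|0>\ket|\phi_0>+\gamma\ket |j>\ket|1>\ket|\phi_1>$ with $j\geq 1$. The opening increment of $\cJ$, gated on $\cA=0$, advances the first summand from $\ket |j-1>$ to $\ket |j>$ and leaves the $\gamma$-branch (which has $\cA=1$) alone, producing the factorised state $\ket |j>\otimes(\ket |0>\ket|\phi_0>+\gamma\ket |1>\ket|\phi_1>)$. Since $\cJ\neq 0$, the controlled $\Oref$ acts on a $+1$ eigenvector and is the identity; the terminal decrement then undoes the initial increment, so $R_1'$ is the identity on this input. The state in~\rf{eqn:States1 Reflected} is handled the same way: the increment factorises it as $\ket |j>\otimes(\ket |0>\ket|\phi_0>-\gamma^{-1}\ket |1>\ket|\phi_1>)$, $\Oref$ multiplies by $-1$, and the decrement desynchronises back, yielding the overall sign flip. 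Finally, on the boundary state $\ket |0>\ket|1>\ket|\phi_1>$ every gate is blocked: the increment and decrement by $\cA=1$, and the oracle by $\cJ=0$.

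The analysis of $R_2'$ is structurally identical, with the controls on $\cA$ flipped and the oracle replaced by $-\Oref$. On the two-term inputs of~\rf{eqn:States2 NonReflected} and~\rf{eqn:States2 Reflected}, the $\cA=1$-controlled increment again synchronises the two summands to a common $\cJ$-label $j\geq 1$; the resulting $\cA\otimes\cW$ factor is a $-1$ eigenvector of $\Oref$ in the first case (so $-\Oref$ acts as $+1$) and a $+1$ eigenvector in the second (so $-\Oref$ acts as $-1$), and the closing decrement restores the original $\cJ$ structure. The boundary state $\ket |0>\ket|0>\ket|\phi_0>$ is invariant for the same gating reason as $\ket |0>\ket|1>\ket|\phi_1>$ under $R_1'$.

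The argument is pure bookkeeping and I do not foresee any real obstacle. The only step that requires care is the sign accounting through the controlled $(-\Oref)$ in $R_2'$, together with the verification that the synchronising increment indeed collapses each two-term superposition into a clean tensor product $\ket |j>\otimes(\cdots)$; once that step is in place, each of the six cases reduces to a three-line computation.
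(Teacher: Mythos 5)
Your proposal is correct and follows essentially the same route as the paper's own proof: a direct pass through the increment--oracle--decrement circuit, using that each two-term state factors (after the controlled increment) into $\ket|j>$ times a scalar multiple of a $\pm1$ eigenvector of $\Oref$ from~\rf{eqn:OrefGeneral}, with the two boundary states handled by observing that no controls fire. The paper writes out the computation only for $R_1'$ and declares $R_2'$ ``virtually identical,'' whereas you additionally spell out the sign bookkeeping for the controlled $-\Oref$; the substance is the same.
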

\begin{proof}
We prove the statements for $R_1'$, as those for $R_2'$ are virtually identical. Since $\ket|0>\ket|1>\ket|\phi_1>$ has $\ket|0>\ket|1>$ in the first two registers, none of the controls gets activated, and so all gates act as the identity. For any $j>0$, using $\gamma=\sqrt{p/(1-p)}$, and the fact that $\Oref$ fixes $\sqrt{1-p}\ket|0>\ket|\phi_0>+\sqrt{p}\ket|1>\ket|\phi_1>$,
\begin{align*}
    \ket|j-1>\ket|0>\ket|\phi_0>+\gamma\ket|j>\ket|1>\ket|\phi_1> &\overset{+1}{\longmapsto} \frac{\ket|j>}{\sqrt{1-p}}\left(\sqrt{1-p}\ket|0>\ket|\phi_0>+\sqrt{p}\ket|1>\ket|\phi_1>\right)\\
    &\overset{\Oref}{\longmapsto} \frac{\ket|j>}{\sqrt{1-p}}\left(\sqrt{1-p}\ket|0>\ket|\phi_0>+\sqrt{p}\ket|1>\ket|\phi_1>\right)\\
    &\overset{-1}{\longmapsto} \ket|j-1>\ket|0>\ket|\phi_0>+\gamma\ket|j>\ket|1>\ket|\phi_1>.
\end{align*}
Similarly, using the fact that $\Oref$ reflects $\sqrt{p}\ket|0>\ket|\phi_0>-\sqrt{1-p}\ket|1>\ket|\phi_1>$, 
\begin{align*}
    \ket|j-1>\ket|0>\ket|\phi_0>-\gamma^{-1}\ket|j>\ket|1>\ket|\phi_1> &\overset{+1}{\longmapsto} \frac{\ket|j>}{\sqrt{p}}\left(\sqrt{p}\ket|0>\ket|\phi_0>-\sqrt{1-p}\ket|1>\ket|\phi_1>\right)\\
    &\overset{\Oref}{\longmapsto} -\frac{\ket|j>}{\sqrt{p}}\left(\sqrt{p}\ket|0>\ket|\phi_0>-\sqrt{1-p}\ket|1>\ket|\phi_1>\right)\\
    &\overset{-1}{\longmapsto} -\left(\ket|j-1>\ket|0>\ket|\phi_0>-\gamma^{-1}\ket|j>\ket|1>\ket|\phi_1>\right).\qedhere
\end{align*}
\end{proof}

\myfigure
\label{fig:walkGeneral}
=====
\negmedskip
\[
\def\nodegap{2.2}
\begin{tikzpicture}
\foreach \i in {0,1}
    \foreach \x in {0,...,5}
        \draw (\x*\nodegap, -\i*1.5) node[draw,shape=circle,inner sep=5] (\i\x) {} node {$\x$};
\draw (00) --node[below,blue] {$\ket|0>\ket|0>\ket|\phi_0>$} (01);
\draw (01) --node[below,blue] {$\ket|1>\ket|1>\ket|\phi_1>$} (02);
\draw (02) --node[below,blue] {$-\ket|2>\ket|0>\ket|\phi_0>$} (03);
\draw (03) --node[below,blue] {$-\ket|3>\ket|1>\ket|\phi_1>$} (04);
\draw (04) --node[below,blue] {$\ket|4>\ket|0>\ket|\phi_0>$} (05);
\draw (05) --node[below,blue] {$\ket|5>\ket|1>\ket|\phi_1>$} ++(\nodegap,0) node (06) {} +(0.5,0) node{$\cdots$};
\draw (10) --node[below,blue] {$\ket|0>\ket|1>\ket|\phi_1>$} (11);
\draw (11) --node[below,blue] {$-\ket|1>\ket|0>\ket|\phi_0>$} (12);
\draw (12) --node[below,blue] {$-\ket|2>\ket|1>\ket|\phi_1>$} (13);
\draw (13) --node[below,blue] {$\ket|3>\ket|0>\ket|\phi_0>$} (14);
\draw (14) --node[below,blue] {$\ket|4>\ket|1>\ket|\phi_1>$} (15);
\draw (15) --node[below,blue] {$-\ket|5>\ket|0>\ket|\phi_0>$} ++(\nodegap,0) node (16) {}+(0.5,0) node{$\cdots$};
\foreach \i in {0,1}
    \foreach \x [evaluate={\y=int(\x+1)}] in {0,...,5}
        \draw (\i\x) --node[red, above]{$\ket N_\i|\x>$} (\i\y);
\end{tikzpicture}
\]
-----
As shown in \CrossRef{Lemmata}{lem:one} and~\ref{lem:two}, the purifier in this section can be seen as a quantum walk on two copies of the ray graph from \rf{fig:line}. Just as in \rf{fig:line}, the weight of the edge between vertices $j$ and $j+1$, in either copy, is $\gamma^{2j}$ (not pictured).
The top ray represents \rf{lem:one}, and the bottom one~\rf{lem:two}.
Below each edge, the state that encodes it in the quantum walk is written.
The pattern repeats with period 4.
All these states are pairwise orthogonal.
Above the edge, the corresponding basis state of the register $\cN_0$ or $\cN_1$ from the proof of the corresponding lemma is given.
=====

The purifier $S'=R_2'R_1'$ can be interpreted as a quantum walk on a graph consisting of two disjoint rays, as described in \rf{fig:walkGeneral}. Concretely, in \rf{lem:one}, we show $\ket|0>\ket|0>\ket|\phi_0>\transduce{S'(\Oref)}(-1)^r\ket|0>\ket|0>\ket|\phi_0>$, where $r=0$ if $p<1/2$ and $r=1$ if $p>1/2$; and a similar statement for $\ket|0>\ket|0>\ket|\phi_1>$ in \rf{lem:two}; and finally, in \rf{thm:mainInfinite}, we combine these to show that $\ket|0>\ket|\phi>\transduce{S'(\Oref)}(-1)^r\ket|0>\ket|\phi>$, proving the main theorem of this section.

\begin{lem}
\label{lem:one}
In the above notation, the transducer $S'$, given an input oracle $\Oref$ satisfying~\rf{eqn:OrefGeneral}, has the following transduction action:
\begin{equation}
\label{eqn:1 transduction}
S'(\Oref)\colon \ket |0>\ket|0>\ket|\phi_0> \transduce{} 
\begin{cases}
\ket |0>\ket|0>\ket|\phi_0>, &\text{if $p < \tfrac 12$;}\\
- \ket |0>\ket|0>\ket|\phi_0>, &\text{if $p > \tfrac 12$.}
\end{cases}
\end{equation}
The query complexity and the transduction complexity of $S'$ are $L(S', \Oref, \ket |0>\ket |0> \ket |\phi_0>)=1/(2\delta)$ and $W(S', \Oref,\ket |0>\ket |0> \ket |\phi_0>) =  \OO(1/\delta)$, where $\delta = |\frac12 - p|$.
\end{lem}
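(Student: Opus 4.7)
The plan is to reduce \rf{lem:one} to \rf{thm:simple} by exhibiting an isometric intertwiner between the simple purifier of \rf{sec:SimpleInfinite} and the restriction of $S'$ to the invariant ``top-ray'' subspace of \rf{fig:walkGeneral}. I start by defining the linear isometry $f$ on the span of $\{\ket|j>\}_{j\ge 0}$ by
\[
f(\ket|j>) \;=\; (-1)^{\lfloor j/2\rfloor}\,\ket J|j>\ket A|j\bmod 2>\ket W|\phi_{j\bmod 2}>,
\]
and denote its image by $\mathcal{T}_0$. The initial state $\ket J|0>\ket A|0>\ket W|\phi_0>=f(\ket|0>)$ lies in the public space of $S'$, while the remainder of $\mathcal{T}_0$ lives in the $\cJ\ge 1$ sector, so any catalyst inherited from \rf{thm:simple} will automatically be a valid catalyst for $S'$.

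The crucial step is to verify that $\mathcal{T}_0$ is invariant under both $R_1'$ and $R_2'$ and that $R_i'\circ f=f\circ R_i$ for $i=1,2$, where $R_1,R_2$ are the reflections from \rf{sec:SimpleInfinite}. A direct sign calculation shows that $f(u_{2i+1})$ equals a $\pm 1$ multiple of the $j=2i+1$ instance of~\rf{eqn:States1 NonReflected} and $f(u_{2i+1}^\perp)$ equals a $\pm 1$ multiple of the same instance of~\rf{eqn:States1 Reflected}; similarly, $f(u_{2i})$ and $f(u_{2i}^\perp)$ for $i\ge 1$ match, up to sign, the $j=2i$ instances of~\rf{eqn:States2 NonReflected} and~\rf{eqn:States2 Reflected}, while $f(\ket|0>)$ coincides with the fixed state $\ket J|0>\ket A|0>\ket W|\phi_0>$ appearing in~\rf{eqn:States2 NonReflected}. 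The preceding claim then yields $R_1'f(u_{2i+1})=f(u_{2i+1})$ and $R_1'f(u_{2i+1}^\perp)=-f(u_{2i+1}^\perp)$, together with the analogous statements for $R_2'$---exactly the transport of \rf{clm:simpleReflections} through $f$.

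Given this equivariance, the proof of \rf{thm:simple} transports verbatim: applying $f$ to~\rf{eqn:SimpleSequence1} or~\rf{eqn:SimpleSequence2}, depending on whether $p<\tfrac12$ or $p>\tfrac12$, produces the action of $S'=R_2'R_1'$ on the initial coupling $f(\ket|0>)\oplus f(v)$, where $v$ is the catalyst from \rf{thm:simple}, establishing~\rf{eqn:1 transduction}. Since $f$ is an isometry, the transduction complexity equals $\normA|v|^2=\OO(1/\delta)$. For the Las Vegas query complexity, it suffices to check that $f$ matches the queried-versus-non-queried splittings. Inspecting \rf{fig:general}, $R_1'$ applies $\Oref$ to every state except those of the form $\ket J|0>\ket A|1>\ket W|\cdot>$, and this exceptional subspace does not meet $\mathcal{T}_0$; symmetrically, $R_2'$ only skips states of the form $\ket J|0>\ket A|0>\ket W|\cdot>$, which meet $\mathcal{T}_0$ exactly along $f(\ket|0>)$---precisely the component that $R_2$ also skips in \rf{sec:SimpleInfinite}. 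Hence the per-step Las Vegas contributions agree through $f$, and we obtain $L(S',\Oref,\ket J|0>\ket A|0>\ket W|\phi_0>)=1/(2\delta)$.

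The main obstacle is sign bookkeeping: the alternating factor $(-1)^{\lfloor j/2\rfloor}$ in $f$ must be tuned so that the ``$+$''-type combinations $f(u_j)$ consistently land on the non-reflected lists and the ``$-$''-type combinations $f(u_j^\perp)$ on the reflected lists, and this requires separate checks for odd and even $j$. Once this parity analysis is complete, the rest of the argument is a mechanical transcription of \rf{thm:simple} through $f$.
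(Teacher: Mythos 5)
Your proposal is correct and is essentially the paper's own proof: your isometry $f(\ket|j>) = (-1)^{\lfloor j/2\rfloor}\ket|j>\ket|j\bmod 2>\ket|\phi_{j\bmod 2}>$ is exactly the paper's map $\ket N|j>\mapsto \ket N_0|j>$ (the sign $(-1)^{\lfloor j/2\rfloor}$ reproduces the $+,+,-,-$ pattern of period four), and the intertwining with $R_1,R_2$ via the lists of fixed and reflected states is the same reduction to \rf{thm:simple}. Your explicit check that the queried/non-queried subspaces match under $f$ is a detail the paper leaves implicit, but it does not change the argument.
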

\begin{proof}
We will construct an invariant subspace $\cN_0$ of the operator $S'(\Oref)$ that contains the vector $\ket|0>\ket|0>\ket|\phi_0>$.
We will define an isometry from $\cN$ as defined in \rf{sec:SimpleInfinite} onto $\cN_0$ such that $\ket N|0>$ gets mapped into $\ket|0>\ket|0>\ket|\phi_0>$, and the action of $S'(\Oref)$ in $\cN_0$ is identical (under this isometry) to the action of $S(O_p)$ in $\cN$.
The result then follows from \rf{thm:simple}.

To describe the isometry we will define an orthonormal system of vectors $\ket \cN_0|j>$ such that the isometry maps $\ket N|j>$ into $\ket N_0|j>$ for all non-negative integers $j$.
These vectors are defined by
\begin{align*}
\ket N_0|4i> &= \ket|4i>\ket|0>\ket|\phi_0>,\\
\ket N_0|4i+1> &= \ket|4i+1>\ket|1>\ket|\phi_1>,\\
\ket N_0|4i+2> &= -\ket|4i+2>\ket|0>\ket|\phi_0>,\\
\ket N_0|4i+3> &= -\ket|4i+3>\ket|1>\ket|\phi_1>,
\end{align*}
as $i$ ranges over non-negative integers (see \rf{fig:walkGeneral}, the top ray).
It is easy to check, using~\rf{eqn:States1 NonReflected} and~\rf{eqn:States1 Reflected}, that, for all positive odd values of $j$, $R_1'$ acts as identity on the states
\begin{equation}
\label{eqn:1 NonReflected}
\ket N_0 |j-1> + \gamma \ket N_0 |j>
\end{equation}
and reflects the states
\begin{equation}
\label{eqn:1 Reflected}
\ket N_0 |j-1> - \gamma^{-1} \ket N_0|j>.
\end{equation}
Similarly, using~\rf{eqn:States2 NonReflected} and~\rf{eqn:States2 Reflected}, we get that $R_2'$ acts as identity on $\ket N_0|0>$,
and, for all positive even values of $j$, $R_2'$ acts as identity on the states in~\rf{eqn:1 NonReflected} and reflects the states in~\rf{eqn:1 Reflected}.

Comparing this with \rf{clm:simpleReflections}, we get that the action of $R_2'R_1'$ in $\cN_0$ is identical to that of $R_2R_1$ in $\cN$, which proves the lemma.
\end{proof}

\begin{lem}
\label{lem:two}
In the above notation, the transducer $S'$, given an input oracle $\Oref$ satisfying~\rf{eqn:OrefGeneral}, has the following transduction action: 
\begin{equation}
\label{eqn:2 transduction}
S'(\Oref)\colon \ket |0>\ket|1>\ket|\phi_1> \transduce{} 
\begin{cases}
 \ket |0>\ket|1>\ket|\phi_1>, &\text{if $p < \tfrac 12$;}\\
-\ket |0>\ket|1>\ket|\phi_1>, &\text{if $p > \tfrac 12$.}
\end{cases}
\end{equation}
The query complexity and the transduction complexity of $S'$ are $L(S', \Oref, \ket |0>\ket |1> \ket |\phi_1>)=1/(2\delta)$ and $W(S', \Oref, \ket |0>\ket |1> \ket |\phi_1>) =  \OO(1/\delta)$, where $\delta = |\frac12 - p|$.
\end{lem}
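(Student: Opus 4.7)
The plan is to mirror the proof of \rf{lem:one}, constructing this time an $S'(\Oref)$-invariant subspace $\cN_1$ that contains the initial state $\ket|0>\ket|1>\ket|\phi_1>$, together with an isometry from the space $\cN$ of \rf{sec:SimpleInfinite} onto $\cN_1$ that intertwines the two dynamics. Once this is in place, the conclusion reduces to \rf{thm:simple} applied through this isometry.

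Concretely, following the bottom ray of \rf{fig:walkGeneral}, I would define the orthonormal basis of $\cN_1$ by
\begin{align*}
\ket N_1|4i> &= \ket|4i>\ket|1>\ket|\phi_1>,\\
\ket N_1|4i+1> &= -\ket|4i+1>\ket|0>\ket|\phi_0>,\\
\ket N_1|4i+2> &= -\ket|4i+2>\ket|1>\ket|\phi_1>,\\
\ket N_1|4i+3> &= \ket|4i+3>\ket|0>\ket|\phi_0>,
\end{align*}
as $i$ ranges over non-negative integers; in particular $\ket N_1|0> = \ket|0>\ket|1>\ket|\phi_1>$. Using the identities~\rf{eqn:States1 NonReflected}--\rf{eqn:States2 Reflected}, I would then check directly that $R_1'$ acts as the identity on $\ket N_1|0>$, and that for every positive \emph{even} $j$, $R_1'$ fixes $\ket N_1|j-1> + \gamma \ket N_1|j>$ and reflects $\ket N_1|j-1> - \gamma^{-1}\ket N_1|j>$, while symmetrically, for every positive \emph{odd} $j$, $R_2'$ does the same. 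The period-$4$ signs above are precisely tuned so that these pairs appear, up to an overall sign, among the fixed and reflected states already enumerated before \rf{lem:one}; this is essentially routine sign bookkeeping on four residues.

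The key structural observation is that, compared with the subspace $\cN_0$ used in \rf{lem:one}, the roles of $R_1'$ and $R_2'$ are \emph{swapped} on $\cN_1$: here $R_1'$ plays the role of $R_2$ from \rf{sec:SimpleInfinite} and $R_2'$ plays the role of $R_1$. Consequently the action of $S'(\Oref) = R_2'R_1'$ restricted to $\cN_1$ is conjugate, via the isometry $\ket N|j>\mapsto \ket N_1|j>$, to the action of $R_1 R_2$ on the space $\cN$. By \rf{rem:SimpleOrder}, the operator $R_1 R_2$ has exactly the same transduction action and catalyst norms as $R_2 R_1$ in \rf{thm:simple}, so~\rf{eqn:2 transduction} together with the bounds $L(S', \Oref, \ket|0>\ket|1>\ket|\phi_1>) = 1/(2\delta)$ and $W(S', \Oref, \ket|0>\ket|1>\ket|\phi_1>) = \OO(1/\delta)$ follow at once. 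The only genuine point of care is making the signs in the definition of $\ket N_1|j>$ propagate consistently up the ray; no genuinely new mathematical obstacle arises beyond what was already handled in \rf{lem:one} and \rf{thm:simple}.
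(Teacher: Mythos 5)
Your proposal is correct and follows the paper's own proof essentially verbatim: the same signed basis $\ket N_1|j>$ for the invariant subspace $\cN_1$, the same observation that the roles of $R_1'$ and $R_2'$ are swapped relative to \rf{lem:one} so that $S'$ restricted to $\cN_1$ is conjugate to $R_1R_2$ on $\cN$, and the same appeal to \rf{rem:SimpleOrder} to conclude.
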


\begin{proof}
The idea is similar to \rf{lem:one}.
This time we define an invariant subspace $\reg N_1$ and an isometry from $\cN$ onto $\cN_1$ that maps $\ket N|j>$ into $\ket N_1|j>$ for all non-negative integers $j$.
The vectors are defined by
\begin{align*}
\ket N_1|4i> &= \ket|4i>\ket|1>\ket|\phi_1>,\\
\ket N_1|4i+1> &= -\ket|4i+1>\ket|0>\ket|\phi_0>,\\
\ket N_1|4i+2> &= -\ket|4i+2>\ket|1>\ket|\phi_1>,\\
\ket N_1|4i+3> &= \ket|4i+3>\ket|0>\ket|\phi_0>,
\end{align*}
as $i$ ranges over non-negative integers (see \rf{fig:walkGeneral}, the bottom ray).
Using~\rf{eqn:States2 NonReflected} and~\rf{eqn:States2 Reflected}, 
it is not hard to check that, for all positive odd values of $j$, $R_2'$ acts as identity on the states
\begin{equation}
\label{eqn:2 NonReflected}
\ket N_1 |j-1> + \gamma \ket N_1 |j>
\end{equation}
and reflects the states
\begin{equation}
\label{eqn:2 Reflected}
\ket N_1 |j-1> - \gamma^{-1} \ket N_1|j>.
\end{equation}
Similarly, we get that $R_1'$ acts as identity on $\ket N_1|0>$ and, for all positive even $j$, $R_1'$ acts as identity on the states in~\rf{eqn:2 NonReflected} and reflects the states in~\rf{eqn:2 Reflected}.

Comparing this with \rf{clm:simpleReflections}, we get that the action of $R_2'R_1'$ in $\cN_1$ is identical to that of $R_1R_2$ in $\cN$.
The lemma follows by \rf{rem:SimpleOrder}.
\end{proof}

\begin{thm}
\label{thm:mainInfinite}
The transducer $S'$ defined in~\rf{eqn:S'}, given an input oracle $\Oref$ satisfying~\rf{eqn:OrefGeneral}, has the following transduction action:
\begin{equation}
\label{eqn:mainInfinite}
S'(\Oref) \colon \ket |\phi> \transduce{} 
\begin{cases}
\ket |\phi>, &\text{if $p < \tfrac 12$;}\\
- \ket |\phi>, &\text{if $p > \tfrac 12$;}
\end{cases}
\end{equation}
for any normalised $\ket|\phi> \in \spn\sfigA{\ket|0>\ket|\phi_0>, \ket|1>\ket|\phi_1>}$ (including the original vector $\ket|\phi>$ of~\rf{eqn:InputOracle}).
The query complexity of the transducer is $L(S',\Oref,\phi)=1/(2\delta)$, and the transduction complexity is $W(S', \Oref, \phi) = \OO(1/\delta)$, where $\delta = |\tfrac12 - p|$.

The transducer uses infinite space.
It can be put into canonical form (still using infinite space) so that its query complexity stays the same, and its transduction complexity remains $\OO(1/\delta)$. 
\end{thm}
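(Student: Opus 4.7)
The plan is to reduce the general statement to the two special cases handled in \rf{lem:one} and \rf{lem:two} by invoking the linearity principle of \rf{clm:linearity}. First, I would observe that the two invariant subspaces $\cN_0$ and $\cN_1$ constructed in the proofs of those lemmas are mutually orthogonal: the basis vectors $\ket N_0|j>$ and $\ket N_1|j'>$ disagree in the $\cJ$ register whenever $j\neq j'$, and for $j=j'$ they are distinguished by the $\cA$-register content ($\ket|0>\ket|\phi_0>$ in one and $\ket|1>\ket|\phi_1>$ in the other). In particular, the two initial states $\ket|0>\ket|0>\ket|\phi_0>=\ket N_0|0>$ and $\ket|0>\ket|1>\ket|\phi_1>=\ket N_1|0>$ are orthonormal, and, under the identification of the public space with $\cA\otimes\cW$, they form an orthonormal basis of $\spn\{\ket|0>\ket|\phi_0>,\ket|1>\ket|\phi_1>\}$.

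Next, I would write an arbitrary normalised $\ket|\phi>$ in this span as $\alpha\ket|0>\ket|\phi_0>+\beta\ket|1>\ket|\phi_1>$ with $|\alpha|^2+|\beta|^2=1$ and apply \rf{clm:linearity}. Because \rf{lem:one} and \rf{lem:two} produce the same global sign $(-1)^r$, with $r=0$ if $p<\tfrac12$ and $r=1$ if $p>\tfrac12$, linearity immediately gives $\ket|\phi>\transduce{S'(\Oref)}(-1)^r\ket|\phi>$ with catalyst $\alpha v_0+\beta v_1$, where $v_0\in\cN_0$ and $v_1\in\cN_1$ are the two catalysts supplied by the lemmas. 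Invariance of $\cN_0$ and $\cN_1$ under $S'(\Oref)$ combined with their orthogonality implies that at every query point the part of the intermediate state to which $\Oref$ is applied remains in orthogonal subspaces; hence the total query states $q_0=q(S',\Oref,\ket N_0|0>)$ and $q_1=q(S',\Oref,\ket N_1|0>)$ are orthogonal inside the direct-sum query-state space. Combining this with \rf{eqn:linearityQuery} of \rf{clm:linearity} yields
\[
L(S',\Oref,\ket|\phi>)=\|\alpha q_0+\beta q_1\|^2=|\alpha|^2\cdot\tfrac{1}{2\delta}+|\beta|^2\cdot\tfrac{1}{2\delta}=\tfrac{1}{2\delta},
\]
and analogously $W(S',\Oref,\ket|\phi>)=|\alpha|^2\|v_0\|^2+|\beta|^2\|v_1\|^2=\OO(1/\delta)$.

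Finally, for the canonical-form assertion I would apply \rf{thm:canonisation}: the transducer $S'=R_2'R_1'$ is already in the standard form of \rf{eqn:prelimAlgorithm} and makes only $Q=2$ queries per iteration, so canonisation adds only $\OO(1)$ qubits (on top of the infinite-dimensional $\cJ$ register), preserves both the public space and the query complexity, and gives transduction complexity $W+L=\OO(1/\delta)+\tfrac{1}{2\delta}=\OO(1/\delta)$. The step I expect to deserve the most care is the orthogonality of $q_0$ and $q_1$ in the direct-sum query-state space: without it, cross terms in $\|\alpha q_0+\beta q_1\|^2$ would force $L$ to scale like $(|\alpha|+|\beta|)^2$ and wreck the tight constant $1/(2\delta)$. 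Everything else is direct bookkeeping on top of \rf{lem:one}, \rf{lem:two}, \rf{clm:linearity}, and \rf{thm:canonisation}.
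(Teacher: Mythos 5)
Your proposal is correct and follows essentially the same route as the paper's own proof: decompose $\ket|\phi>$ in the orthonormal basis $\{\ket|0>\ket|\phi_0>,\ket|1>\ket|\phi_1>\}$, invoke \rf{lem:one} and \rf{lem:two} together with linearity (\rf{clm:linearity}), use that $q_0\in\bC^2\otimes\cN_0$ and $q_1\in\bC^2\otimes\cN_1$ lie in orthogonal subspaces to get $\|\alpha q_0+\beta q_1\|^2=1/(2\delta)$, and finish the canonical-form claim with \rf{thm:canonisation}. The step you flag as delicate — orthogonality of the total query states — is exactly the point the paper also relies on, and your justification of it is sound.
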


\begin{proof}
Recall that we identify $\ket|0>\otimes \cA\otimes \cW$, which is the public space of the transducer $S'$, with $\cA\otimes \cW$.
Let $r = 0$ if $p<1/2$; and $r=1$ if $p>1/2$.
By \rf{lem:one}, we have 
\begin{equation}
\label{eqn:mainInfinite1}
\ket|0>\ket|\phi_0>\transduce{S'(\Oref) } (-1)^r \ket|0>\ket|\phi_0>.
\end{equation}
Let $q_0$ be the corresponding total query state $q_0 = q(S', \Oref,\ket|0>\ket|\phi_0>)$.
By the proof, we have that $S'$ acts in $\cN_0$, hence, $q_0\in \bC^2\otimes \cN_0$ as $S'$ makes two queries.
Also, $\|q_0\|^2 = 1/(2\delta)$.
Similarly, by \rf{lem:two}, 
\begin{equation}
\label{eqn:mainInfinite2}
\ket|1>\ket|\phi_1>\transduce{S'(\Oref) } (-1)^r \ket|1>\ket|\phi_1>.
\end{equation}
Let $q_1 = q(S', \Oref, \ket|1>\ket|\phi_1>)$.
Similarly as above, $q_1 \in \bC^2\otimes \cN_1$ and $\|q_1\|^2 = 1/(2\delta)$.

By linearity \rf{clm:linearity}, for all $\alpha, \beta\in \bC$, we get from~\rf{eqn:mainInfinite1} and~\rf{eqn:mainInfinite2}:
\begin{equation}
\label{eqn:mainInfiniteCombination}
\alpha \ket|0> \ket|\phi_0> + \beta \ket|1> \ket|\phi_1> 
\transduce{S'(\Oref)} 
(-1)^r\sA[\alpha \ket|0> \ket|\phi_0> + \beta \ket|1> \ket|\phi_1>] ,
\end{equation}
and the corresponding total query state is $\alpha q_0 + \beta q_1$.
Since $q_0$ and $q_1$ belong to orthogonal subspaces, we have that the query complexity of this transduction is
\[
\normA |{\alpha q_0 + \beta q_1}|^2 = |\alpha|^2 \|q_0\|^2 + |\beta|^2 \|q_1\|^2 = 1/(2\delta)
\]
assuming $|\alpha|^2 + |\beta|^2 = 1$.
The bound on the transduction complexity is proven similarly.
The final statement about conversion into the canonical form follows directly from \rf{thm:canonisation}.
\end{proof}

\subsection{Purifier for the State-Generating Input Oracle}
\label{sec:alternativeInputOracle}

In \rf{thm:mainInfinite}, we established that the purifier $S'$ exactly encodes the value of the Boolean function into the phase, given one copy of the state $\ket|\phi>$.
It is good enough for applications.
For completeness though, we obtain a purifier in the usual form similar to the one stated in  Eqs.~\rf{eqn:InputOracle} and~\rf{eqn:ErrorReduction}.

\mycutecommand{\purifier}{S_{\mathrm{pur}}}
\mycutecommand{\Sref}{S_{\mathrm{ref}}}
\begin{cor}\label{cor:InputOracleErrorReduction}
There exists a transducer $\purifier$ with bidirectional access to an input oracle $O\colon \cA\otimes \cW\to \cA\otimes \cW$ as in~\rf{eqn:InputOracle}, which performs the following transduction exactly:
\begin{equation}
\label{eqn:Spur}
\purifier(O\oplus O^*)\colon \ket |0> \transduce{}
\begin{cases}
\ket |0>, &\text{if $p < \tfrac 12$;}\\
\ket |1>, &\text{if $p > \tfrac 12$.}
\end{cases}
\end{equation}
The query complexity $L\sA[\purifier, O\oplus O^*, \ket|0>] = 1 + 1/(2\delta)$, where $\delta = \absA|\tfrac12 - p|$, and the transduction complexity is $W\sA[\purifier, O\oplus O^*, \ket|0>] = \OO(1/\delta)$.
The transducer uses infinite space.
\end{cor}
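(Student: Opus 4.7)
My plan is to reduce to \rf{thm:mainInfinite} via a Hadamard-based phase-to-bit conversion. Take the public space of $\purifier$ to be the qubit $\cA$, and introduce in the private space an auxiliary qubit $\cA'$ (initialized to $\ket|0>$) that will play the role of the oracle's answer register, together with the workspace $\cW$ and the walk register $\cJ$ of $S'$. The transducer $\purifier$ performs, in order: (i)~$H$ on $\cA$; (ii)~$O$ applied to $\cA' \otimes \cW$, controlled on $\cA = 1$; (iii)~the purifier $S'$ of \rf{thm:mainInfinite} applied to $\cA' \otimes \cW \otimes \cJ$, controlled on $\cA = 1$, where each internal call to $\Oref$ is implemented as $O \cdot \mathrm{Ref}_{\ket|0>\ket|0>} \cdot O^*$ via the bidirectional oracle $O \oplus O^*$; (iv)~$O^*$ applied to $\cA' \otimes \cW$, controlled on $\cA = 1$; (v)~$H$ on $\cA$.

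A direct calculation traces the intermediate states
\[
\ket|0>_\cA\ket|0>_{\cA'\cW\cJ} \;\to\; \tfrac{1}{\sqrt 2}\bigl(\ket|0>_\cA\ket|0>_{\cA'\cW} + \ket|1>_\cA\ket|\phi>_{\cA'\cW}\bigr)\ket|0>_\cJ \;\to\; \tfrac{1}{\sqrt 2}\bigl(\ket|0>_\cA\ket|0>_{\cA'\cW} + (-1)^r\ket|1>_\cA\ket|\phi>_{\cA'\cW}\bigr)\ket|0>_\cJ \;\to\; \ket|r>_\cA\ket|0>_{\cA'\cW\cJ},
\]
where the arrows collect steps~(i)--(ii), step~(iii) (via the transduction action of $S'$), and steps~(iv)--(v), respectively. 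Since $\purifier$ is a serial composition of five transducers (three unitaries and two controlled oracle calls, together with the controlled transducer $c\text{-}S'$, viewed as a parallel composition of the identity and $S'$ over the $\cA = 0$ and $\cA = 1$ branches), the composition machinery of \rf{sec:composition} yields a valid transducer with the claimed transduction action. Its catalyst is the catalyst of $S'$ attached to the $\cA = 1$ branch, with squared norm $W(S',\Oref,\ket|\phi>)/2 = \OO(1/\delta)$.

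For the Las Vegas query complexity with respect to $O \oplus O^*$, the controlled-$O$ in step~(ii) and controlled-$O^*$ in step~(iv) each contribute $1/2$ (the squared norm of the $\cA = 1$ branch). In step~(iii), each internal $\Oref$-query is controlled on $\cA = 1$, which scales its load by $1/2$; summing over all $\Oref$-queries gives $L(S',\Oref,\ket|\phi>)/2 = 1/(4\delta)$. Each $\Oref$ is implemented by one $O$ and one $O^*$ query, both seeing the same squared-norm load by unitarity of the middle reflection, so the contribution of step~(iii) to the $O \oplus O^*$ Las Vegas count doubles to $1/(2\delta)$. Summing, the total query complexity is $1/2 + 1/(2\delta) + 1/2 = 1 + 1/(2\delta)$, matching the claim.

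The main conceptual ingredient is the Hadamard trick that converts the phase $(-1)^r$ produced by $S'$ into a bit in $\cA$. The main technical point is that the factor of $1/2$ introduced by the control on $\cA = 1$ precisely compensates the factor of $2$ incurred by simulating $\Oref$ with two queries of $O \oplus O^*$, yielding query complexity exactly $1 + 1/(2\delta)$ including the additive constant. Space remains infinite because of the infinite-dimensional $\cJ$ register inherited from $S'$.
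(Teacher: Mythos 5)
Your proposal is correct and follows essentially the same route as the paper: the Hadamard--controlled-$O$--controlled-$S'$--controlled-$O^*$--Hadamard circuit is exactly the paper's construction, the reduction to \rf{thm:mainInfinite} with $\Oref$ simulated as $O\,\mathrm{Ref}_{\ket|0>\ket|0>}\,O^*$ is the same, and your query accounting (two contributions of $1/2$ from the controlled oracle calls, plus the cancellation between the factor $1/2$ from the control and the factor $2$ from simulating each $\Oref$ with two queries to $O\oplus O^*$) matches the paper's computation via the composition $S_A\circ S'\circ S_{\mathrm{ref}}$. The only difference is presentational: the paper packages the bookkeeping through \rf{cor:algorithmToTransducer} and \rf{thm:transducerComposition} rather than arguing directly about controlled transducers.
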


\begin{proof}
In the proof, we again denote $r=0$ if $p<1/2$, and $r=1$ if $p>1/2$.

Let us first assume temporarily that we have algorithms (not transducers) performing the required actions, and we will explain how to combine them.
The transducer result then follows by using compositions of transducers instead of algorithms.

Let $U' = S'(\Oref)\DownTransduce_{\cH}$, which is given by~\rf{eqn:mainInfinite}.
Assume we have an algorithm (not transducer) performing $U'$.
Then, it is straightforward to obtain the transformation as in~\rf{eqn:Spur} as follows.
(In the transformation below, the first register is the qubit used in~\rf{eqn:Spur}, and the second register is $\cA\otimes\cW$ as in~\rf{eqn:InputOracle}.)
\begin{equation}
\label{eqn:longAlgorithm}
\begin{aligned}
\ket |0> \ket |0> & \longmapsto \frac1{\sqrt 2} \sA[\ket|0>\ket|0> + \ket|1>\ket|0>] 
    && \text{Hadamard on the first register} \\
& \longmapsto \frac1{\sqrt 2} \sA[\ket|0>\ket|0> + \ket|1>\ket|\phi>] 
    && \text{controlled application of $O$} \\
& \longmapsto \frac1{\sqrt 2} \sA[\ket|0>\ket|0> + (-1)^r \ket|1>\ket|\phi>]\quad 
    && \text{controlled application of $U'$} \\
& \longmapsto \frac1{\sqrt 2} \sA[\ket|0>\ket|0> + (-1)^r \ket|1>\ket|0>] 
    && \text{controlled application of $O^*$} \\
& \longmapsto \ket|r>\ket|0> 
    && \text{Hadamard on the first register} \\
\end{aligned}
\end{equation}
One problem, though, is that $S'$ uses the input oracle $\Oref$, not $O$.
But it is simple to simulate the former with the latter, as by~\rf{eqn:Oref} we have $\Oref =  O \mathrm{Ref_{\ket|0>\ket|0>}} O^*$.

Let $A = A(O\oplus O^*\oplus O')$ be the algorithm in~\rf{eqn:longAlgorithm}, where $O'$ is the placeholder for $U'$.
We have
\begin{equation}
\label{eqn:IOER1}
L^{(0)}(A, O\oplus O^*\oplus U', \ket|0>\ket|0>) = \frac12+\frac12 = 1,
\end{equation}
where the superscript $(0)$ corresponds to the input oracle $O\oplus O^*$.
This is because $O\oplus O^*$ is called twice, each time on a vector of norm $1/\sqrt{2}$. 
Also, we have
\begin{equation}
\label{eqn:IOER2}
q^{(1)}(A, O\oplus O^*\oplus U', \ket|0>\ket|0>) = \frac{(-1)^r\phi}{\sqrt{2}},
\end{equation}
where the superscript $(1)$ corresponds to the input oracle $O'$.
By \rf{cor:algorithmToTransducer}, there exists a canonical transducer $S_A$ whose transduction action is identical to the action of $A$, and that has the same total query state as $A$.
Its transduction complexity is $\OO(1)$.

Similarly, let $A_{\text{ref}} = A_{\text{ref}}(O\oplus O^*)$ be the algorithm from~\rf{eqn:Oref} whose action is equal to $\Oref$.
It satisfies
\begin{equation}
\label{eqn:IOER3}
L(A_{\text{ref}}, O\oplus O^*, \xi) = 2
\end{equation}
for every normalised $\xi\in\cE\otimes \cA\otimes \cW$, where we use the convention from~\rf{eqn:xiLargerSpace}.
Again, by \rf{cor:algorithmToTransducer} there exists a canonical transducer $S_{\text{ref}}$ whose transduction action is equal to the action of $A_{\text{ref}}$ and that has query and transduction complexity equal to 2 on every normalised initial state.

We get the required transducer $\purifier$ as $S_A\circ S'\circ S_{\text{ref}}$.
Using \rf{thm:transducerComposition} to evaluate query complexity of composed transducers, we have
\begin{align*}
L(\purifier, O\oplus O^*, \ket|0>\ket|0>)
&=
L^{(0)}\sA[S_A, O\oplus O^*\oplus U', \ket|0>\ket|0>] \\
&\qquad+ L\sA[S'\circ S_{\text{ref}}, O\oplus O^*,\tfrac{(-1)^r\phi}{\sqrt{2}}] &&\text{by~\rf{eqn:IOER2}} \hspace{-5cm}
\\
&=1 + \frac12 L(S'\circ S_{\text{ref}}, O\oplus O^*, \phi) 
&& \text{by~\rf{eqn:IOER1} and~\rf{eqn:linearityScaling}}
\\
&= 1 + \frac12 L\sA[S_{\text{ref}}, O\oplus O^*, q(S', \Oref, \phi)] 
&& \text{since $S'$ only queries $\Oref$}
\\
&= 1 + \frac12 \cdot 2 \cdot \|q(S', \Oref, \phi)\|^2
&&\text{by~\rf{eqn:IOER3} and~\rf{eqn:linearityScaling}}
\\
&= 1 + \frac12 \cdot 2 \cdot \frac1{2\delta} = 1 + \frac1{2\delta}
&&\text{by \rf{thm:mainInfinite}}. 
\end{align*}
In a similar way, we get that the transduction complexity of $\purifier$ is $\OO(1/\delta)$.
\end{proof}

\section{Finite-Dimensional Implementations}
\label{sec:finiteDimensions}
The purifiers of \CrossRef{Theorems}{thm:simple} and~\ref{thm:mainInfinite} cannot be implemented directly because they use registers ($\cN$ and $\cJ$, respectively) of infinite dimension.
In this section, we consider two different variants of restricting the purifier to finite-dimensional space.

The first one is quite obvious: we replace the infinite-dimensional register with a $D$-dimensional register storing values between $0$ and $D-1$.
This introduces small perturbation, but the overall structure of the algorithm is preserved.
In particular, it can be implemented time-efficiently using the circuits from \CrossRef{Figures}{fig:simple} and~\ref{fig:general}.

The second one is restricted to the special case of the state-conversion problem defined in \rf{sec:adversary}, where we have to implement the transformation $\xi_x\mapsto \tau_x$ for a finite number of pairs.
In this case, we just restrict the infinite-dimensional space of the transducer to the finite-dimensional space spanned by the vectors used by these transformations.

\subsection{Time-Efficient Implementation}
\label{sec:time-efficient}

As mentioned above, to obtain a time-efficient implementation, we replace the infinite-dimensional register with a $D$-dimensional register $\bC^D$ with the basis $\ket|0>,\dots,\ket|D-1>$.
We will call it the truncated version (at depth $D$) of the purifier.
We assume $D\ge 4$ is a power of 2, which makes implementation easier.

\mycutecommand{\trunc}{^{[D]}}
\mycutecommand{\ptrunc}{^{\prime[D]}}

The truncated version $S\trunc = S\trunc(\Oref)$ of the purifier $S$ from \rf{sec:SimpleInfinite} is similarly to~\rf{eqn:S} defined by
\[
S\trunc = R_2\trunc R_1\trunc,
\]
where $R_1\trunc$ and $R_2\trunc$ are defined as $R_1$ and $R_2$ in~\rf{fig:simple}, but where the register $\reg N$ stores an integer between $0$ and $D-1$ and the operations are performed modulo $D$. 

The truncated version $S\ptrunc = S\ptrunc(\Oref)$ of the purifier $S'$ from \rf{sec:generalInfinite} is defined similarly to~\rf{eqn:S'} by
\[
S\ptrunc = R_2\ptrunc R_1\ptrunc,
\]
where $R_1\trunc$ and $R_2\trunc$ are defined as $R_1'$ and $R_2'$ in~\rf{fig:general} with the same replacement for the register $\reg J$.
The following claim is obvious.
\begin{clm}
Both truncated purifiers $S\trunc$ and $S\ptrunc$ use at most $\log D$ qubits besides the space used by the input oracle $\Oref$.
Iteration times $T\sA[S\trunc]$ and $T\sA[S\ptrunc]$ are $\OO(\log D)$.
\end{clm}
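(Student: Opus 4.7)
The claim is immediate from inspection of the circuits in \rf{fig:simple} and \rf{fig:general}, so my plan is essentially a routine verification using standard quantum arithmetic circuit constructions. I would prove the two parts (space and iteration time) separately and one purifier at a time.

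For the space bound, I would note that since $D$ is a power of $2$, a register holding integers in $\{0, 1, \dotsc, D-1\}$ is encoded on exactly $\log D$ qubits. For $S\trunc$ the only register is $\reg N$, so this gives $\log D$ qubits total, and since the oracle $O_p$ acts on the least significant qubit $\reg A \subset \reg N$, the space ``besides the oracle'' is at most $\log D$ (in fact $\log D - 1$). For $S\ptrunc$ the counter register $\reg J$ is disjoint from $\reg A\otimes\reg W$ on which $\Oref$ acts, so the additional space is exactly $\log D$ qubits.

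For the iteration time, I would observe that each of $R_1\trunc, R_2\trunc, R_1\ptrunc, R_2\ptrunc$ is built from a constant number of the following primitives: (i) one oracle call; (ii) one increment or decrement modulo $D$ acting on the counter register of $\log D$ qubits; (iii) one multiply-controlled gate whose control depends on up to $\log D$ qubits (either the ``nonzero'' test on $\reg K$ or $\reg J$, or the single-qubit controls on $\reg A$). Primitive (ii) is the standard reversible-arithmetic increment/decrement modulo $2^{\log D}$, implementable by a carry-propagation circuit in $\OO(\log D)$ one- and two-qubit gates (and, because $D$ is a power of $2$, no modular reduction logic is needed beyond letting the top carry wrap around). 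Primitive (iii) reduces to a multiply-controlled single-qubit gate on $\OO(\log D)$ controls, which by the Barenco et al.\ construction costs $\OO(\log D)$ two-qubit gates. Summing the constant number of such primitives yields $T(S\trunc), T(S\ptrunc) = \OO(\log D)$, with oracle calls contributing only an additive constant (and accounted for separately in the query complexity).

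There is no real obstacle here; the only thing to be careful about is being explicit that ``iteration time'' is measured in the circuit model of 1- and 2-qubit gates (as stipulated earlier in the paper), with each oracle call counted as a single primitive whose internal gate cost is not part of $T(S\trunc)$ or $T(S\ptrunc)$. Once the model is fixed, the estimate follows directly from the textbook implementations cited above.
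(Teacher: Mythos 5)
The paper offers no proof of this claim at all---it is introduced with ``The following claim is obvious''---so there is nothing to compare against except the intended folklore argument, which your write-up fills in correctly: the counter register ($\reg N$ or $\reg J$) occupies exactly $\log D$ qubits since $D$ is a power of two, and each reflection is a constant number of oracle calls, increments/decrements mod $D$, and multiply-controlled gates on $\OO(\log D)$ controls, each costing $\OO(\log D)$ elementary gates. The one place where ``obvious'' hides a little circuit engineering, and where your appeal to ``standard carry-propagation'' is slightly glib, is that the linear-gate-count constructions for an $n$-qubit incrementer and for an $n$-controlled gate generally require at least one borrowed (dirty) ancilla; this is harmless here---for $S\ptrunc$ one can borrow from $\cW$, and for $S\trunc$ the stated bound of $\log D$ qubits leaves one qubit of slack beyond the $\log D - 1$ non-oracle qubits of $\reg N$---but it is worth a sentence if one wants the gate count and the qubit count to hold simultaneously and literally.
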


From now on, we will concentrate on $S\trunc$, as the arguments for $S\ptrunc$ are similar.
For the former, we have the following simple claim.
\begin{clm}
\label{clm:sameAction}
The actions of $R_1\trunc$ and $R_1$ are the same on all vectors in the span of $\{\ket|0>,\dots,\ket|D-1>\}$.
The actions of $R_2\trunc$ and $R_2$ are the same for all vectors in 
the span of $\{\ket|0>,\dots,\ket|D-2>\}$.
Also, $R_2\trunc\ket|D-1> = \ket|D-1>$ (contrary to $R_2$).
\end{clm}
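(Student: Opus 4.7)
My plan is to verify each of the three statements by unpacking the definitions of $R_1$, $R_2$, $R_1\trunc$, and $R_2\trunc$ from \rf{fig:simple}, and then checking that within the specified subspaces the modular reductions built into the truncated operators never trigger. Since $R_1$ contains no increment or decrement gates on $\cN$, the first statement is essentially immediate; the only substantive point is that $R_1$ preserves $\spn\{\ket|0>,\dots,\ket|D-1>\}$. Writing $\ket N|j> = \ket K|i>\ket A|b>$ with $j = 2i+b \le D-1$ and using that $D$ is even, both $2i$ and $2i+1$ stay in $\{0,\dots,D-1\}$, so $R_1 = I_{\cK}\otimes O_p$ maps this subspace into itself and acts there identically to $R_1\trunc$.

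For the second statement I will trace a basis vector $\ket N|j>$ with $0 \le j \le D-2$ through the three stages of $R_2$. The first stage maps $\ket N|j>$ to $\ket N|j+1>$ with $1 \le j+1 \le D-1$, so the modular and non-modular increments agree. Writing $j+1 = 2i+b$, the controlled-$O_p$ stage either acts trivially (when $i=0$, i.e.\ $j=0$) or produces a superposition of $\ket N|2i>$ and $\ket N|2i+1>$; a short parity check---using that $D-1$ is odd, so $b=0$ forces $j+1 \le D-2$ and hence $2i+1 \le D-1$---shows that both indices lie in $\{1,\dots,D-1\}$. The final decrement therefore does not wrap around modulo $D$, so $R_2$ and $R_2\trunc$ act identically on the entire subspace.

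For the third statement I will apply $R_2\trunc$ to $\ket N|D-1>$ directly: the modular increment maps it to $\ket N|0>$, whose $\cK$-register vanishes, so the controlled-$O_p$ is the identity, and the modular decrement returns the state to $\ket N|D-1>$. There is no real obstacle in the argument; the one place that benefits from a small amount of care is the parity bookkeeping in the second step, to rule out the index $D$ appearing as a component of the middle state when $j \le D-2$. I expect the whole proof to occupy just a few short paragraphs once these checks are written out.
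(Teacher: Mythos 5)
Your proof is correct, but it takes a different route from the paper's. The paper reuses the eigenbasis from \rf{clm:simpleReflections}: it observes that $R_1\trunc$ and $R_1$ agree on the vectors $u_j$ and $u_j^\perp$ for odd $j\le D-1$, and that these span $\spn\{\ket|0>,\dots,\ket|D-1>\}$ (similarly for $R_2\trunc$ with even $j\le D-2$ together with $\ket|0>$, spanning $\spn\{\ket|0>,\dots,\ket|D-2>\}$), with the final equality $R_2\trunc\ket|D-1>=\ket|D-1>$ checked directly. You instead work entirely in the computational basis and track indices through the increment, controlled-$O_p$, and decrement stages, verifying that no modular wraparound is ever triggered; your parity bookkeeping (using that $D$ is even so $2i+1\le D-1$ whenever $2i\le D-1$, and that $j+1\le D-1$ forces both middle indices into $\{1,\dots,D-1\}$) is exactly the check needed, and your direct computation for $\ket|D-1>$ matches the paper's. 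The paper's argument is shorter because it leans on the already-established \rf{clm:simpleReflections}, and it makes transparent \emph{why} the spans are what they are (they are exactly the spans of the relevant eigenvectors); your argument is more self-contained and elementary, at the cost of slightly more index-chasing. Both are complete proofs.
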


\begin{proof}
Indeed, repeating the argument from the proof of \rf{clm:simpleReflections}, $R_1\trunc$ and $R_1$ act in the same way on all $u_j$ and $u_j^\perp$ for $j=1,3,5,\dots, D-1$, which are defined in~\rf{eqn:simpleStates}.
These vectors span the subspace $\spn\{\ket|0>,\dots,\ket|D-1>\}$, which proves the claim for $R_1\trunc$.

Using similar reasoning, $R_2\trunc$ and $R_2$ act in the same way on $\ket |0>$ and all $u_j$ and $u_j^\perp$ for $j=2,4,6,\dots, D-2$.
These vectors span the subspace $\spn\{\ket|0>,\dots,\ket|D-2>\}$.
The equality $R_2\trunc\ket|D-1> = \ket|D-1>$ is proven directly. 
\end{proof}

There are two ways to show that the truncated purifiers still work if $D$ is large enough.
The first one is based on the idea that the quantum walk will not get to the vertex $D$ if the number of iterations is small enough.

\begin{prp}
\label{prp:trunc1}
Consider either the purifier $S$ from \rf{sec:SimpleInfinite} or the purifier $S'$ from \rf{sec:generalInfinite}.
Consider also the algorithm implementing the transduction action of the purifier, either from~\rf{thm:transducer-algorithm} or~\ref{thm:transducer-algorithm-query}.
Then, if $D > 2K$, the action of the algorithm stays the same if we replace the actual purifier with its truncated version at depth $D$.
\end{prp}

\begin{proof}
We consider the transducer $S$, the case of $S'$ being similar.
We also consider the algorithm from \rf{thm:transducer-algorithm} because we gave its proof in \rf{sec:transducer}.
The argument for the second algorithm is similar.

Recall that $S$ acts in the space $\cN$, its public space $\cH$ is spanned by $\ket N|0>$, and its private space $\cL$ by $\ket N|j>$ with integer $j\ge 1$.
We consider the execution of the algorithm from \rf{thm:transducer-algorithm}.
Note that we \emph{do not} apply any perturbations as in the proof of the said theorem.

Let $\zeta_t$ be the content of the register $\cL$ in this algorithm after the $t$-th application of $S$.
We claim that 
\begin{equation}
\label{eqn:zeta}
\zeta_t \in \spn\{\ket N|1>, \dots, \ket N|2t>\}.
\end{equation}

We prove this by induction on $t$.
For the base case, $\zeta_0 = 0$ because the algorithm starts in the state~\rf{eqn:transalgorithm1}.

The algorithm repeatedly couples the content of the register $\cL$ with a fresh copy of $\xi/\sqrt{K}$, where $\xi$ is the initial state.
In the case of the purifier $S$, the initial state $\xi = \ket|0>$.
For the inductive step, assume that~\rf{eqn:zeta} holds for a particular value of $t$.
Consider the next application of $S$.
Since $R_1$ and $R_2$ perform local reflections on the infinite ray in \rf{fig:line} (more formally, by \rf{clm:simpleReflections}), we have that 
\begin{equation}
\label{eqn:zeta2}
R_1\sC[\frac{\ket|0>}{\sqrt K} \oplus \zeta_t] \in \spn\sfigA{\ket N|0>, \dots , \ket N|2t+1> }
\end{equation}
and
\begin{equation}
\label{eqn:zeta3}
S\sC[\frac{\ket|0>}{\sqrt K} \oplus \zeta_t] 
= R_2R_1 \sC[\frac{\ket|0>}{\sqrt K} \oplus \zeta_t] 
\in \spn\sfigA{\ket N|0>, \dots , \ket N|2t+2>}.
\end{equation}
This proves~\rf{eqn:zeta} with $t$ increased by 1.

By \rf{clm:sameAction}, for all $t<D/2-1$,
the action of $R_1$ in~\rf{eqn:zeta2} is identical to the action of $R_1\trunc$, and the action of $R_2$ in~\rf{eqn:zeta3} is identical to that of $R_2\trunc$.
This proves the proposition.
\end{proof}

A similar proof goes through also if the transducer $S$ in \rf{prp:trunc1} is obtained via functional composition as in~\rf{sec:composition} (or any other form of composition from~\cite{belovs:LasVegas}), where some of the constituents are purifiers $S$ or $S'$.
\medskip

The second way of justifying transition to the truncated purifier is based on perturbations as defined in \rf{sec:transPerturbed}.
\rf{thm:main} from the introduction is a corollary of this result.

\begin{thm}
\label{thm:mainFormal}
The statements of \CrossRef{Theorems}{thm:simple} and \ref{thm:mainInfinite} hold after replacing $S$ and $S'$ by their versions truncated at depth $D$ if we allow for a perturbation of size $2(1-\delta)^{D-1}$ in the transducer, and claim that the query complexity is smaller than $1/(2\delta)$ (instead of being exactly equal).
\end{thm}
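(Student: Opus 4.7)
The plan is to analyze $R_2\trunc R_1\trunc$ applied to a truncation of the infinite-case catalyst from \rf{thm:simple}, determine the residual error, and absorb it as a perturbation in the sense of \rf{sec:transPerturbed}. I will first work out the special case $S\trunc$, then lift to $S'\trunc$ via the isomorphisms used in \CrossRef{Lemmata}{lem:one} and~\ref{lem:two}. For $S\trunc$, define the truncated catalyst $v_D$ to be the restriction of the infinite catalyst to indices $1,\dots,D-1$, namely $v_D=\sum_{j=1}^{D-1}\gamma^j\ket|j>$ when $p<1/2$ and $v_D=\sum_{j=1}^{D-1}(-\gamma)^{-j}\ket|j>$ when $p>1/2$. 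Since $D$ is even, the coupling $\ket|0>+v_D$ admits the same two groupings as in the proof of \rf{thm:simple}: one as a sum of $u_{2i+1}$ (respectively $u_{2i+1}^\perp$) states, compatible with $R_1\trunc$, and one as $\ket|0>$ plus a sum of $u_{2i}$ (respectively $u_{2i}^\perp$) states plus a single leftover boundary term proportional to $\ket|D-1>$, compatible with $R_2\trunc$. By \rf{clm:sameAction}, $R_1\trunc$ and $R_2\trunc$ act on all of these $u$- and $u^\perp$-states exactly as their infinite-dimensional counterparts do.

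In the $p<1/2$ case, the leftover term is $\gamma^{D-1}\ket|D-1>$, which $R_2\trunc$ fixes by \rf{clm:sameAction}, so $R_2\trunc R_1\trunc(\ket|0>+v_D)=\ket|0>+v_D$ exactly and no perturbation is needed. In the $p>1/2$ case, after $R_1\trunc$ the leftover in the $R_2\trunc$-grouping is $+\gamma^{-(D-1)}\ket|D-1>$; the infinite $R_2$ would mix this with $\ket|D>$ to produce the coefficient $(-\gamma)^{-(D-1)}=-\gamma^{-(D-1)}$ on $\ket|D-1>$ in the ideal final state, but $R_2\trunc$ simply fixes it. Tracking the signs through the $u_{2i}^\perp$ decomposition shows that the actual output of $R_2\trunc R_1\trunc$ differs from the desired $-\ket|0>+v_D$ by exactly $2\gamma^{-(D-1)}\ket|D-1>$. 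Using $\gamma^{-2}=(1-2\delta)/(1+2\delta)\le(1-\delta)^2$ (which reduces to the nonnegative polynomial identity $\delta(2-3\delta+2\delta^2)\ge 0$), this residual has norm at most $2(1-\delta)^{D-1}$, and is absorbed as an explicit perturbation appended at the end of the transducer.

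For $S'\trunc$, restrict the invariant subspaces $\cN_0$ and $\cN_1$ from \CrossRef{Lemmata}{lem:one} and~\ref{lem:two} to $\spn\{\ket N_k|j>:0\le j\le D-1\}$ for $k=0,1$. The same basis isometries carry $R_2\trunc R_1\trunc$ on $\cN\trunc$ to $R_2'\trunc R_1'\trunc$ on each $\cN_k\trunc$, since truncating $\cJ$ imposes precisely the analogous boundary condition on $\ket J|D-1>$ (a short direct check shows $R_2'\trunc$ fixes $\ket N_0|D-1>$ and $R_1'\trunc$ fixes $\ket N_1|D-1>$, matching \rf{clm:sameAction}). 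The per-subspace perturbation is therefore bounded by $2(1-\delta)^{D-1}$, and by the linearity and orthogonality argument from the proof of \rf{thm:mainInfinite}, the overall perturbation for any normalised initial state $\ket|\phi>\in\spn\{\ket|0>\ket|\phi_0>,\ket|1>\ket|\phi_1>\}$ inherits the same bound. The query and transduction complexities in the truncated setting are sums of finitely many nonnegative terms from the infinite sums appearing in \rf{thm:simple}, and are therefore strictly smaller than $1/(2\delta)$ and still $\OO(1/\delta)$, respectively. The one delicate point is the sign and coefficient bookkeeping in the $p>1/2$ branch, where one must verify that the disagreement between $R_2\trunc$ and $R_2$ on the single boundary state $\ket|D-1>$ collapses to the clean residual $2\gamma^{-(D-1)}\ket|D-1>$ rather than cascading; everything else follows routinely from the infinite-case analysis together with \rf{clm:sameAction}.
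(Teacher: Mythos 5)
Your proposal is correct and follows essentially the same route as the paper's proof: truncate the catalyst to indices $0,\dots,D-1$, observe via \rf{clm:sameAction} that the $p<\tfrac12$ case remains exact while the $p>\tfrac12$ case leaves a sign-flipped boundary residual $2\gamma^{-(D-1)}\ket|D-1>$ bounded by $2(1-\delta)^{D-1}$, and then lift to $S\ptrunc$ through the isomorphisms of \CrossRef{Lemmata}{lem:one} and~\ref{lem:two} with the orthogonality of $\cN_0$ and $\cN_1$ controlling the combined perturbation. Your explicit check that $R_2\ptrunc$ and $R_1\ptrunc$ fix the respective boundary states $\ket N_0|D-1>$ and $\ket N_1|D-1>$ is a detail the paper leaves implicit, but it does not change the argument.
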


\begin{proof}
We start with the purifier $S$ from \rf{sec:SimpleInfinite}, and we mimic the proof of \rf{thm:simple} using \rf{clm:sameAction}.
For $p<\frac12$, we have the following version of~\rf{eqn:SimpleSequence1}, with no perturbations
\[
\begin{aligned}
&\sum_{j=0}^{D-1} \gamma^j \ket |j> = \sum_{i=0}^{D/2-1} \gamma^{2i} \sA[\ket |2i> + \gamma \ket |2i+1>]\\
\maps{R_1\trunc} & 
\sum_{j=0}^{D-1} \gamma^j \ket |j>
= \ket |0> + \sum_{i=0}^{D/2-2} \gamma^{2i+1} \sA[\ket |2i+1> + \gamma \ket |2i+2>] + \gamma^{D-1}\ket |D-1>\\
\maps{R_2\trunc}&
\sum_{j=0}^{D-1} \gamma^j \ket |j>.
\end{aligned}
\]

For $p > \frac 12$, we have the following version of~\rf{eqn:SimpleSequence2}:
\[
\begin{aligned}
&\sum_{j=0}^{D-1} (-\gamma)^{-j} \ket |j> 
= \sum_{i=0}^{D/2-1} \gamma^{-2i} \sA[\ket |2i> - \gamma^{-1} \ket |2i+1>]\\
\maps{R_1\trunc} & 
-\sum_{j=0}^{D-1} (-\gamma)^{-j} \ket |j>
= -\ket |0> + \sum_{i=0}^{D/2-2} \gamma^{-2i-1} \sA[\ket |2i+1> - \gamma^{-1} \ket |2i+2>] + \gamma^{-D+1} \ket|D-1>\\
\maps{R_2\trunc}&
-\ket |0> - \sum_{i=0}^{D/2-2} \gamma^{-2i-1} \sA[\ket |2i+1> - \gamma^{-1} \ket |2i+2>] + \gamma^{-D+1} \ket|D-1>\\
\approx&
-\ket |0> - \sum_{i=0}^{D/2-2} \gamma^{-2i-1} \sA[\ket |2i+1> - \gamma^{-1} \ket |2i+2>] - \gamma^{-D+1} \ket|D-1>\\
=& - \ket|0> + \sum_{j=1}^{D-1} (-\gamma)^{-j} \ket |j>,
\end{aligned}
\]
where the ``$\approx$'' represents a perturbation (in the sense of \rf{sec:perturbed}) of size 
\[
2\gamma^{-D+1} 
= 2 \s[\frac{1-p}{p}]^{\frac{D-1}{2}}
= 2 \s[\frac{\frac12-\delta}{\frac12+\delta}]^{\frac{D-1}{2}}
\le 2 \s[1-2\delta]^{\frac{D-1}{2}}
\le 2 (1-\delta)^{D-1}.
\]
The rest of the proof of \rf{thm:simple} stays the same except that the query and transduction complexities in both cases become slightly smaller.

The proof of \rf{thm:mainInfinite} follows the logic of \rf{sec:generalInfinite}.
We prove that~\rf{eqn:1 transduction} holds with $S'$ replaced by $S\ptrunc$ and perturbation $2(1-\delta)^{D-1}$ using a similar reduction as in \rf{lem:one} but to the truncated and perturbed version of the purifier $S$.
Similarly, we show the same for~\rf{eqn:2 transduction}.

In the proof of \rf{thm:mainInfinite}, we use the above two equations to derive~\rf{eqn:mainInfinite1} and~\rf{eqn:mainInfinite2} with the same modifications.
In~\rf{eqn:mainInfiniteCombination}, we use that perturbations in~\rf{eqn:mainInfinite1} and~\rf{eqn:mainInfinite2} act in orthogonal spaces $\cN_0$ and $\cN_1$, to get that the total perturbation is still of size $2(1-\delta)^{D-1}$.
\end{proof}

\subsection{Query-Efficient Implementation}
\label{sec:query-efficient}

As explained in \rf{sec:adversary}, in terms of query complexity, transducers have close connection to the adversary bound for state conversion.
The state conversion problem is usually \emph{finite}.
This means that it is given by pairs $\xi_x\mapsto \tau_x$ in $\cH$ and unitaries $O_x\colon \cM\to\cM$ where $x$ runs through a \emph{finite} set of labels $X$, and the spaces $\cH$ and $\cM$ are \emph{finite-dimensional}.
For instance, one can consider evaluation of a Boolean function, where $x$ runs through the finite set $\cube$ for some $n$.
This motivates the following general result, which can be applied to all transducers in this paper.

\begin{thm}
\label{thm:reducingSpace}
Consider a state conversion problem given by pairs $\xi_x\mapsto \tau_x$ in $\cH$ and input oracles $O_x\colon \cM\to\cM$ as $x$ ranges over a finite set $X$, and $\cH$ and $\cM$ are finite-dimensional.
Let $S=S(O)$ be a transducer with infinite-dimensional private space that solves this problem, i.e., 
$\xi_x \transduce{S(O_x)} \tau_x$ for every $x\in X$.
Then, there exists a canonical transducer $S'=S'(O)$ in a finite-dimensional space such that $\xi_x\transduce{S'(O_x)} \tau_x$ and $L(S',O_x,\xi_x) = L(S, O_x, \xi_x)$ for every $x\in X$.
\end{thm}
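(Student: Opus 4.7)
The plan is to combine the dual-adversary correspondence of \rf{thm:adversary} with standard finite-dimensional linear algebra: first extract the total query states of $S$, observe that they satisfy the dual adversary constraints, then exploit finiteness of $X$ to confine them to a finite-dimensional slice of the private space, and finally rebuild a canonical transducer inside that slice by extending an isometry to a unitary.

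\emph{Canonisation and query states.} First I would apply \rf{thm:canonisation} to $S$, obtaining a canonical transducer $S_c = S_c^\circ \tO$ with the same transduction action $\xi_x \transduce{S_c(O_x)} \tau_x$ and identical total query states $v_x := q(S_c,O_x,\xi_x) = q(S,O_x,\xi_x) \in \cLt \otimes \cM$ for every $x\in X$ (with $\cLt$ possibly infinite-dimensional). Since $S_c$ is canonical, $v_x$ is precisely the query part of the catalyst, and \rf{eqn:canonical} together with unitarity of $S_c^\circ$ yields the dual adversary identity
\[
\ip<\xi_x,\xi_y>-\ip<\tau_x,\tau_y> \;=\; \ip<v_x,v_y>-\ipA<(I\otimes O_x) v_x,(I\otimes O_y)v_y> \qquad \text{for all } x,y\in X,
\]
with empty work part $\vw_x=0$ (this is exactly the (c)$\Rightarrow$(d) direction of \rf{thm:adversary}).

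\emph{Finite-dimensional slice.} Next I would use finiteness of $X$ and of $\dim \cM$: fix any orthonormal basis of $\cLt$ and expand $v_x=\sum_i e_i\otimes w_{x,i}$; each expansion has only finitely many nonzero terms, and across the finite set $X$ only finitely many basis vectors $e_i$ appear. Let $\cLt{}'$ be the span of those basis vectors. Then $\cLt{}'$ is finite-dimensional, and $v_x\in \cLt{}'\otimes\cM$ for every $x\in X$; since $O_x$ acts on $\cM$, the vectors $(I\otimes O_x)v_x$ also lie in $\cLt{}'\otimes \cM$.

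\emph{Rebuilding the transducer.} Finally I would mimic the (d)$\Rightarrow$(c) direction of \rf{thm:adversary} inside the finite-dimensional ambient space $\cH\oplus\cLt{}'\otimes\cM$. The adversary identity says that the linear assignment
\[
\xi_x\oplus (I\otimes O_x)v_x \;\longmapsto\; \tau_x\oplus v_x
\]
preserves all pairwise inner products between the two finite-dimensional subspaces it connects, and hence extends (arbitrarily on the orthogonal complements, which have equal dimension by finite-dimensionality) to a unitary $S'^\circ$ on all of $\cH\oplus\cLt{}'\otimes \cM$. Setting $S'(O):=S'^\circ \tO$ produces a canonical transducer with empty work part and finite-dimensional private space $\cLt{}'\otimes\cM$ that performs $\xi_x\transduce{S'(O_x)}\tau_x$ with catalyst precisely $v_x$, so $q(S',O_x,\xi_x)=v_x$ and $L(S',O_x,\xi_x)=\normA|v_x|^2=L(S,O_x,\xi_x)$ for every $x\in X$. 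The only nontrivial step is the middle one—fitting the $v_x$ into a finite-dimensional $\cLt{}'\otimes\cM$—and this is handled cleanly by finiteness of $X$ and of $\dim\cM$; the unitary extension at the end is routine in finite dimensions.
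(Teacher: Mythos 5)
Your overall route is the same as the paper's: pass to the dual adversary solution given by the total query states $v_x$, confine them to a finite-dimensional slice of the query register, and invoke the (d)$\Rightarrow$(c) direction of \rf{thm:adversary} to rebuild a canonical transducer with empty work part. The canonisation preamble and the explicit unitary-extension argument at the end are both fine and just spell out what the paper delegates to \rf{thm:adversary}.

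However, the middle step contains a false claim: if you fix an orthonormal basis $(e_i)_i$ of the infinite-dimensional space $\cL^\uparrow$ \emph{in advance}, the expansion $v_x=\sum_i e_i\otimes w_{x,i}$ need \emph{not} have finitely many nonzero terms. A vector in a Hilbert space generally has infinite support in a given basis (only square-summability is guaranteed), and the catalysts in this very paper are of exactly this kind --- e.g.\ the purifier's catalyst is proportional to $\sum_{j\ge 1}\gamma^{j}e_j$, supported on all basis vectors. So ``only finitely many basis vectors $e_i$ appear across $X$'' does not follow, and your candidate $\cL^{\uparrow\prime}$ could be infinite-dimensional. The finiteness you need comes not from finite support but from the finite dimension of $\cM$: by the Schmidt decomposition of $v_x$ across the cut $\cL^\uparrow:\cM$, the Schmidt rank of $v_x$ is at most $\dim\cM$, so $v_x\in\cH_x\otimes\cM$ for some subspace $\cH_x\subseteq\cL^\uparrow$ with $\dim\cH_x\le\dim\cM$; then $\cH^{\downarrow}=\spn_{x\in X}\cH_x$ has dimension at most $|X|\dim\cM$ and contains what you want. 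This is precisely the argument in the paper's proof, and with that one-line replacement your proof goes through.
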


\begin{proof}
By \rf{thm:adversary}, the total query states $v_x$ of $S$ give a feasible solution to the adversary bound~\rf{eqn:adversary}.
One complication is that $v_x\in \cH^\uparrow\otimes \cM$ for some infinite-dimensional $\cH^\uparrow$.
Since the space $\cM$ has finite dimension, we have, by Schmidt decomposition, that $v_x \in \cH_x\otimes\cM$ for some finite-dimensional subspace $\cH_x\subset \cH^\uparrow$.
Let $\cH^\downarrow = \spn_{x\in X} \cH_x$.
It is a finite-dimensional subspace of $\cH^\uparrow$ and $v_x\in\cH^\downarrow\otimes \cM$ for all $x\in X$.
By \rf{thm:adversary} again, this implies existence of the transducer $S'$ that uses finite-dimensional space.
\end{proof}

Applying this to \rf{cor:InputOracleErrorReduction}, we get the following result.

\begin{cor}
\label{cor:FinitePurifier}
Let $\cA$ be a qubit, and $\cW$ a finite-dimensional subspace.
Consider a collection of states similar to the ones in~\rf{eqn:InputOracle}
\begin{equation}
\label{eqn:FinitePurifier}
\phi_x = \sqrt{1-p_x} \ket A |0> \ket W|\phi_{x,0}> + \sqrt{p_x} \ket A|1>\ket W|\phi_{x,1}>,
\end{equation}
where $x$ ranges over a finite set $X$, $\phi_{x,0}$ and $\phi_{x,1}$ are normalised, and $p_x\ne 1/2$.
Let, for each $x\in X$, $O_x$ be a unitary such that $\ket|0> \mapsto \ket|\phi_x>$ in $\cA\otimes \cW$.
Then, there exists a finite-dimensional purifier $\purifier$ such that
\[
\purifier(O_x\oplus O_x^*)\colon \ket |0> \transduce{}
\begin{cases}
\ket |0>, &\text{if $p_x < \tfrac 12$;}\\
\ket |1>, &\text{if $p_x > \tfrac 12$}
\end{cases}
\]
with query complexity $1+1/(2\delta_x)$, where $\delta_x = \abs|\frac12-p_x|$.
\end{cor}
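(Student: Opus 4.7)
The proof combines \rf{cor:InputOracleErrorReduction} with the generic space-reduction result \rf{thm:reducingSpace}. For each $x\in X$, set $r_x=0$ if $p_x<\tfrac12$ and $r_x=1$ if $p_x>\tfrac12$. I would first frame the task as a finite state conversion problem in the sense of \rf{defn:stateConversion}: the index set is $X$, the initial and target states $\xi_x=\ket{0}$ and $\tau_x=\ket{r_x}$ (together with $\ket{0}_\cA\ket{0}_\cW$ on the remaining public registers) both live in the finite-dimensional public space $\cH$ of the purifier from \rf{cor:InputOracleErrorReduction}, and the input oracle is $\tilde O_x := O_x\oplus O_x^*$ on $\cM := (\cA\otimes\cW)\oplus(\cA\otimes\cW)$, encoding bidirectional oracle access as in \rf{sec:prelimAlgorithms}. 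Finite-dimensionality of $\cW$ ensures that both $\cH$ and $\cM$ are finite-dimensional, and $X$ is finite by hypothesis.

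Next, I would invoke \rf{cor:InputOracleErrorReduction}, observing that its construction of $\purifier$ is uniform in $p$: the transducer itself does not depend on $p$, only the oracle does. A single infinite-dimensional transducer $\purifier_\infty$ therefore simultaneously satisfies $\xi_x \transduce{\purifier_\infty(\tilde O_x)} \tau_x$ with $L(\purifier_\infty, \tilde O_x, \xi_x) = 1 + 1/(2\delta_x)$ for every $x\in X$; that is, $\purifier_\infty$ solves the state conversion problem set up above, using infinite space only in its private register.

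Finally, I would apply \rf{thm:reducingSpace} to $\purifier_\infty$. The hypotheses are satisfied immediately --- $X$ is finite and both $\cH$ and $\cM$ are finite-dimensional --- and the conclusion gives exactly a canonical transducer $\purifier$ with finite-dimensional private space, performing the same transductions $\xi_x\transduce{\purifier(\tilde O_x)}\tau_x$, and with identical per-input query complexities $L(\purifier,\tilde O_x,\xi_x)=1+1/(2\delta_x)$. This is the object promised by the corollary.

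No genuine obstacle is anticipated: the whole argument is essentially a packaging of two prior results. The only point worth flagging is the oracle-uniformity of the construction in \rf{cor:InputOracleErrorReduction} (visible directly from its proof, since the composition $S_A\circ S'\circ S_{\text{ref}}$ does not depend on $p$), which is what guarantees that a \emph{single} $\purifier_\infty$ serves all $x\in X$ at once and hence makes \rf{thm:reducingSpace} applicable.
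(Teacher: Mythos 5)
Your proposal is correct and follows exactly the paper's route: the paper derives \rf{cor:FinitePurifier} by observing that the setup is a finite state conversion problem with finite-dimensional $\cH$ and $\cM$, that the single infinite-dimensional purifier of \rf{cor:InputOracleErrorReduction} solves it for all $x\in X$ simultaneously (since the transducer does not depend on $p_x$), and then applying \rf{thm:reducingSpace} to obtain a finite-dimensional canonical transducer with the same per-input query complexity. Your explicit remark about oracle-uniformity is a point the paper leaves implicit, but it is the same argument.
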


Note that the construction in the proof of \rf{thm:reducingSpace} requires restriction to the subspace $\cH^\downarrow$, which means that the corresponding transducer is \emph{not} time-efficient.
On the other hand, it yields \emph{exact} purification.
The new purifier has query complexity $\OO(1/\delta)$, whereas the one from~\cite{belovs:variations} has complexity $\OO(1/\delta^2)$, see \rf{tbl:purifier-comparison}.

This purifier can be used to convert bounded-error algorithms into exact dual adversary in the following way~\cite{belovs:variations}.
Let $A$ be a quantum query algorithm computing a function $f\colon X\to\bool$ with bounded-error, and let $O_x$ be the action of $A$ on the input string $x\in X$.
Then, composing it with the purifier $\purifier$ from \rf{cor:FinitePurifier}, we get a \emph{finite-dimensional} transducer that evaluates the function $f$ \emph{exactly} and whose query complexity is equal to the Las Vegas query complexity of $A$ up to a constant factor (dependent on the gap $\delta$).
The latter transducer also yields the corresponding dual adversary via \rf{thm:adversary}.

\section{Lower Bound}
\label{sec:lower-bound}

In this section, we show that query complexities of the purifiers $S$ and $S'$ from \CrossRef{Sections}{sec:SimpleInfinite} and~\ref{sec:generalInfinite} are optimal.
We prove this via an adversary lower bound for a very simple state conversion problem in the sense of \rf{defn:stateConversion}.
Assume $0<\delta<1/2$ is fixed.
Let $X=\{0,1\}$, $\cM = \bC^2$, and $\cH = \bC$.
Consider two input oracles $O_0, O_1\colon \bC^2\to\bC^2$ reflecting around the states
\begin{equation}
\label{eqn:LBvarphis}
\varphi_0 = \sqrt{\tfrac12 + \delta}\,\ket|0> + \sqrt{\tfrac12 - \delta}\, \ket|1>
\qqand
\varphi_1 = \sqrt{\tfrac12 - \delta}\,\ket|0> + \sqrt{\tfrac12 + \delta}\, \ket|1>,
\end{equation}
respectively.
I.e., $O_i = 2\varphi_i\varphi_i^* - I$.
In notation of \rf{sec:SimpleInfinite}, these are precisely the states $\varphi_{\frac12 \pm \delta}$ from~\rf{eqn:varphi}, and the oracles are $O_{\frac12 \pm \delta}$, but we use 0 and 1 here for conciseness.

Let $\xi_0 = \xi_1 = \tau_0 = \ket |0>$, and $\tau_1 = - \ket|0>$.
This state conversion problem is a restriction of the settings of \rf{thm:simple} for $p = \frac12\pm \delta$, so the purifier of that theorem solves this problem in query complexity $1/(2\delta)$.
This state conversion problem also falls into the settings of \rf{thm:mainInfinite} since the space $\cM$ is 2-dimensional, and the vector $\ket |0>$ belongs to $\spn\sfigA{\ket|0>\ket|\phi_0>, \ket|1>\ket|\phi_1>}$ from the statement of the theorem.
The transducer of the latter theorem also has query complexity $1/(2\delta)$.
The next result shows that this cannot be improved.

\begin{thm}
\label{thm:lowerBound}
For any $0<\delta<1/2$, any transducer $S = S(O)$ solving the above state conversion problem satisfies $\max\sfigB{L(S, O_0, \xi_0), L(S, O_1, \xi_1)} \ge 1/(2\delta)$.
\end{thm}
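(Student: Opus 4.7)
The plan is to prove the lower bound via the dual adversary (\rf{cor:adversary}), which states that $\max\{L(S,O_0,\xi_0), L(S,O_1,\xi_1)\}$ is bounded below by the optimum of the dual adversary for this state conversion problem. So it suffices to show that every feasible solution $(v_0,v_1)$ of the dual adversary, satisfying the constraint~\rf{eqn:adversary}, must have $\max\{\|v_0\|^2,\|v_1\|^2\}\ge 1/(2\delta)$.

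First I would unpack the adversary constraints. The $x=y$ constraints are trivial since $\|\xi_x\|^2=\|\tau_x\|^2=1$. The $x=0,y=1$ constraint gives $\langle\xi_0,\xi_1\rangle-\langle\tau_0,\tau_1\rangle = 1-(-1) = 2$, while the right-hand side, using that $O_0,O_1$ are self-inverse reflections, can be rewritten as
\[
\langle v_0, (I - I\otimes O_0O_1)\,v_1\rangle.
\]
Applying Cauchy--Schwarz and the AM-GM inequality $\|v_0\|\|v_1\|\le\max\{\|v_0\|^2,\|v_1\|^2\}$, we obtain
\[
2 \;\le\; \|v_0\|\cdot \|I - I\otimes O_0O_1\|_{\mathrm{op}}\cdot \|v_1\| \;\le\; \|I-O_0O_1\|_{\mathrm{op}}\cdot \max\{\|v_0\|^2,\|v_1\|^2\}.
\]

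The main (and only nontrivial) computation is to show $\|I-O_0O_1\|_{\mathrm{op}} = 4\delta$. Since $O_0$ and $O_1$ are reflections about unit vectors in $\bC^2$, their product $O_0O_1$ has determinant $+1$ and its eigenvalues are $e^{\pm 2i\theta}$, where $\theta$ is the angle between $\varphi_0$ and $\varphi_1$. Computing $\langle \varphi_0,\varphi_1\rangle = 2\sqrt{\tfrac14-\delta^2} = \sqrt{1-4\delta^2}$ gives $\cos\theta=\sqrt{1-4\delta^2}$, hence $\sin\theta=2\delta$. Thus $|1-e^{\pm 2i\theta}| = 2|\sin\theta| = 4\delta$, and the operator norm of $I-O_0O_1$ is exactly $4\delta$. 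Substituting above yields $2\le 4\delta\cdot\max\{\|v_0\|^2,\|v_1\|^2\}$, i.e., $\max\{\|v_0\|^2,\|v_1\|^2\}\ge 1/(2\delta)$, which completes the proof. The only step requiring any care is the eigenvalue calculation for the product of two reflections in $\bC^2$; everything else is just Cauchy--Schwarz.
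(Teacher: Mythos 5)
Your proof is correct and follows essentially the same route as the paper: reduce to the dual adversary via \rf{cor:adversary}(a), apply Cauchy--Schwarz to the $x=0,y=1$ constraint, and show the relevant operator norm equals $4\delta$. The only (cosmetic) difference is that you obtain $\|I-O_0O_1\|=4\delta$ from the eigenvalues $e^{\pm 2i\theta}$ of the product of two reflections, whereas the paper writes $\|I_\cM-O_0^*O_1\|=\|O_0-O_1\|=2\|\varphi_0\varphi_0^*-\varphi_1\varphi_1^*\|$ and reads off $2\delta$ from the explicit $2\times 2$ matrix.
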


\begin{proof}
We prove that any feasible solution $v_0, v_1$ to the corresponding dual adversary bound satisfies $\max\sfigA{\|v_0\|^2, \|v_1\|^2} \ge 1/(2\delta)$.
This proves the theorem by \rf{cor:adversary}(a).

\eqrf{eqn:adversaryOriginal} with $x=0$ and $y=1$ read as
\[
\ip<\xi_0, \xi_1> - \ip<\tau_0, \tau_1> = \ipA<v_0, I\otimes \sA[I_\cM - O_0^*O_1]v_1>.
\]
This implies
\begin{equation}
\label{eqn:LB1}
\absB|\ip<\xi_0, \xi_1> - \ip<\tau_0, \tau_1>|\le \norm|v_0|\cdot \norm| I\otimes \sA[I_\cM - O_0^*O_1]|\cdot \norm|v_1|.
\end{equation}
The left-hand side of the above equation is 2, and for the right-hand side, we have
\[
\norm| I\otimes \sA[I_\cM - O_0^*O_1]|
=
\norm| I_\cM - O_0^*O_1|
=
\norm| O_0 - O_1|
=
2\norm|\varphi_0\varphi_0^* - \varphi_1\varphi_1^*|.
\]
Plugging this into~\rf{eqn:LB1}, we get
\begin{equation}
\label{eqn:LB2}
\max\sfigA{\|v_0\|^2, \|v_1\|^2} \ge \|v_0\|\cdot\|v_1\| \ge \frac{1}{\norm|\varphi_0\varphi_0^* - \varphi_1\varphi_1^*|}.
\end{equation}
It is easy to check that
\[
\varphi_0\varphi_0^* - \varphi_1\varphi_1^*
=
\begin{pmatrix}
2\delta & 0 \\ 0 & -2\delta,
\end{pmatrix},
\]
which gives the required lower bound by~\rf{eqn:LB2}.
\end{proof}

\section{Error Reduction using Quantum Signal Processing}
\label{sec:QSP}

In this section, we describe an error reduction algorithm using quantum signal processing.
To the best of our knowledge, this particular version has not been previously stated explicitly, however, very similar constructions have been considered in previous work -- see \cite[Section 3]{gilyen:quantumSingularValueTransformation}.
We were particularly inspired by the discussion in~\cite{rall:phaseEstimation}.
Note that we are using usual Quantum Signal Processing~\cite{low:quantumSignalProcessing} directly here like in~\cite{low:HamiltinianBySpectralAmplification}, and not the more elaborate framework of Quantum Singular Value Transformation~\cite{gilyen:quantumSingularValueTransformation}. 
This streamlined construction uses no additional ancillary qubits, whereas Ref.~\cite{gilyen:quantumSingularValueTransformation} requires a small number, and our construction works with reflecting oracles of the form in \rf{eqn:OrefGeneral}, whereas Ref.~\cite{gilyen:quantumSingularValueTransformation} requires state-generating oracles of the form~\rf{eqn:InputOracle}.

\subsection{Quantum Signal Processing Preliminaries}

Let $x$ and $y$ be real numbers satisfying
\[
-1 \le x,y \le 1
\qqand
x^2 + y^2 = 1.
\]
Define the following unitary transformation%
\footnote{Traditionally, a more complex matrix
$\begin{psmallmatrix} x & \ii y \\ \ii y & x \end{psmallmatrix}$ is used.
The two are equivalent up to a simple change of basis.
}
\begin{equation}
\label{eqn:QSP Signal}
W = W(x,y) = \begin{pmatrix}
x & y \\ y & -x
\end{pmatrix}.
\end{equation}

Let $\alpha = (\alpha_0, \alpha_1, \dots, \alpha_k)$ be a sequence of unimodular complex numbers: $|\alpha_j| = 1$ for all $j$.
Then, the \emph{quantum signal processing} algorithm (QSP) corresponding to $\alpha$ is the following $k$-query quantum algorithm $ U_\alpha = 
U_\alpha(W) $ with an input oracle $W\colon \bC^2\to\bC^2$:
\begin{equation}
\label{eqn:QSP}
U_\alpha (W) = 
\begin{pmatrix}
\alpha_k \\ & -\alpha_k^*
\end{pmatrix}
W
\begin{pmatrix}
\alpha_{k-1} \\ & -\alpha_{k-1}^*
\end{pmatrix}
W
\cdots
W
\begin{pmatrix}
\alpha_{1} \\ & -\alpha_{1}^*
\end{pmatrix}
W
\begin{pmatrix}
\alpha_{0} \\ & -\alpha_{0}^*
\end{pmatrix}.
\end{equation}
Here $\alpha^*$ stands for the complex conjugate of a complex number $\alpha$.

\begin{thm}[\cite{gilyen:quantumSingularValueTransformation}]
\label{thm:QSP}
For any sequence $\alpha = (\alpha_0, \alpha_1, \dots, \alpha_k)$ of unimodular complex numbers, there exist complex polynomials $P(x)$ and $Q(x)$ satisfying the following conditions:
\begin{align}
\deg P &\le k,&  \deg Q &\le k-1, \label{eqn:QSP degree}\\
P(-x) &= (-1)^k P(x),& Q(-x) &= (-1)^{k-1} Q(x), \label{eqn:QSP parity}
\end{align}
\begin{equation}
\label{eqn:QSP condition}
P(x) P^*(x) + Q(x) Q^*(x) (1-x^2) = 1
\end{equation}
and such that
\begin{equation}
\label{eqn:QSP main}
U_\alpha \sA[W(x,y)] = 
\begin{pmatrix}
P(x) & y Q^*(x) \\
y Q(x) & - P^*(x)
\end{pmatrix}.
\end{equation}
And vice versa: for every pair of complex polynomials $P(x)$ and $Q(x)$ and non-negative integer $k$ satisfying~\rf{eqn:QSP degree},~\rf{eqn:QSP parity} and~\rf{eqn:QSP condition}, there exists a sequence $\alpha = (\alpha_0, \alpha_1, \dots, \alpha_k)$ of unimodular complex numbers such that~\rf{eqn:QSP main} holds.
\end{thm}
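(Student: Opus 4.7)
The plan is to prove both directions by induction on $k$, treating the iterative algebraic structure of $U_\alpha$ directly.

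For the forward direction, I would induct on $k$. The base case $k=0$ is immediate: $U_{(\alpha_0)} = \mathrm{diag}(\alpha_0, -\alpha_0^*)$, and one takes $P(x) = \alpha_0$ and $Q \equiv 0$, which trivially satisfies \rf{eqn:QSP degree}, \rf{eqn:QSP parity}, and \rf{eqn:QSP condition}. For the inductive step, I would write
\[
U_{(\alpha_0,\dots,\alpha_{k+1})}(W) = \begin{pmatrix} \alpha_{k+1} \\ & -\alpha_{k+1}^* \end{pmatrix} W\cdot U_{(\alpha_0,\dots,\alpha_k)}(W),
\]
substitute the inductive form of $U_{(\alpha_0,\dots,\alpha_k)}$ from \rf{eqn:QSP main}, and carry out the $2\times 2$ multiplication. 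Substituting $y^2 = 1-x^2$ wherever it appears (legal because the entries are polynomials in $x$ times $1$ or $y$), I read off
\[
P_{k+1}(x) = \alpha_{k+1}\bigl(x P_k(x) + (1-x^2)Q_k(x)\bigr),\qquad Q_{k+1}(x) = \alpha_{k+1}^*\bigl(x Q_k(x) - P_k(x)\bigr).
\]
The degree bounds are immediate; the parities \rf{eqn:QSP parity} follow from a short direct computation using the inductive parities; and \rf{eqn:QSP condition} follows automatically, since it simply asserts that the first column of the (manifestly unitary) matrix $U_{(\alpha_0,\dots,\alpha_{k+1})}(W(x,y))$ has unit norm for every admissible $x,y$, and this polynomial identity in $x$ then holds identically.

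For the converse, I would also induct on $k$, peeling off one factor at a time. Given $P,Q$ of degrees $\le k, k-1$ satisfying the three conditions, invert the two recursions above to obtain candidate predecessors
\[
P_{k-1}(x) = \alpha_k^*\, x P(x) - \alpha_k(1-x^2)Q(x),\qquad Q_{k-1}(x) = \alpha_k^* P(x) + \alpha_k\, x Q(x).
\]
The coefficient of $x^{k+1}$ in $P_{k-1}$ is $\alpha_k^* p_k + \alpha_k q_{k-1}$, where $p_k$ is the leading coefficient of $P$ and $q_{k-1}$ that of $Q$. The key algebraic fact is that comparing leading terms in \rf{eqn:QSP condition} gives $|p_k|^2 - |q_{k-1}|^2 = 0$; hence one can choose $\alpha_k$ unimodular with $\alpha_k^2 = -p_k/q_{k-1}$ (with the appropriate degenerate interpretation if one of these vanishes), and this choice kills the $x^{k+1}$ coefficient of $P_{k-1}$. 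By the parity conditions inherited from $P,Q$, the polynomial $P_{k-1}$ is odd or even of the correct parity, so its next potentially nonzero coefficient is at $x^{k-1}$, giving $\deg P_{k-1}\le k-1$; the analogous parity argument pushes the degree of $Q_{k-1}$ down from $k-1$ (which follows from cancellation of the $x^k$ coefficient by the same relation $\alpha_k^* p_k + \alpha_k q_{k-1}=0$) to $k-2$.

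It remains to verify that $(P_{k-1}, Q_{k-1})$ still satisfies \rf{eqn:QSP condition} so that the induction can proceed. The cleanest way is to observe that the inverse of the outermost factor $\mathrm{diag}(\alpha_k,-\alpha_k^*)\cdot W(x,y)$ applied on the left transforms the matrix of \rf{eqn:QSP main} for $P,Q$ into one of the same shape with the pair $(P_{k-1},Q_{k-1})$; since unitarity is preserved, the first-column norm identity for the new matrix is exactly the QSP condition for $(P_{k-1},Q_{k-1})$. I expect the main obstacle to be the bookkeeping around the degenerate cases $p_k = 0$ or $q_{k-1} = 0$, where one has to argue that the parity constraints force the other leading coefficient to vanish as well, so that either $P$ or $Q$ has strictly smaller degree and one can use an even shorter QSP sequence (or pick $\alpha_k$ arbitrarily). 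This is also where one confirms that the resulting $\alpha_k$ is genuinely unimodular and that the induction does not break down.
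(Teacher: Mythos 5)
The paper does not prove this statement at all: it is imported verbatim (as the citation tag indicates) from the quantum singular value transformation paper of Gily\'en et al., so there is no in-paper proof to compare against. Your argument is essentially the standard proof from that literature, and it is correct. The forward induction with the recursions $P_{k+1} = \alpha_{k+1}(xP_k + (1-x^2)Q_k)$ and $Q_{k+1} = \alpha_{k+1}^*(xQ_k - P_k)$ checks out (degrees, parities, and the norm condition via unitarity all go through), and your inversion of the recursion in the converse is algebraically right: one verifies directly that $x\alpha_k^*P - (1-x^2)\alpha_kQ$ and $\alpha_k^*P + \alpha_k xQ$ recover the predecessor pair, that $|p_k|=|q_{k-1}|$ follows from the top coefficient of \rf{eqn:QSP condition}, and that the choice $\alpha_k^2=-p_k/q_{k-1}$ kills the two offending coefficients while parity kills the next ones down. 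Two points deserve slightly more care than you give them. First, in the degenerate case the vanishing of the \emph{other} leading coefficient is forced by \rf{eqn:QSP condition} (via $|p_k|=|q_{k-1}|$), not by parity as you momentarily suggest; your parenthetical remedy of picking $\alpha_k$ arbitrarily is nevertheless the right one, since with $p_k=q_{k-1}=0$ the parity constraints already push $\deg P\le k-2$ and $\deg Q\le k-3$, so any unimodular $\alpha_k$ yields predecessors of the required degrees. Second, for the preservation of \rf{eqn:QSP condition} under peeling, it is worth noting explicitly that $W^{-1}=W$ and that the left-multiplied matrix really has the template shape $\begin{psmallmatrix}P_{k-1} & yQ_{k-1}^*\\ yQ_{k-1} & -P_{k-1}^*\end{psmallmatrix}$ (a two-line computation), after which unitarity of the first column gives the identity on $[-1,1]$ and hence as polynomials. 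With those details filled in, the proof is complete.
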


\begin{thm}[\cite{gilyen:quantumSingularValueTransformation}]
\label{thm:ABC}
Let $R$ be a real polynomial and $k$ a positive integer satisfying
\[
\deg R \le k
,\qquad
R(-x) = (-1)^k R(x),
\]
and
\[
|R(x)| \le 1 \qquad \text{for all\quad $-1\le x\le 1$.}
\]
Then, there exist complex polynomials $P$ and $Q$ satisfying~\rf{eqn:QSP degree}--\rf{eqn:QSP condition} and such that $R(x) = \Re (P(x))$, where $\Re$ stands for the real part.
\end{thm}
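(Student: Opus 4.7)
The plan is to exhibit $P$ and $Q$ of the form $P(x) = R(x) + i S(x)$ with $S$ a real polynomial, and $Q$ real. Substituting into~\rf{eqn:QSP condition} turns the statement into the purely real identity
\[
S(x)^{2} + (1-x^{2})\,Q(x)^{2} \;=\; 1 - R(x)^{2},
\]
so the whole task reduces to decomposing $T(x) := 1 - R(x)^{2}$ as a weighted sum of two squares of polynomials with the correct degree and parity bounds. The polynomial $T$ is real, non-negative on $[-1,1]$ (because $|R|\le 1$ there), of degree at most $2k$, and \emph{even} (because $R(-x) = (-1)^{k} R(x)$ forces $R(-x)^{2} = R(x)^{2}$).

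First I would invoke the classical Markov--Luk\'acs representation theorem: every real polynomial of degree $2m$ non-negative on $[-1,1]$ can be written as $A(x)^{2} + (1-x^{2})B(x)^{2}$ with $\deg A\le m$ and $\deg B\le m-1$. Applying this to $T$ with $m=k$ yields polynomials of degree at most $k$ and $k-1$ respectively, which matches the degree bounds~\rf{eqn:QSP degree} on $P$ and $Q$.

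Next I would enforce the parity conditions $S(-x) = (-1)^{k} S(x)$ and $Q(-x) = (-1)^{k-1} Q(x)$ required by~\rf{eqn:QSP parity}. Since $T$ is even, one may perform the substitution $u = x^{2}$ and work with $\widetilde{T}(u)$, a polynomial of degree $\le k$ non-negative on $[0,1]$, and invoke the Hausdorff/Markov representation for polynomials non-negative on $[0,1]$ in the forms $\alpha(u)^{2} + u(1-u)\beta(u)^{2}$ or $u\,\alpha(u)^{2} + (1-u)\beta(u)^{2}$, depending on the parity of $k$. Translating back via $u = x^{2}$, the factor $1-u$ becomes $1-x^{2}$ as desired, and each of the polynomials in $u$ becomes an \emph{even} polynomial in $x$; multiplication by the remaining $x$ (from splitting off one factor of $x^{2}$ when $k$ is odd, for instance) produces the correct parities. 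Setting $S$ and $Q$ equal to the two resulting polynomials, and $P(x) = R(x) + iS(x)$, then delivers all three conditions simultaneously: $\Re(P) = R$ by construction, degrees come from Markov--Luk\'acs, parities come from the $u=x^{2}$ reduction, and~\rf{eqn:QSP condition} is the rearranged identity.

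The main obstacle I anticipate is exactly this parity-preserving version of the positivity representation: the vanilla Markov--Luk\'acs theorem does not by itself guarantee the right parity split, so one must either execute the $u=x^{2}$ reduction carefully and track the two cases $k$ even and $k$ odd, or, as an equivalent route, apply Fej\'er--Riesz to the non-negative trigonometric polynomial $1 - R(\cos\theta)^{2}$ and extract $S(x)$ and $\sin\theta\cdot Q(\cos\theta)$ from the even-in-$\theta$ and odd-in-$\theta$ components of the resulting complex phase. Once that positivity decomposition is in hand with the correct parity, the remaining verification is entirely bookkeeping.
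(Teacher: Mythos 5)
This theorem is imported from~\cite{gilyen:quantumSingularValueTransformation} and the paper contains no proof of it, so there is no in-paper argument to compare against; the relevant benchmark is the proof in the cited reference. Your outline is correct and targets the right crux: after writing $P=R+iS$ with $S$ real and taking $Q$ real, condition~\rf{eqn:QSP condition} becomes the polynomial identity $S(x)^2+(1-x^2)Q(x)^2=1-R(x)^2$, and everything reduces to a parity- and degree-respecting two-squares representation of the even polynomial $1-R^2$, which is non-negative on $[-1,1]$. Your route to that representation --- substitute $u=x^2$ and apply the Luk\'acs forms $\alpha(u)^2+u(1-u)\beta(u)^2$ (for $k$ even) or $u\,\alpha(u)^2+(1-u)\beta(u)^2$ (for $k$ odd) on $[0,1]$, then translate back --- does deliver the correct parities and the degree bounds $\deg S\le k$, $\deg Q\le k-1$ in both cases; I checked the bookkeeping and it works. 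The reference instead proves the same decomposition from scratch by factoring $1-R(x)^2$ over $\mathbb{C}$, observing that its roots come in the symmetric configurations forced by realness and evenness (with real roots in $(-1,1)$ of even multiplicity), and grouping roots to assemble $S$ and $Q$ directly; this is equivalent to your Fej\'er--Riesz alternative and has the advantage of being self-contained and constructive (it is what one actually implements to compute the QSP phases), whereas your version is shorter if the classical interval representations are taken as known. Two small remarks: your opening appeal to Markov--Luk\'acs on $[-1,1]$ is redundant, since the parity-free decomposition it produces cannot simply be symmetrized and your actual construction is the $[0,1]$ one; and you should use the representation matching the parity of $k$ rather than the actual degree of $1-R^2$ (which may be smaller), i.e.\ the ``degree at most'' form of Luk\'acs --- this is standard but worth stating, since it is exactly where a careless invocation of the exact-degree theorem would leave a gap.
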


\begin{thm}[\cite{low:HamiltinianBySpectralAmplification}]
\label{thm:polynomials}
Let $0 < \eps', \delta' < 1$ be parameters.
There exists an odd real polynomial $R(x)$ of degree $\OO\sA[\frac 1{\delta'} \log \frac 1{\eps'}]$ such that
\begin{alignat}{3}
|R(x)| &\le 1  &&\text{for all}\qquad&  -1 &\le x \le 1; \notag\\ 
R(x) &\ge 1-\eps' &&\text{for all}&  \delta' &\le x \le 1 ;\quad \text{and} \label{eqn:polynomials condition}\\
R(x) &\le -1+\eps'\qquad &&\text{for all}& -1 &\le x \le -\delta' \label{eqn:polynomials condition2} .
\end{alignat}
\end{thm}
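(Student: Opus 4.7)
The plan is to construct $R$ by smoothly approximating $\mathrm{sgn}(x)$ on $[-1,-\delta']\cup[\delta',1]$. First I would take the smoothed sign $g(x) = \mathrm{erf}(kx)$ for a parameter $k$ to be chosen. It is odd and bounded in magnitude by $1$ everywhere on $\bR$. Using the Gaussian tail bound $1-\mathrm{erf}(y) \le e^{-y^2}$, the choice $k = \Theta(\sqrt{\log(1/\eps')}/\delta')$ guarantees $|g(x)-\mathrm{sgn}(x)| \le \eps'/2$ on $[\delta',1]$ and symmetrically $|g(x)+1|\le \eps'/2$ on $[-1,-\delta']$.

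Next I would approximate $g$ uniformly on $[-1,1]$ to error $\eps'/2$ by an odd real polynomial $R$. Since $g$ is entire and has moderate growth in a complex strip of width $\Theta(1/k)$ around the real axis, the Chebyshev coefficients of $g$ decay geometrically with rate $\Theta(1/k)$; truncating the Chebyshev expansion after $d$ terms yields uniform error at most $e^{-\Theta(d/k)}$, so $d = O(k\log(1/\eps'))$ suffices. A careful choice of the smoothing function (as carried out in the cited reference~\cite{low:HamiltinianBySpectralAmplification}) sharpens this to the claimed degree $O(\frac{1}{\delta'}\log(1/\eps'))$. Oddness of $R$ is automatic because truncating the Chebyshev expansion of an odd function retains only odd-indexed Chebyshev polynomials, each of which is an odd polynomial in $x$.

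To conclude, the triangle inequality applied to the two approximation bounds gives $|R(x)-\mathrm{sgn}(x)| \le \eps'$ on $[-1,-\delta']\cup[\delta',1]$, which is exactly \rf{eqn:polynomials condition} and \rf{eqn:polynomials condition2}. On $[-1,1]$ one has $|R(x)| \le 1+\eps'/2$, so a final rescaling of $R$ by the factor $(1+\eps'/2)^{-1}$ enforces $|R|\le 1$ while inflating the error by only a constant factor, which is absorbed by starting with a slightly smaller target $\eps'$. The main obstacle is obtaining the linear-in-$1/\delta'$ degree rather than the $O(1/(\delta')^{2})$ scaling one would get from a naive truncated Taylor series of $\mathrm{erf}(kx)$; this requires exploiting the analyticity of $g$ in a complex strip of width $\Theta(1/k)$ and invoking Chebyshev (rather than Taylor) truncation, where the decay rate of the coefficients matches exactly the strip width.
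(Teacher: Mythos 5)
First, a point of calibration: the paper does not prove \rf{thm:polynomials} at all --- it imports the statement verbatim from~\cite{low:HamiltinianBySpectralAmplification} --- so there is no in-paper proof to compare against, and the question is only whether your sketch would stand on its own. Your overall route (mollify $\mathrm{sgn}(x)$ by $\mathrm{erf}(kx)$ with $k=\Theta(\sqrt{\log(1/\eps')}/\delta')$, truncate a Chebyshev expansion to get an odd polynomial, then rescale to enforce $|R|\le 1$) is exactly the one used in the cited reference and its successors, and the bookkeeping at both ends is fine: the Gaussian tail estimate, the preservation of oddness under Chebyshev truncation, the final rescaling by $(1+\eps'/2)^{-1}$ (which is needed, since \rf{thm:ABC} requires $|R|\le 1$ exactly), and the triangle inequality yielding \rf{eqn:polynomials condition} and \rf{eqn:polynomials condition2}.

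The gap is in the middle step, and it is quantitative. Geometric decay of the Chebyshev coefficients at rate $\Theta(1/k)$ gives truncation error $e^{-\Theta(d/k)}$ and hence $d=O(k\log(1/\eps'))$; substituting $k=\Theta(\sqrt{\log(1/\eps')}/\delta')$ yields degree $O\bigl(\tfrac{1}{\delta'}\log^{3/2}\tfrac{1}{\eps'}\bigr)$, which misses the claimed bound by a $\sqrt{\log(1/\eps')}$ factor. Attributing the missing factor to ``a careful choice of the smoothing function'' misplaces the difficulty: $\mathrm{erf}$ is already the right mollifier, and your closing remark that ``the decay rate of the coefficients matches exactly the strip width'' is precisely the lossy step. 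What you are not exploiting is that $\mathrm{erf}(kz)$ is \emph{entire}, so the Bernstein-ellipse parameter may be chosen as a function of the coefficient index $n$ rather than fixed at the strip width. On the ellipse with semi-minor axis $b$ one has $|\mathrm{erf}(kz)|\le \mathrm{poly}(k)\,e^{k^2b^2}$, so the $n$-th Chebyshev coefficient is at most $\mathrm{poly}(k)\,e^{k^2b^2}(1+b)^{-n}$; optimizing $b\asymp n/k^2$ gives super-geometric decay $e^{-\Omega(n^2/k^2)}$ rather than $e^{-\Omega(n/k)}$, whence truncation at degree $d=O\bigl(k\sqrt{\log(1/\eps')}\bigr)=O\bigl(\tfrac{1}{\delta'}\log\tfrac{1}{\eps'}\bigr)$ already achieves error $\eps'/2$. (The reference gets the clean constants by writing out the Chebyshev expansion of the Gaussian explicitly via Bessel functions and integrating, which also disposes of the $\mathrm{poly}(k)$ prefactors.) With that replacement your argument closes.
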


\subsection{Proof of \rf{thm:ErrorReduction}}
We will work in the two-dimensional subspace $\cH$ given by the basis vectors $\ket H|0> = \ket A|0>\ket W|\phi_0>$ and $\ket H|1> = \ket A|1> \ket W|\phi_1>$.
By~\rf{eqn:OrefGeneral}, the operator $\Oref$ performs reflection about $\sqrt{1-p} \ket H|0> + \sqrt{p}\ket H|1>$ in this space, and is given by the matrix
\[
\Oref = \begin{pmatrix}
1-2p & 2\sqrt{p(1-p)} \\
2\sqrt{p(1-p)} & 2p-1
\end{pmatrix}.
\]
This matrix is in the form~\rf{eqn:QSP Signal} with $x = 1-2p$ and $y = 2\sqrt{p(1-p)}$.
Thus, $x\ge 2\delta$ if $p\le \frac12-\delta$, and $x \le -2\delta$ if $p\ge \frac12 + \delta$.

Take the real polynomial $R$ of \rf{thm:polynomials} with $\delta' = 2\delta$ and $\eps'$ to be fixed later.
Find the complex polynomials $P$ and $Q$ by \rf{thm:ABC} applied to $R$ and $k=\deg R$.
Plug them into \rf{thm:QSP} to get a sequence $\alpha = (\alpha_0,\dots,\alpha_k)$ such that~\rf{eqn:QSP main} holds.
We apply the Quantum Signal Processing algorithm $U_\alpha$ from~\rf{eqn:QSP} in $\cH$ by replacing $W$ with $\Oref$ and applying the intermediate operations 
$\begin{psmallmatrix} \alpha_{j} \\ & -\alpha_{j}^* \end{psmallmatrix}$ 
to the register $\cA$.
After that, we apply the $Z$ operator to $\cA$.
By \rf{thm:QSP}, the resulting action of this algorithm on $\cH$ is
\[
U(p) = \begin{pmatrix}
P(x) & y Q^*(x) \\
-y Q(x) & P^*(x)
\end{pmatrix}.
\]
Let $P(x) = R(x) + \ii S(x)$ for a real polynomial $S$.
Since the matrix $U(p)$ is unitary (alternatively, by~\rf{eqn:QSP condition}), we have
\begin{equation}
\label{eqn:QSPER Unitarity}
R(x)^2 + S(x)^2 + |yQ(x)|^2 = 1.
\end{equation}

If $p \le \tfrac12 -\delta$, then by~\rf{eqn:polynomials condition}, $1-\eps' \le R(x) \le 1$,
and, by~\rf{eqn:QSPER Unitarity}, $S(x)^2 + |yQ(x)|^2 \le 1- (1-\eps')^2 \le 2\eps'$.
Using that the spectral norm is bounded by the Frobenius norm, we get 
\begin{align*}
\norm |U(p) - I|^2 &= 
{\norm|\begin{pmatrix}
R(x)-1 + \ii S(x) & y Q^*(x) \\
-y Q(x) & R(x)-1 -\ii S(x)
\end{pmatrix}|^2}
\\
& \le 2\sB[ (R(x)-1)^2 + S(x)^2 + \abs|yQ(x)|^2 ] \\&\le 6\eps'.
\end{align*}
Similarly, using~\rf{eqn:polynomials condition2}, we have that if $p\ge \frac12 + \delta$, then
\[
\norm |U(p) + I| \le \sqrt{6\eps'}.
\]
Letting $r=0$ if $p\le \frac12-\delta$ and $r=1$ if $p\le \frac12+\delta$, we obtain
\[
\norm |{ U(p)\ket|\phi> - (-1)^r\ket|\phi> }|\le \sqrt{6\eps'}.
\]
Therefore, we can take $\eps' = \eps^2/6$ to obtain the required precision of the algorithm.

The number of queries and time complexity of the algorithm is $\OO(\deg R)$, which is
$O\sA[\frac1{\delta'} \log \frac1{\eps'}]=
O\sA[\frac1{\delta} \log \frac1{\eps}]$
as required.

\section{Non-Boolean Case}\label{sec:non-Boolean}

In this section, we briefly describe the non-Boolean error reduction case, which we reduce to the Boolean case using the ``Bernstein-Vazirani'' trick from~\cite[Section~4]{jeffery:kDist}.
In the non-Boolean case, we can always assume the output register $\cA = (\bC^2)^{\otimes m}$ is a tensor product of $m$ qubits.
We denote basis elements of $\cA$ by bit-strings in the vector space $\bF_2^m$ over the field $\bF_2=\{0,1\}$.
For $a = (a_1,\dots,a_m)$ and $b = (b_1,\dots,b_m)$ in $\bF_2^m$, we denote by 
\[
a\odot b = a_1b_1 + \cdots + a_mb_m \in \{0,1\}
\]
their inner product, where the operations are in $\bF_2$.
The work register $\reg W$ is still arbitrary.
We assume the input oracle $\Oref$ reflects about some state
\begin{equation}
\label{eqn:NB phi}
\phi = \sum_{a\in \bF_2^m} \sqrt{p_a} \ket A|a> \ket W|\phi_a>
\end{equation}
for normalised $\phi_a$.
Moreover, we assume there exists a (unique) $r$ such that $p_r \ge \frac12 + \delta$.
The task is to perform the transformation
\begin{equation}
\label{eqn:NB ErrorReduction}
\ket B|0> \ket AW |\phi> \maps{} \ket B|r> \ket AW |\phi>,
\end{equation}
where $\reg B$ is a register isomorphic to $\reg A$. 

\begin{thm}
\label{thm:NB}
Let $R$ be a Boolean error reduction subroutine that, given the oracle as in~\rf{eqn:Oref}, performs the transformation in~\rf{eqn:ErrorReductionSign} (possibly with imprecision $\eps$). 
Then, there exists an algorithm that, given an input oracle $\Oref$ reflecting about the state~\rf{eqn:NB phi}, implements the transformation in~\rf{eqn:NB ErrorReduction} (with the same imprecision $\eps$).
The algorithm makes one execution of $R$, uses one additional qubit and $\OO(qm)$ additional elementary operations, where $q$ is the query complexity of $R$.
\end{thm}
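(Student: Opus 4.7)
The plan is to use the Bernstein--Vazirani trick to bundle the $m$-bit identification into a single coherent invocation of $R$. First, introduce one fresh qubit $\reg C$ initialised to $\ket C|0>$ (this is the only extra qubit), and apply $H^{\otimes m}$ to $\reg B$, producing the uniform superposition $\frac{1}{\sqrt{2^m}}\sum_{s\in \bF_2^m}\ket B|s>$. The whole argument then hinges on building, on each $\reg B=s$ branch, a parity-restricted reflecting oracle in the form of~\rf{eqn:OrefGeneral}, and letting $R$ act in superposition over $s$.

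The parity oracle $\widehat\Oref$ acts on $\reg B\otimes\reg A\otimes\reg W\otimes\reg C$ as follows: controlled on $\reg B=s$, it XORs $a\odot s$ from $\reg A$ into $\reg C$ (via $O(m)$ CNOTs and Toffolis), applies $\Oref$ to $\reg A\reg W$, and uncomputes the parity. On the $\reg B=s$ slice, a direct computation (using that $\Oref$ is a true reflection about $\phi$, plus the orthogonality of $\ket|\phi_0^{(s)}>$ and $\ket|\phi_1^{(s)}>$, where
\[
\ket|\phi_b^{(s)}> = \frac{1}{\sqrt{P_b^{(s)}}}\sum_{a\colon a\odot s=b}\sqrt{p_a}\ket A|a>\ket W|\phi_a>,\qquad P_0^{(s)}=1-p^{(s)},\;P_1^{(s)}=p^{(s)}=\!\!\sum_{a\odot s=1}\!\!p_a
\]
)
shows that $\widehat\Oref$ fixes $\sqrt{1-p^{(s)}}\ket C|0>\ket|\phi_0^{(s)}>+\sqrt{p^{(s)}}\ket C|1>\ket|\phi_1^{(s)}>$ and negates the orthogonal combination $\sqrt{p^{(s)}}\ket C|0>\ket|\phi_0^{(s)}>-\sqrt{1-p^{(s)}}\ket C|1>\ket|\phi_1^{(s)}>$. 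So $\widehat\Oref$ satisfies~\rf{eqn:OrefGeneral} with $\reg C$ as the answer qubit and $\reg A\reg W$ as the workspace. The gap hypothesis transfers immediately: since $p_r\ge\tfrac12+\delta$, if $r\odot s=0$ then $p^{(s)}\le\sum_{a\ne r}p_a\le\tfrac12-\delta$, and if $r\odot s=1$ then $p^{(s)}\ge p_r\ge\tfrac12+\delta$.

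Now run $R$ once, using $\widehat\Oref$ as its input oracle; $\reg B$ is a silent control left untouched by $R$. By the guarantee~\rf{eqn:ErrorReductionSign} of $R$ and linearity across the $s$-superposition, the resulting state is within $\eps$ of
\[
\frac{1}{\sqrt{2^m}}\sum_{s\in\bF_2^m}(-1)^{r\odot s}\ket B|s>\otimes\ket AW|\phi>\otimes\ket C|0>
\]
(the total squared error is a convex combination of the per-slice squared errors, each bounded by $\eps^2$). A final $H^{\otimes m}$ on $\reg B$ collapses the phase pattern to $\ket B|r>$, yielding the required output~\rf{eqn:NB ErrorReduction} with imprecision $\eps$. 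The resource count is one call to $R$, one extra qubit $\reg C$, and $O(m)$ gates per query of $R$ for the compute/uncompute of $a\odot s$ together with the two Hadamard layers, for a total of $O(qm)$ additional elementary operations.

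The main obstacle is the verification in the second paragraph that $\widehat\Oref$ really is a valid reflecting oracle in the sense of~\rf{eqn:OrefGeneral} on the relevant two-dimensional span. Although the conjugation of $\Oref\otimes I_\reg C$ by the XOR unitary is a reflection about a larger subspace in the full four-dimensional local space, one must check that the orthogonal state $\sqrt{p^{(s)}}\ket C|0>\ket|\phi_0^{(s)}>-\sqrt{1-p^{(s)}}\ket C|1>\ket|\phi_1^{(s)}>$ actually lies in the $-1$ eigenspace of $\widehat\Oref$; this reduces to a short orthogonality calculation against both $U_s\ket|\phi>\ket C|0>$ and $U_s\ket|\phi>\ket C|1>$. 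Once this is pinned down, the rest is a routine assembly using the linearity of $R$ across the coherent control register.
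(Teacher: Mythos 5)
Your construction of the per-branch oracle $\widehat\Oref$ is correct: conjugating $\Oref$ by the parity-XOR does give an oracle satisfying~\rf{eqn:OrefGeneral} on the $\reg B=s$ slice with answer qubit $\reg C$, workspace $\cA\otimes\cW$ and bias $p^{(s)}=\sum_{a\colon a\odot s=1}p_a$, and your gap transfer is right. The genuine gap is in the step ``run $R$ once \dots\ by the guarantee~\rf{eqn:ErrorReductionSign} of $R$ and linearity''. That guarantee applies only to input states lying in $\spn\sfigA{\ket C|0>\ket|\phi_0^{(s)}>,\, \ket C|1>\ket|\phi_1^{(s)}>}$, but the state you feed into $R$ on branch $s$ is
\[
\ket C|0>\ket AW|\phi> \;=\; \sqrt{1-p^{(s)}}\,\ket C|0>\ket|\phi_0^{(s)}> \;+\; \sqrt{p^{(s)}}\,\ket C|0>\ket|\phi_1^{(s)}>,
\]
whose second term is orthogonal to that span (it has $\reg C=0$ paired with $\ket|\phi_1^{(s)}>$) and has norm $\sqrt{p^{(s)}}$, which is at least $\sqrt{1/2+\delta}$ whenever $r\odot s=1$. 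On that component neither~\rf{eqn:OrefGeneral} nor~\rf{eqn:ErrorReductionSign} promises anything, so the output of $R$ can differ from your claimed state by a constant rather than by $\eps$. (Worse, the orthogonal complement of the good span inside $\reg C\otimes\spn\{\phi_0^{(s)},\phi_1^{(s)}\}$ is itself a copy of the problem with bias $1-p^{(s)}$, so even a $R$ that respected this block structure would attach the \emph{opposite} phase $-(-1)^{r\odot s}$ to that component.)

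The fix is exactly what the paper does: apply the parity-evaluating unitary $T\colon \ket B|b>\ket C|c>\ket A|a>\mapsto\ket B|b>\ketA C|c\oplus(a\odot b)>\ket A|a>$ \emph{once to the whole state before invoking $R$}, and $T^*$ afterwards, in addition to wrapping each query. This rotates $\ket C|0>\ket|\phi>$ into $\sqrt{1-p^{(s)}}\ket C|0>\ket|\phi_0^{(s)}>+\sqrt{p^{(s)}}\ket C|1>\ket|\phi_1^{(s)}>$, which is precisely the canonical state in the span, after which~\rf{eqn:ErrorReductionSign} applies branch by branch and your error bookkeeping over the orthogonal $\reg B$-branches, the final Hadamard layer, and the $\OO(qm)$ gate count (the extra $T$, $T^*$ cost only $\OO(m)$ more gates) all go through unchanged.
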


\begin{proof}
We use the following transformation, where $C$ is a qubit:
\[
T\colon \ket B|b> \ket C|c> \ket A|a> \mapsto \ket B|b> \ketA C|c\oplus (a\odot b)> \ket A|a>.
\]
It evaluates the inner product of $a$ and $b$ into the register $\cC$, and can be efficiently implemented.
We can decompose $T$ as $T=\bigoplus_{b\in \bF_2^m} T_b$, where $T_b\colon \ket C|c> \ket A|a> \mapsto \ketA C|c\oplus (a\odot b)> \ket A|a> $.

Consider the following state
\begin{multline*}
\phi'_b =
T_b \ket C|0> \ket A\reg W|\phi> 
= \sum_{a\in \bF_2^m} \sqrt{p_a} \ket C|a\odot b> \ket A|a> \ket W|\phi_a>
\\=  \sqrt{1 - p'_b}\; \ket C|0>\ket A\reg W |\phi'_{b,0}> + \sqrt{p'_b}\; \ket C|1> \ket A\reg W |\phi'_{b,1}>,
\end{multline*}
where all $\phi'_{b,c}$ are normalised.
The main idea is that 
\begin{equation}
\label{eqn:NB p'}
\text{
$p'_b \ge \frac 12+ \delta$,\; if $r\odot b = 1$;\qquad
and\qquad 
$p'_b \le \frac12 - \delta,$\; otherwise.}
\end{equation}
Thus, the output of the error reduction subroutine $R$ encodes $r\odot b$ into the phase.
From that, we can obtain $b$ using the Bernstein-Vazirani algorithm~\cite{bernstein:quantumComplexity}.

\myfigure
\label{fig:BV-circuit}
=====
\[
\!\!\!\!
\begin{quantikz}[column sep=0.3cm]
\setwiretype{n}
&               & &[-0.3cm] \text{\Large{$O'$}} \\
\lstick{$\cB$} & \gate[3]{T^*} &\qw      & \qw             & \gate[3]{T}   & \qw\\
\lstick{$\cC$} &               &\gate{Z} & \ctrl[open]{1}  &               & \qw\\
\lstick{$\cA$} &               &\qw      & \gate[2]{\Oref} &               & \qw\\
\lstick{$\cW$} & \qw           &\qw      &                 & \qw           & \qw
\end{quantikz}
\quad\!\!
\begin{quantikz}[column sep=0.3cm]
 \lstick{$\ket B|0>$} &  \gate{H^{\otimes m}}  & \gate[3]{T}\gategroup[4,steps=3,style={dashed}]{} &  \qw & \gate[3]{T^*} & \gate{H^{\otimes m}} & \qw  \rstick{$\ket B|r>$}\\
\lstick{$\ket C |0>$}& \qw &  & \gate[3]{R(O')} & & \qw & \qw  \rstick{$\ket C|0>$}\\
\lstick[2]{$\ket A\reg W|\phi>$} & \qw &  & & & \qw & \qw \rstick[2]{$\ket A\reg W|\phi>$}\\
& \qw & \qw &  & \qw & \qw & 
\end{quantikz}
\]
-----
Implementation of the oracle $O'$ and the algorithm in \rf{thm:NB}.
Assuming $R$ is exact, the operations in the dashed box effectively implement the Bernstein-Vazirani oracle for $r$ the hidden string: $\ket B|b>\ket C|0> \ket AW|\phi> \maps{} (-1)^{r\odot b}\ket B|b>\ket C|0> \ket AW|\phi>$.
=====

Let us fill in the technical details.
We will first assume that $R$ is exact.
It is easy to see that the circuit $O'$ in \rf{fig:BV-circuit} on the left implements the transformation 
\begin{equation}
\label{eqn:O'decomposition}
O' = \bigoplus_{b\in \bF_2^m} O'_b,
\end{equation}
where $O'_b$ reflects about the state $\phi'_b$ in $\cC\otimes\cA\otimes \cW$.
Thus, using parallel composition~\rf{eqn:parallel} over the register $\cB$, where all  composed subroutines are equal to $R$, and taking~\rf{eqn:NB p'} into consideration, we have 
\begin{equation}
\label{eqn:NB 1}
(I_\cB\otimes R)\sA[O'] \colon\quad  \frac 1{\sqrt {2^m}} \bigoplus_{b\in \bF_2^m} \phi'_b \longmapsto \frac 1{\sqrt {2^m}}\bigoplus_{b\in \bF_2^m} (-1)^{r\odot b}\phi'_b,
\end{equation}
which implies
\begin{equation}
\label{eqn:NB 2}
T^*\skB[{(I_\cB\otimes R)\sA[O']}]T\colon
\quad
\frac{1}{\sqrt{2^m}} \sum_{b\in \bF_2^m} \ket B|b> \ket C|0> \ket A\reg W |\phi>
\longmapsto
\frac{1}{\sqrt{2^m}} \sum_{b\in \bF_2^m} (-1)^{r\odot b}\ket B|b> \ket C|0> \ket A\reg W |\phi>.
\end{equation}
Using the same analysis as in the Bernstein-Vazirani algorithm~\cite{bernstein:quantumComplexity}, we get that the algorithm in \rf{fig:BV-circuit} on the right implements the transformation in~\rf{eqn:NB ErrorReduction}.
The subcircuit in the dashed box implements the transformation in~\rf{eqn:NB 2}.

The above analysis assumed that $R$ worked exactly.
If $R$ has imprecision $\eps$, then the whole algorithm also has imprecision $\eps$ because the perturbations in different copies of $R$ in $I_\cB\otimes R$ occur in pairwise orthogonal subspaces.
\end{proof}

This result can be combined with any Boolean error reduction or purification procedure.
Combining with \rf{thm:ErrorReduction}, we obtain the following result.
\begin{cor}
\label{cor:NB ErrorReduction}
Let $\eps, \delta>0$ be parameters.
There exists a quantum algorithm that, given query access to the oracle $\Oref$ reflecting about the state~\rf{eqn:NB phi}, $\eps$-approximately performs the transformation in~\rf{eqn:NB ErrorReduction}.
The algorithm makes $\ell = \OO\sA[\frac1{\delta}\log\frac1{\eps}]$ queries to the oracle $\Oref$, uses one additional qubit, and has time complexity $\OO(\ell + m)$.
\end{cor}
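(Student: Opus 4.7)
The strategy is an immediate composition of two results already established above: instantiate the QSP-based Boolean error-reduction subroutine of \rf{thm:ErrorReduction} with the given $\eps$ and $\delta$, and feed it into the non-Boolean reduction of \rf{thm:NB}. The resulting subroutine $R$ makes $q = \ell = \OO\sA[\frac{1}{\delta}\log\frac{1}{\eps}]$ queries to its reflecting oracle, uses $\OO(\ell)$ 1- or 2-qubit gates, and introduces no additional ancilla qubits.

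Applying \rf{thm:NB} with this $R$ then yields the claimed algorithm. The query count transfers directly: by the construction of $O'$ in \rf{fig:BV-circuit}, each of the $q$ calls that $R$ makes to $O'$ triggers exactly one controlled call to $\Oref$, giving $\ell$ queries in total. The extra qubit is the register $\cC$ supplied by \rf{thm:NB}. The stated time bound follows by combining the $\OO(\ell)$ internal gates of $R$ with the $\OO(qm)$ overhead of \rf{thm:NB} for the outer Bernstein--Vazirani layers $T, T^*$ and the gadget implementing $O'$ as a controlled reflection.

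The only point that needs more than mechanical bookkeeping is the preservation of imprecision. A naive union bound over the $2^m$ basis states of $\cB$ would be far too weak, but, as noted at the close of the proof of \rf{thm:NB}, the perturbations contributed by $I_\cB \otimes R$ in different branches $\ket B|b>$ occupy pairwise orthogonal subspaces and each acts on only a $1/\sqrt{2^m}$ fraction of the state, so the total $\ell_2$ deviation remains $\eps$. This is the one substantive step; the remainder is direct substitution of the parameters of \rf{thm:ErrorReduction} into the complexity statement of \rf{thm:NB}.
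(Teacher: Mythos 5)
Your proposal coincides with the paper's own (one-line) proof of this corollary: the paper simply instantiates \rf{thm:NB} with the QSP-based subroutine $R$ of \rf{thm:ErrorReduction}, and your accounting of the query count, the single extra qubit $\cC$, and the preservation of imprecision via orthogonality of the perturbed branches is exactly the intended argument. The one wrinkle is the time bound: \rf{thm:NB} charges $\OO(qm)$ additional gates, so the combination you describe literally gives $\OO(\ell m)$ rather than $\OO(\ell+m)$ --- but this discrepancy is already present in the corollary as stated in the paper, not something your derivation introduces.
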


Similarly, we can combine this result with a purifier $S'$ from \rf{thm:mainInfinite} to obtain the following result improving on Theorem 13.3 of~\cite{belovs:taming}.

\begin{cor}
\label{cor:NB Transducer}
Let the registers $\cA$ and $\cW$ be fixed as at the beginning of this section.
There exists a transducer $S'' = S''(\Oref)$ with the following properties.
If the input oracle $\Oref$ reflects about the state $\ket|\phi>$ as in~\rf{eqn:NB phi}, $S''$ performs the transduction
\[
S''(\Oref)\colon \ket B|0> \ket AW |\phi> \transduce{} \ket B|r> \ket AW |\phi>.
\]
The query complexity of the transduction is at most $1/(2\delta)$, where $\delta = p_r-1/2$, its transduction complexity is $\OO(1/\delta)$, and its iteration time complexity is $O(m)$.
\end{cor}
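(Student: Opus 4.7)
The plan is to mimic the proof of \rf{thm:NB}, replacing the Boolean error-reduction subroutine $R$ by the exact infinite-dimensional purifier $S'$ from \rf{thm:mainInfinite}, and treating every step as a transducer composition rather than an algorithmic one. Concretely, I would define
\[
S'' \;=\; (H^{\otimes m}\otimes I)\;T^*\;(I_\cB\otimes S')\;T\;(H^{\otimes m}\otimes I),
\]
where $T$ is the $\OO(m)$-gate inner-product unitary from the proof of \rf{thm:NB}, and $I_\cB\otimes S'$ is the parallel composition of $2^m$ copies of $S'$ in the register $\cB$ acting on $\cC\otimes\cA\otimes\cW$. The relevant input oracle for $I_\cB\otimes S'$ is the decomposed reflecting oracle $O'=\bigoplus_{b\in\bF_2^m}O'_b$ of \rf{eqn:O'decomposition}, which is realised from $\Oref$ by the left circuit of \rf{fig:BV-circuit} using one call to $\Oref$ and $\OO(m)$ input-independent gates.

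Since $S'$ is exact, the identities~\rf{eqn:NB 1} and~\rf{eqn:NB 2} from \rf{thm:NB} carry over verbatim as exact transductions (rather than as approximate algorithmic equalities), so the Bernstein--Vazirani decoding by the outer Hadamards produces the required transduction $\ket B|0>\ket AW|\phi>\transduce{S''(\Oref)}\ket B|r>\ket AW|\phi>$.

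The crux is the complexity bookkeeping. Applying the parallel-composition identity~\rf{eqn:parallelTransducerComplexity} together with the scaling rule~\rf{eqn:linearityScaling} to the initial state $\tfrac1{\sqrt{2^m}}\bigoplus_b\ket|\phi'_b>$ appearing between $T$ and $T^*$, one obtains
\[
L\sB[I_\cB\otimes S',\,O',\,\tfrac1{\sqrt{2^m}}\textstyle\bigoplus_b\ket|\phi'_b>]
\;=\; \sum_{b\in\bF_2^m}\tfrac1{2^m}\,L(S',\,O'_b,\,\ket|\phi'_b>)
\;=\; \sum_b\tfrac1{2^m\cdot 2\delta_b},
\]
with $\delta_b=\abs|p'_b-\tfrac12|$. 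The key observation is the uniform bound $\delta_b\ge\delta$: if $r\odot b=1$ then $p'_b=\sum_{a:\,a\odot b=1}p_a\ge p_r\ge\tfrac12+\delta$, while if $r\odot b=0$ then $p'_b\le\sum_{a\ne r}p_a=1-p_r\le\tfrac12-\delta$. Each summand is therefore at most $1/(2^m\cdot 2\delta)$, yielding the advertised bound $L(S'',\Oref,\ket B|0>\ket AW|\phi>)\le 1/(2\delta)$. The same argument with $W$ in place of $L$, using $W(S',\Oref,\cdot)=\OO(1/\delta)$ from \rf{thm:mainInfinite}, gives transduction complexity $\OO(1/\delta)$. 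Iteration time is $T(S')+\OO(m)=\OO(m)$, the $\OO(m)$ accounting for $T$, the Hadamards, and the gates needed per query to realise $O'$ from $\Oref$.

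No step looks genuinely hard; the main subtlety is simply verifying that when $I_\cB\otimes S'$ acts on the entangled superposition $\tfrac1{\sqrt{2^m}}\sum_b\ket B|b>\otimes\ket|\phi'_b>$ against the $b$-dependent oracle $O'=\bigoplus_b O'_b$, the parallel-composition formulas really do give per-$b$ additivity of the total query state and the transduction complexity, as asserted by \rf{eqn:parallelTransducerComplexity}. Once that is granted, the uniform lower bound $\delta_b\ge\delta$ ensures that nothing is lost compared to the Boolean bound of \rf{thm:mainInfinite}.
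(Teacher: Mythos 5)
Your proposal is correct and follows essentially the same route as the paper's proof: replace $R$ in the construction of \rf{thm:NB} by the exact purifier $S'$, and bound the query complexity via parallel composition over $b$ together with the uniform gap bound $\delta_b\ge\delta$ from~\rf{eqn:NB p'}. The only cosmetic difference is that you write $S''$ as a direct conjugation of $I_\cB\otimes S'$ by input-independent unitaries, whereas the paper packages the identical construction as the functional composition $S_A\circ S'\circ S_O$ of canonical transducers and runs the bookkeeping through \rf{thm:transducerComposition}.
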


\begin{proof}
Let $O' = O'(\Oref)$ and $A = A(R)$ be the algorithms in \rf{fig:BV-circuit} on the left and on the right, respectively.
We can write the algorithm of \rf{thm:NB} as the functional composition 
\begin{equation}
\label{eqn:NB composition}
A\circ R\circ O'.
\end{equation}
We adhere to the same proof strategy as in \rf{cor:InputOracleErrorReduction}:  We obtain the required transduction action by replacing $R$ by $S'$, and $A$ and $O'$ by the corresponding transducers.

Let $S_O = S_O(\Oref)$ and $S_A = S_A(R)$ be canonical transducers as in \rf{cor:algorithmToTransducer} whose transduction action and query complexity is identical to the action of $O'$ and $A$.
Let $S' = S'(O')$ be the purifier from \rf{thm:mainInfinite} acting in $\cC\otimes \cA\otimes \cW$.
Then, the purifier $S'' = S''(\Oref)$ can be written as the same functional composition as in~\rf{eqn:NB composition}:
\[
S'' = S_A\circ S'\circ S_O.
\]
The correctness of the transducer then follows immediately from the proof of \rf{thm:NB}.

We analyse query complexity of the transduction using \rf{thm:transducerComposition}, the analysis of transduction complexity being similar.
Since $S_A$ and $S_O$ have the same query complexity as $A$ and $O'$, respectively, it suffices to evaluate query complexities of the latter.
The algorithm $O'$ uses just one query, hence,
$
L(S_O, \Oref, \xi) \le 1
$
for every normalised initial state $\xi$.
As $S_A\circ S'$ only accesses $\Oref$ through $O'$, we have
\begin{align*}
L\sA[S_A \circ S' \circ S_O, \Oref, \ket B|0> \ket C|0> \ket AW |\phi>]
&=
L\sB[ S_O, \Oref, {q\sA[S_A \circ S', O', \ket B|0> \ket C|0> \ket AW |\phi>]} ]\\
&\le
L\sA[S_A \circ S', O', \ket B|0> \ket C|0> \ket AW |\phi>],
\end{align*}
where we used~\rf{eqn:linearityScaling} on the second step.
Since $A$ only accesses $O'$ through its only query to $R$ given in~\rf{eqn:NB 1}, we have
\begin{align*}
L\sA[S_A \circ S', O', \ket B|0> \ket C|0> \ket AW |\phi>]
&=
L\s[S',\; O',\; \frac{1}{\sqrt {2^m}} \bigoplus_{b\in\bF_2^m} \phi'_b]\\
&=\frac1{2^m} L\s[S',\; \bigoplus_{b\in \bF_2^m} O'_b,\;  \bigoplus_{b\in\bF_2^m} \phi'_b]
&&\text{using~\rf{eqn:O'decomposition} and~\rf{eqn:linearityScaling}}
\\
&=\frac1{2^m}  \sum_{b\in\bF_2^m} L\s[ S', O'_b, \phi'_b ]
&&\text{by \rf{eqn:parallelTransducerComplexityB}}
\\
&\le \frac1{2\delta}
&&\text{by \rf{thm:mainInfinite} and~\rf{eqn:NB p'}}.\qedhere
\end{align*}
\end{proof}

Time- and query-efficient implementations from \SuperRef Sections\ref{sec:time-efficient} and~\ref{sec:query-efficient} are applicable here as well.
Note that the estimate $1/(2\delta)$ in \rf{cor:NB Transducer} is not tight.
For instance, it is not necessary to execute $R$ for $b=0$, since $r\odot b = 0$ for every $r$.
In addition, the inequalities in~\rf{eqn:NB p'} will be strict for many values of $b$.
We leave investigation of the precise value of the query complexity in the non-Boolean case as future work.
Another interesting problem is error reduction when it is promised that there exists a unique $r$ with somewhat large $p_r$, but whose value is less than $1/2$.

\paragraph{Acknowledgements} AB is supported by the Latvian Quantum Initiative under European Union Recovery and Resilience Facility project no. 2.3.1.1.i.0/1/22/I/CFLA/001.

SJ is supported by NWO Klein project number OCENW.Klein.061;
the project Divide \& Quantum  (with project number 1389.20.241) of the research programme NWA-ORC, which is (partly) financed by the Dutch Research Council (NWO);
and by the European Union (ERC, ASC-Q, 101040624). SJ is a CIFAR Fellow in the Quantum Information Science Program. 

We thank the anonymous referees for comments that have substantially improved the presentation of this manuscript.

\bibliographystyle{habbrvM}
{
\small
\bibliography{belov,refs}
}

\end{document}